\newif\iflong
\newif\iffuture
\lstdefinelanguage{program}{%
  keywords={%
    let,pass,function,%
    var,const,bool,int,void,atomic,%
    while,do,if,then,else,assume,assert,call,return,rule,forall,with,new,choose,skip,%
    task,async,yield,for,wait,%
    type,relation,init, action, safety, invariant, axiom, input,repeat
  },
  morecomment=[l]{//},
  morecomment=[s]{/*}{*/},
  morecomment=[n]{(*}{*)},
  mathescape=true,
  escapeinside=`',
}
\newcommand\currentcoordinate{\the\tikz@lastxsaved,\the\tikz@lastysaved}
\newcounter{algorithmicH}%
\let\oldalgorithmic\algorithmic
\renewcommand{\algorithmic}{%
  \stepcounter{algorithmicH}%
  \oldalgorithmic}%
\renewcommand{\theHALG@line}{ALG@line.\thealgorithmicH.\arabic{ALG@line}}
\newif\ifnitpick
\newif\ifproofs
\newcommand{\refappendix}[1]{\Cref{#1}}
\newcommand{\refappendix}[1]{the extended version~\cite{extendedVersion}}
\newcommand{\toolong}[1]{#1}
\newcommand{\toolong}[1]{}
\Crefname{conjecture}{Conjecture}{Conjectures}
\Crefname{proposition}{Proposition}{Propositions}
\Crefname{lemma}{Lemma}{Lemmas}
\Crefname{corollary}{Corollary}{Corollaries}
\Crefname{example}{Example}{Examples}
\Crefname{definition}{Def.}{Defs.}
\Crefname{algorithm}{Alg.}{Alg.}
\Crefname{theorem}{Thm.}{Thms.}
\Crefname{figure}{Fig.}{Fig.}
\crefname{line}{line}{lines}
\newtheorem{theo}{Theo}[section] %
\newtheorem{remark}[theo]{Remark}
\newcommand{\para}[1]{\vspace{2pt}\noindent\textbf{\textit{#1.}}}
\newcommand{\dom}[1]{dom({#1})}
\newcommand{\ov}{\overline}
\newcommand{\card}[1]{{\left\vert{#1}\right\vert}} %
\renewcommand{\implies}{\Longrightarrow}
\newcommand{\notimplies}{\centernot\implies}
\newcommand{\true}{{\textit{true}}}
\newcommand{\false}{{\textit{false}}}
\newcommand{\vocabulary}{\Sigma}
\newcommand{\voc}{\vocabulary}
\newcommand{\Init}{{\textit{Init}}}
\newcommand{\Bad}{\textit{Bad}}
\newcommand{\States}{{\mbox{States}}[\voc]}
\newcommand{\tr}{\delta}
\newcommand{\Frame}{\mathcal{F}}
\newcommand{\Framepdr}{\mathcal{F}^{\text{pdr}}}
\newcommand{\Frameai}{\mathcal{F}^{\text{ai}}}
\newcommand{\psigma}[1]{\Sigma_{#1}^{P}}
\renewcommand{\vec}{\ov}
\newcommand{\set}[1]{\{{#1}\}}
\newcommand{\eqdef}{\stackrel{\rm def}{=}}
\newcommand{\prop}{x}
\newcommand{\reflextr}[1]{\underline{#1}}
\newcommand{\postimage}[2]{{\reflextr{#1}}({#2})}
\newcommand{\bmcunroll}[2]{{\reflextr{#1}}^{{#2}}}
\newcommand{\bmc}[3]{\bmcunroll{#1}{#3}({#2})}
\newcommand{\bkwrch}[1]{\mathcal{B}_{#1}}
\newcommand{\sleq}{\sqsubseteq}
\newcommand{\dnfsize}[1]{\card{#1}_{\rm dnf}}
\newcommand{\cubemon}[2]{\textit{cube}_{{#2}}({#1})}
\newcommand{\moncube}[2]{\cubemon{#1}{#2}}
\newcommand{\monox}[2]{\mathcal{M}_{#2}({#1})}
\newcommand{\boundarypos}[1]{\partial^{+}({#1})}
\newcommand{\bigO}{O}
\newcommand{\mspan}[1]{{\rm MSpan}({#1})}
\newcommand{\bkwspan}[1]{\mspan{\bkwrch{#1}}}
\newcommand{\mhull}[2]{{\rm MHull}_{#2}({#1})}
\newcommand{\abs}[1]{{#1}^{\sharp}}
\newcommand{\absr}[1]{{#1}^{\musDoubleSharp}}
\newcommand{\malpha}[1]{\alpha_{#1}}
\newcommand{\madom}[1]{\mathbb{M}[{#1}]}
\newcommand{\eepdr}{\Lambda\mbox{\rm-PDR}}
\newcommand{\bkcube}{b}
\newcommand{\reflect}[1]{\textit{Ref}({#1})}
\newcommand{\join}{\mathbin{\sqcup}}
\newcommand{\bigjoin}{\bigsqcup}
\newcommand{\litabs}[1]{\textcolor{blue}{#1}}
\newcommand{\cubdom}[1]{\dom{#1}}
\newcommand{\dotsym}[1]{{\vphantom{#1}\mathpalette\d@tsym{#1}\relax}}
\newcommand{\d@tsym}[1]{%
  \ooalign{\hidewidth$\m@th\cdot$\hidewidth\cr$\m@th #1$\cr}%
}
\newcommand{\dotdelta}{{\vphantom{\delta}\mathpalette\d@td@lta\relax}}
\newcommand{\d@td@lta}[2]{%
  \ooalign{\hidewidth$\m@th#1\cdot$\hidewidth\cr$\m@th#1\delta$\cr}%
}
\newcommand\mathbox[1]{\mathord{\ThisStyle{%
  \fboxsep0\LMpt\relax\kern1\LMpt\fbox{$\SavedStyle#1$}\kern1\LMpt}}}
\newcommand{\cubejoin}[1]{\mathbox{#1}}
\newcommand{\restrict}[2]{{#1}\big|_{#2}}
\begin{document}

\newif\ifcomments
\commentsfalse
\nochangebars
\definecolor{dg}{cmyk}{0.60,0,0.88,0.27}

\newcommand{\sharonnew}[1]{\sharon{#1}}
\newcommand{\yotamnew}[1]{\yotamsmall{#1}}
\newcommand{\yotamforlater}[1]{}

\ifcomments
\newcommand{\artem}[1]{{\footnotesize\color{olive}[{\bf Artem}: #1]}}
\newcommand{\yotamsmall}[1]{{\footnotesize\color{magenta}[{\bf Yotam}: #1]}}

\newcommand{\sharon}[1]{{\textcolor{purple}{SS: {\em #1}}}}
\newcommand{\mooly}[1]{{\textcolor{cyan}{MS: {\em #1}}}}
\newcommand{\yotam}[1]{{\textcolor{magenta}{{\bf #1}}}}
\newcommand{\jrw}[1]{{\textcolor{green}{JRW: {\em #1}}}}
\newcommand{\TODO}[1]{{\textcolor{red}{TODO: {\em #1}}}}

\else
\newcommand{\sharon}[1]{}
\newcommand{\adam}[1]{}
\newcommand{\mooly}[1]{}
\newcommand{\neil}[1]{}
\newcommand{\jrw}[1]{}
\newcommand{\yotam}[1]{}
\newcommand{\TODO}[1]{}
\newcommand{\artem}[1]{}
\newcommand{\yotamsmall}[1]{}

\fi

\newcommand{\commentout}[1]{}
\newcommand{\OMIT}[1]{}  
\title{Property-Directed Reachability as Abstract Interpretation in the Monotone Theory}

\author{Yotam M. Y. Feldman}
\affiliation{
  \institution{Tel Aviv University}
  \country{Israel}
}
\email{yotam.feldman@gmail.com}

\author{Mooly Sagiv}
\affiliation{
  \institution{Tel Aviv University}
  \country{Israel}
}
\email{msagiv@acm.org}

\author{Sharon Shoham}
\affiliation{
  \institution{Tel Aviv University}
  \country{Israel}
}
\email{sharon.shoham@gmail.com}

\author{James R. Wilcox}
\affiliation{
  \institution{Certora}
  \country{USA}
}
\email{james@certora.com}

\begin{abstract}
Inferring inductive invariants is one of the main challenges of formal verification.
The theory of abstract interpretation provides a rich framework to devise invariant inference algorithms.
One of the latest breakthroughs in invariant inference is property-directed reachability (PDR), but
the research community views PDR and abstract interpretation as mostly unrelated techniques.

This paper shows that, surprisingly, propositional PDR can be formulated as an abstract interpretation algorithm in a logical domain.
More precisely, we define a version of PDR, called $\Lambda$-PDR, in which \emph{all} generalizations of  counterexamples are used to strengthen a frame.
In this way, there is no need to refine frames after their creation,
because all the possible supporting facts are included in advance.
We analyze this algorithm using notions from Bshouty's monotone theory, originally developed in the context of exact learning.
We show that there is an inherent overapproximation between the algorithm's frames that is related to the monotone theory.
We then define a new abstract domain in which the best abstract transformer performs this overapproximation, and show that it captures the invariant inference process, i.e., $\Lambda$-PDR corresponds to Kleene iterations with the best transformer in this abstract domain.
We provide some sufficient conditions for when this process converges in a small number of iterations, with sometimes an exponential gap from the number of iterations required for naive exact forward reachability.
These results provide a firm theoretical foundation for the benefits of how PDR tackles forward reachability.
\end{abstract}  

 \begin{CCSXML}
<ccs2012>
<concept>
<concept_id>10003752.10010070</concept_id>
<concept_desc>Theory of computation~Theory and algorithms for application domains</concept_desc>
<concept_significance>500</concept_significance>
</concept>
<concept>
<concept_id>10003752.10010124.10010138.10010142</concept_id>
<concept_desc>Theory of computation~Program verification</concept_desc>
<concept_significance>500</concept_significance>
</concept>
<concept>
<concept_id>10011007.10010940.10010992.10010998</concept_id>
<concept_desc>Software and its engineering~Formal methods</concept_desc>
<concept_significance>500</concept_significance>
</concept>
</ccs2012>
\end{CCSXML}

\ccsdesc[500]{Theory of computation~Theory and algorithms for application domains}
\ccsdesc[500]{Theory of computation~Program verification}
\ccsdesc[500]{Software and its engineering~Formal methods}
\keywords{invariant inference, property-directed reachability, abstract interpretation, monotone theory, reachability diameter}

\maketitle

\section{Introduction}
The formal methods community has studied many approaches for automatic verification that are diverse and even seemingly disparate.
Two of the main approaches for verifying safety by the automatic inference of inductive invariants are abstract interpretation~\cite{DBLP:conf/popl/CousotC77} and model checking~\cite{DBLP:conf/lop/ClarkeE81,DBLP:conf/programm/QueilleS82}.
State-of-the-art model checking algorithms are based on SAT solving~\cite[e.g.][]{DBLP:conf/cav/GrafS97,DBLP:conf/cav/McMillan03,DBLP:conf/fmcad/GurfinkelI17,DBLP:conf/sigsoft/GurfinkelSM16,DBLP:conf/fmcad/SheeranSS00,DBLP:conf/cav/AlbarghouthiLGC12}, with many of the best tools implementing variants of the famous IC3/PDR algorithm~\cite{ic3,pdr}, which combines several heuristic ideas to achieve good performance in practice.
While PDR is practical and widely deployed, very little is known about it theoretically.
In particular, a previous investigation of PDR using the theory of abstract interpretation~\cite{DBLP:conf/vmcai/RinetzkyS16} had to employ abstractions that are both far from the usual practice of abstract interpreters, and are also too rich in that they can accommodate many algorithms that are unrelated to PDR (see~\Cref{sec:related}).
\begin{changebar}
As a result, there is currently no conceptual framework that explains how and when PDR is able to overapproximate and avoid the enumeration of reachable states, which is a key challenge to every invariant inference algorithm.
\end{changebar}

\begin{changebar}
In this paper, we set out to investigate the principles of how PDR achieves overapproximation.
\end{changebar}
To this end, we continue a line of work that applies learning theory to invariant inference~\cite[e.g.][]{ICELearning,DBLP:journals/jar/NeiderMSGP20,DBLP:journals/pacmpl/EzudheenND0M18,DBLP:conf/icse/JhaGST10,DBLP:conf/popl/0001NMR16,DBLP:conf/sas/0001GHAN13,DBLP:conf/cav/SharmaNA12,DBLP:conf/esop/0001GHALN13,DBLP:journals/fmsd/SharmaA16,DBLP:conf/pldi/KoenigPIA20,DBLP:journals/acta/JhaS17,DBLP:journals/pacmpl/FeldmanISS20,DBLP:journals/pacmpl/FeldmanSSW21} with a surprising result: the monotone theory from exact learning~\cite{DBLP:journals/iandc/Bshouty95} enables viewing PDR as classical abstract interpretation (in a new domain). This draws a deep connection between these techniques,
\begin{changebar}
and identifies a form of abstraction performed by PDR that distinguishes it both from explicit enumeration and from other algorithmic approaches. %
\end{changebar}

We focus on the fundamental setting of propositional systems,
\begin{changebar}
which also applies to infinite-state systems using predicate abstraction~\cite{DBLP:conf/popl/FlanaganQ02,DBLP:conf/cav/GrafS97}.
\end{changebar}
PDR constructs a sequence of formulas, called \emph{frames}, by blocking counterexamples (states that can reach bad states). 
Given a counterexample, the algorithm conjoins to the frame a generalization clause that blocks the counterexample and also additional states, but not states reachable in one step from the previous frame (we explain PDR in detail in~\Cref{sec:overview-pdr-alg}).
Theoretically analyzing the behavior of the algorithm is complicated by its highly nondeterministic nature---it depends on the choices of counterexamples and generalization clauses (many of them affected by idiosyncrasies of the underlying SAT solver), and different choices may lead PDR down radically different paths.
To ameliorate this, 
we present an algorithm, called \emph{$\Lambda$-PDR}, which resolves this nondeterminism by using all possible answers to these queries, blocking all counterexamples with all admissible generalizations. The resulting frames are tighter than those of PDR, as they include all lemmas that PDR could learn in any execution. This provides a theoretical handhold to study PDR.
$\Lambda$-PDR uncovers a key aspect of the generalization performed by standard PDR. The frames are usually viewed as a sequence of overapproximations that prove bounded safety with an increasing bound. While correct, this does not capture the full essence of generalization in PDR. In particular, naive exact forward reachability also computes such a sequence, albeit a trivial one.
We show that in $\Lambda$-PDR---and hence, in PDR---there is an inherent \emph{abstraction} that includes additional states in each frame beyond exact forward reachability. Applied successively, we show that this is a form of abstract interpretation,
which can lead to an exponential gap between the number of frames in $\Lambda$-PDR and the number of steps required for exact forward reachability.
We prove several results:
\begin{enumerate}

	\item We show that the relation between successive frames in $\Lambda$-PDR is characterized by an operation from Bshouty's monotone theory~\cite{DBLP:journals/iandc/Bshouty95}. The idea is that taking all the generalizations that block a state $b$ amounts to computing the least $b$-monotone overapproximation of the post-image of the previous frame (\Cref{sec:monotone}).

	\item We introduce a new abstract domain, of the formulas for which backward reachable states form a monotone basis. %
	We show that $\Lambda$-PDR can be viewed as computing Kleene iterations with the best abstract transformer in this domain.
	Standard PDR also operates in the same domain, and its frames overapproximate the Kleene iterations that $\Lambda$-PDR performs.
	This is the first time that the theory of state abstraction is able to explain property-directed generalization (\Cref{sec:ai}).
	\item We show exponential gaps between the number of frames in $\Lambda$-PDR and the number of iterations of exact forward reachability, as well as the unrolling depth in a dual interpolation-based algorithm (\Cref{sec:itp-friends}).

	\item We prove an upper bound on the number of frames in $\Lambda$-PDR in terms of the DNF size of certain ``monotonizations'' of the transition relation. Although not always tight, this result sheds light on the benefit of the abstraction in certain cases. The proof brings together results from the monotone theory, abstract interpretation, and diameter bounds for transitions systems. This is done by  constructing a (hyper)transition system where the states reachable in $i$ steps correspond to the $i$th Kleene iteration, and bounding the system's diameter (\Cref{sec:abstract-diameter-all}--\Cref{sec:hyper-all}).

	\item We show that in some cases the abstraction of $\Lambda$-PDR is overly precise, whereas the looser frames of standard PDR converge in fewer and smaller frames (\Cref{sec:vs-pdr}).
\end{enumerate}

\section{Overview}
\label{sec:overview}

\subsection{PDR, the Frames}
\label{sec:overview-frame-props}
\newcounter{overview-frame-props}
How does property-directed reachability find inductive invariants?
Given a set of initial states $\Init$, a transition relation $\tr$ describing one step of the system, and a set of bad states $\Bad$, the goal is to find an \emph{inductive invariant}: a formula $I$ such that $\Init \implies I$, $I \implies \neg\Bad$, and $\tr(I) \implies I$, where the post-image $\tr(X)$ is the set of states that $\tr$ reaches in one step from $X$.\footnote{
	We use a formula and the set of states that satisfy it interchangeably. Unless otherwise stated, the formula to represent a given set of states is chosen arbitrarily.
}
Such an $I$ proves \emph{safety}, that no execution of the system can reach a bad state.

The central data structure that PDR uses to find inductive invariants is the
\emph{frames}.
These are a sequence of formulas $\Frame_0,\Frame_1,\ldots,\Frame_N$ that satisfies the following properties, for all $0 \leq i \leq N-1$:

		\iflong\begin{enumerate}\else\begin{inparaenum}\fi
\item \label{it:frames-start} \label{it:frames-init} $\Frame_0 = \Init$;
\item \label{it:frames-monotone} $\Frame_i \implies \Frame_{i+1}$;
\item \label{it:frames-onestep-overapprox} $\tr(\Frame_i) \implies \Frame_{i+1}$;
\item \label{it:frames-end} \label{it:frames-safety} $\Frame_i \implies \neg \Bad$.
			\setcounter{overview-frame-props}{\value{enumi}}
		\iflong\end{enumerate}\else\end{inparaenum}\fi

\noindent
In words, the frames start with the set of initial states, grow monotonically, always include the states reachable in one step of $\tr$ from the previous frame, and are strong enough to prove safety (except possibly the last frame which is ``under construction'').

These properties ensure that each $\Frame_i$ overapproximates the set of states reachable in at most $i$ steps, and yet excludes the bad states; this constitutes a proof of \emph{bounded safety}. However, the ultimate goal of PDR is \emph{un}bounded safety, and it is not clear why frames would avoid ``overfitting'' to bounded executions, and rather converge to a true inductive invariant. In informal discussions, this is sometimes phrased as the criticism that the algorithm merely ``happens to find'' a bounded safety proof that generalizes to the unbounded case.
Indeed, properties~\ref{it:frames-start}--\ref{it:frames-end} of frames do not reflect any bias away from bounded proofs, as they are also satisfied by the exact forward reachability algorithm, $\Frame_0 = \Init, \Frame_{i+1}=\postimage{\tr}{\Frame_i}$, where $\postimage{\tr}{X} = X \cup \tr(X)$ denotes the reflexive closure of the post-image. Exact forward reachability might require many frames to converge to an unbounded proof if some states are reachable only by very long paths.

\begin{figure*}[t]
  \centering
\begin{minipage}{\textwidth}
  \begin{minipage}{0.43\textwidth}
  \begin{lstlisting}[numbersep=5pt, escapeinside={(*}{*)}, xleftmargin=3.0ex]
init $\vec{x} = (x_n,x_{n-1},\ldots,x_0) = 0\ldots0$, 
     $\vec{y} = (y_n,y_{n-1},\ldots,y_0) = 0\ldots0$, 
     $z=0$
repeat: increase_x() | increase_y()
assert $\neg\left(\vec{x} = 10\ldots0 \land \vec{y} = 11\ldots1 \land z=1\right)$
\end{lstlisting}
\end{minipage}
  \begin{minipage}{0.26\textwidth}
  \begin{lstlisting}[numbersep=5pt, escapeinside={(*}{*)}, xleftmargin=3.0ex]
increase_x():
   if $z = 0$:
      $\vec{x} = \vec{x}+1 \pmod{2^{n+1}}$
      if $\vec{x}=10\ldots0$:
         $\vec{x} = \vec{x}+1 \pmod{2^{n+1}}$
\end{lstlisting}
\end{minipage}
\begin{minipage}{0.25\textwidth}
  \begin{lstlisting}[numbersep=5pt, escapeinside={(*}{*)}, xleftmargin=3.0ex]
  increase_y():
   if $z = 1$:
      $\vec{y} = \vec{y}+1 \pmod{2^{n+1}}$
  $$
  $$
\end{lstlisting}
\end{minipage}
\vspace{-0.4cm}
\captionof{figure}{
\iflong
\footnotesize 
\fi
Skip-counter: running example of propositional transition system over the variables $\vec{x}=x_n,\ldots,x_0$. Either \texttt{increase\_x()} or \texttt{increase\_y()} is executed in each step according to whether $z=0$ or $z=1$, incrementing $\vec{x}$ or $\vec{y}$ resp., but skipping over the value $\vec{x}=10\ldots0$.}
  \label{fig:skip-counter}
  \end{minipage}
\iflong
\else
\vspace{-0.5cm}
\fi
\end{figure*}  Consider, for example, the simple family of propositional systems in~\Cref{fig:skip-counter}, parametrized by $n$. A bit $z$ chooses between incrementing a counter $\vec{x}$ or a counter $\vec{y}$, represented in binary by $x_n,x_{n-1},\ldots,x_0$ and $y_n,y_{n-1},\ldots,y_0$ respectively. The safety property to prove is that it is impossible for $\vec{x}$ to have the value $10\ldots 0$ while $\vec{y}$ is $11\ldots 1$. This property is not inductive as is (for instance, the state $\vec{x}=10\ldots 0, \vec{y}=11\ldots 10, z=1$ satisfies the safety property but reaches a bad state in one step); %
one inductive invariant for this system is
\begin{equation}
\label{eq:skip-counter-invariant}
	\vec{x} \neq 10\ldots0 \land \vec{y}=00\ldots0 \land z=0,
\end{equation}
which implies safety and is closed under a step of the system.

In these systems, exact forward reachability requires $\Omega\left(2^n\right)$ iterations before it converges to an inductive invariant. This is because some states, such as $\vec{x}=10\ldots 01, \vec{y}=00 \ldots 00,z=0$, require an exponential number of steps to reach---the system has an exponential \emph{reachability diameter}---so exact forward reachability discovers all reachable states and converges only after that many iterations.

Clearly, invariant inference algorithms must perform some sort of \emph{overapproximation}, or \emph{abstraction}, to overcome this slow convergence. %
This raises two important questions:
\begin{enumerate}
	\item \label{it:q-abstraction} What characterizes the abstraction that PDR performs? %
	\item \label{it:q-convergence} How does this abstraction achieve faster convergence than exact forward reachability? %
\end{enumerate}
The commonly-stated properties of frames do not provide an answer; to address these questions we must dive more deeply into how PDR works.

\subsection{PDR, the Algorithm}
\label{sec:overview-pdr-alg}
\vspace{-0.45cm}
\begin{algorithm}[H]
\caption{PDR}
\label{alg:pdr}
\vspace{-0.5cm}
\begin{multicols}{2}
\begin{algorithmic}[1]
\begin{footnotesize}
\Procedure{PDR}{$\Init$, $\tr$, $\Bad$}
  \State $\Framepdr_0 \gets \Init$
	\State $N \gets 0$
  \While{$\forall 1 \leq i \leq N. \ \Framepdr_{i} \not \implies \Framepdr_{i-1}$} $\label{ln:pdr-find-inductive-frame}$
		\State $\Framepdr_{N+1} \gets \true$ $\label{ln:pdr-init-frame}$
		\While{$\Framepdr_{N+1} \notimplies \neg\Bad$}
			\For{$\sigma_b \in \Framepdr_{N+1} \land \Bad$} $\label{ln:pdr-sample-bad}$
        		\State \Call{block}{$\sigma_b$, $N+1$} $\label{ln:pdr-block-bad}$
			\EndFor
		\EndWhile
		\State $N \gets N+1$
	\EndWhile
  \State \Return $\Framepdr_i$ such that $\Framepdr_{i} \implies \Framepdr_{i-1}$
\EndProcedure

\Procedure{block}{$\sigma_b$, $i$}
	\If{$i=0$}
		\State \textbf{unsafe}
	\EndIf
	\While{$\postimage{\tr}{\Framepdr_{i-1}} \notimplies \neg \sigma_b$} $\label{ln:pdr-back-refine}$
		\State take $\sigma$ s.t.\ $\sigma \models \Framepdr_i, (\sigma,\sigma_b) \models \reflextr{\tr}$ $\label{ln:pdr-back-sample-prestate}$
		\State \Call{block}{$\sigma$, $i-1$}
	\EndWhile
	\State {%
		take $c$ minimal s.t.\ $c \subseteq \neg \sigma_b$ and $\tr({\Framepdr_{i-1}}) \implies c$ \\
		\qquad \qquad \qquad \qquad \qquad \qquad \quad and $\Init \implies c$\strut} $\label{ln:pdr-minimal-clause}$
	\For{$1 \leq j \leq i$}
		\State $\Framepdr_j \gets \Framepdr_j \land c$ $\label{ln:pdr-strengthen}$
	\EndFor
\EndProcedure
\end{footnotesize}
\end{algorithmic}
\end{multicols}
\vspace{-0.35cm}
\end{algorithm}
 \vspace{-0.4cm}
\Cref{alg:pdr} presents a simple version of the basic PDR algorithm. The sequence of frames it manipulates are denoted $\Framepdr_0,\ldots,\Framepdr_N$, to distinguish between PDR's frames and frames of other algorithms in the paper. Initially, $\Framepdr_0$ is initialized to the set of initial states (thereby satisfying property~\ref{it:frames-init}).
The outer loop terminates once one of the frames is inductive (\cref{ln:pdr-find-inductive-frame}), which is when $\Framepdr_{i} \implies \Framepdr_{i-1}$ (because then, from the other properties of frames, $\tr(\Framepdr_{i-1}) \implies \Framepdr_i \implies \Framepdr_{i-1}$).
Otherwise, it initializes a new frontier frame to $\true$ (\cref{ln:pdr-init-frame}), and samples bad states (\cref{ln:pdr-sample-bad}) to block (exclude from the frame) until the frontier frame is strong enough to exclude all bad states (satisfying property~\ref{it:frames-safety}).

In order to satisfy property~\ref{it:frames-onestep-overapprox}, before a state $\sigma_b$ is blocked, the previous frame must be refined it excludes all the pre-states of $\sigma_b$ (\cref{ln:pdr-back-refine}); this is performed by sampling pre-states and blocking them recursively (\cref{ln:pdr-back-sample-prestate}).
Once all the pre-states are blocked in the previous frame, $\sigma_b$ can be excluded from the current frame.
However, at this point, %
PDR \emph{generalizes} and blocks a \emph{set} of states; this is done by finding \emph{clause} $c$---also called a \emph{lemma}---that excludes $\sigma_b$ and still does not exclude any state that is reachable in one step or less from the previous frame (preserving property~\ref{it:frames-onestep-overapprox}).
This is done (in~\cref{ln:pdr-minimal-clause}) by starting from all literals (variables or their negations) that are falsified in $\sigma_b$, and choosing a subset whose disjunction (which is a clause) satisfies the desired properties. PDR chooses a \emph{minimal subset} in order to exclude as many states as possible. (In practice, this involves a linear number of SAT calls.)\footnote{
	Practical implementations also attempt to push existing lemmas to other frames whenever possible; we omit this for simplicity. (We discuss inductive generalization below, in~\Cref{sec:overview-inductive-generalization}.)
}
The clause is conjoined (in~\cref{ln:pdr-strengthen}) to the frame as well as the preceding ones (thereby satisfying property~\ref{it:frames-monotone}, relying on $\Init \implies c$).

The above is an operational description of how the frames are generated to be overapproximations, but does not lay bare the principles of why they are computed in this way, and how to characterize the abstraction that frames perform.

To study this, we introduce $\Lambda$-PDR\footnote{In homage to Bshouty's $\Lambda$-algorithm~\cite{DBLP:journals/iandc/Bshouty95}, not the \textsc{SARS-CoV-2} variant.}, an alteration of PDR that is simpler for analysis.
This algorithm is a theoretical device to study the abstraction in PDR's frames: each frame of $\Lambda$-PDR is tighter than the corresponding frame of PDR, and thus \textbf{\textit{the overapproximation that $\Lambda$-PDR's frames perform is also performed in usual PDR}}.
We characterize the abstraction that $\Lambda$-PDR performs, and show how it can converge more rapidly than exact forward reachability, which sheds light on the abstraction in PDR.
\subsection{$\Lambda$-PDR}
\label{sec:overview-eepdr}

\subsubsection{The Algorithm}
\begin{wrapfigure}{R}{0.35\textwidth}
\vspace{-0.75cm}
\begin{minipage}{0.35\textwidth}
\begin{algorithm}[H]
\caption{$\Lambda$-PDR}
\label{alg:eepdr}
\begin{algorithmic}[1]
\begin{footnotesize}
\Procedure{$\Lambda$-PDR}{$\Init$, $\tr$, $\Bad$, $k$}
	\State $\Frame_{-1} \gets \false$
	\State $\Frame_0 \gets \Init$ $\label{ln:eepdr-frame0}$
	\State $\bkwrch{k} = \set{\sigma \ | \ \bmc{\tr}{\sigma}{k} \cap \Bad \neq \emptyset}$ $\label{ln:eepdr-bk}$
	\If{$\Init \cap \bkwrch{k} \neq \emptyset$} 
		\textbf{unsafe} $\label{ln:eepdr-unsafe}$
	\EndIf
	\State $i \gets 0$
        \While{$\Frame_{i} \not \implies \Frame_{i-1}$} $\label{ln:eepdr-while-not-inductive}$
		\If{$\tr(\Frame_{i}) \cap \bkwrch{k} \neq \emptyset$} $\label{ln:eepdr-restart}$
			\State \textbf{restart} with larger $k$ \label{ln:eepdr-increase-k}
		\EndIf
		\State $\Frame_{i+1} \gets \true$
		\For{$\sigma_b \in \bkwrch{k}$} $\label{ln:eepdr-bmc}$
			\For{$c \subseteq \neg \sigma_b$} $\label{ln:eepdr-for-clause}$
				\If{$\postimage{\tr}{\Frame_i} \implies c$} $\label{ln:eepdr-lemma-check}$
					\State $\Frame_{i+1} \gets \Frame_{i+1} \land c$
				\EndIf
			\EndFor
		\EndFor
		\State $i \gets i+1$
	\EndWhile
	\State \Return $\Frame_i$
\EndProcedure
\end{footnotesize}
\end{algorithmic}
\end{algorithm}
\end{minipage}
\vspace{-0.4cm}
\end{wrapfigure}
\Cref{alg:eepdr} presents $\Lambda$-PDR.
Briefly, it constructs frames one after the other, by including all possible lemmas that any execution of PDR might learn; $\Frame_{i+1}$ is the conjunction of \emph{all} \emph{clauses} that \emph{block} a state from $\bkwrch{k}$---the set of states that \emph{can reach a bad state} in at most $k$ steps---yet \emph{retain} the states \emph{reachable in one step} from $\Frame_i$.
The algorithm's essentials are similar to PDR's, with important changes.

First, it is useful for our purpose to decouple two roles that frames serve in PDR. One is as a sequence of approximations to the invariant until the frame where an invariant is found (which is usually somewhere in the middle of the sequence). The other is a way to find counterexamples---which are states that can reach a bad state---without unrolling the transition relation~\cite{ic3}.
\begin{changebar}
In $\Lambda$-PDR we instead imagine that we are provided, through some arbitrary means (such as unrolling), with $\bkwrch{k}$, the set of states that can reach a state in $\Bad$ along some execution of length at most $k$ steps (\cref{ln:eepdr-bk}).
This allows us to focus on the other role of frames as approximations that converge to the invariant.
The number $k$ is chosen in advance, independently of the number of frames $N$.\footnote{
	At first sight $\Framepdr_i$ consists of clauses that exclude counterexamples from a lower backward reachability bound, $\bkwrch{N-i}$; but in fact, pushing lemmas forward means that even $\Framepdr_N$ can include  clauses learned at $\Framepdr_1$ from counterexamples in $\bkwrch{N}$.
}
\end{changebar}

Second, frames are computed without backtracking to refine previous frames; lemmas to support future frames are learned eagerly, in advance.
In particular, convergence is always at the last frame.

Third, whereas PDR ``samples'' $k$-backward reachable states and blocks each counterexample in $\Frame_{i+1}$ using a single, arbitrary (minimal) clause that does not exclude states in $\postimage{\tr}{\Frame_i}$,  $\Lambda$-PDR generates \emph{all} such clauses---for \emph{any} counterexample state from $\bkwrch{k}$ (\cref{ln:eepdr-bmc}) and \emph{any} order of dropping literals (\cref{ln:eepdr-for-clause}). This ``determinization'' makes the algorithm easier to analyze.

\begin{changebar}
Overall, the algorithm computes each frame $\Frame_{i+1}$ iteratively, from the previous $\Frame_i$, without ever refining previous frames. Each frame is the conjunction of all the clauses that can be obtained as lemmas from blocking any counterexample from $\bkwrch{k}$ while still overapproximating $\postimage{\tr}{\Frame_i}$. This process continues until an inductive invariant is found (\cref{ln:eepdr-while-not-inductive})---unless the current frame does include a counterexample from $\bkwrch{k}$, which prompts an increase of $k$ (\cref{ln:eepdr-restart}) in order to distinguish between spurious overapproximation and truly unsafe systems (detected in~\cref{ln:eepdr-unsafe} by an initial state that can reach a bad state in $k$ steps).
\end{changebar}

\begin{changebar}
As an example, this is how $\Lambda$-PDR proceeds on the example of~\Cref{fig:skip-counter} with (say) $k=1$:
The $k$-backward reachable states $\bkwrch{k}$ are those where $\vec{x} = 10\ldots00 \land y_n y_{n-1} \ldots y_1 = 11\ldots1 \land z=1$ (every value of $y_0$ yields a backward reachable state).
The frame sequence is initialized with $\Frame_0 = \Init$. As $\postimage{\tr}{\Frame_0}$ does not intersect $\bkwrch{k}$, the algorithm proceeds to computing $\Frame_1$. It starts as $\true$, and the algorithm iterates through the states in $\bkwrch{k}$ to generate clauses. Suppose that the first counterexample $\sigma_b$ is $\vec{x} = 10\ldots00 \land \vec{y} = 11\ldots10 \land z=1$. We write
$\neg \sigma_b = (x_n \neq 1) \lor (x_{n-1} \neq 0) \lor \ldots \lor (x_{1} \neq 0) \lor (x_{0} \neq 0) \lor (y_n \neq 1) \lor (y_{n-1} \neq 1) \lor \ldots \lor (y_1 \neq 1) \lor (y_0 \neq 0) \lor (z \neq 1)$, and consider every possible sub-clause $c$ thereof, checking whether $\postimage{\tr}{\Frame_0} \implies c$, namely, $c$ includes both $\vec{x}=00\ldots00,\vec{y}=00\ldots00,z=0$ and $\vec{x}=00\ldots01,\vec{y}=00\ldots00,z=0$. In this case, there are several incomparable (and minimal) such $c$'s: $x_n \neq 1$, $y_i \neq 1$ for every $i>1$, and $z \neq 1$. \TODO{did I miss something?}
In $\Lambda$-PDR, \emph{all} these potential clauses are conjoined to $\Frame_1$.
The algorithm performs the same procedure for all the counterexamples in $\bkwrch{k}$.
Once this is done, $\Frame_1$ never changes again in the course of the algorithm, and it becomes the basis for constructing $\Frame_2$ in the same way, and so on until an inductive invariant is found or a restart becomes necessary. (We later show the resulting $\Frame_1,\Frame_2,\ldots$ in this example.)
\end{changebar}

\begin{changebar}
\subsubsection{PDR Overapproximates at Least as Much as $\Lambda$-PDR}
The importance of $\Lambda$-PDR for our investigation of abstraction in PDR stems from the fact that $\Frame_i \implies \Framepdr_i$, when $k=N$ is the final number of frames in PDR (\Cref{cor:lambda-pdr-underapproximates-pdr}).
This implies that whatever overapproximation $\Lambda$-PDR performs also transfers to PDR: the overapproximation in $\Lambda$-PDR is a lower bound for the overapproximation in PDR.
The relationship $\Frame_i \implies \Framepdr_i$ holds because every clause $c$ that PDR can use to strengthen $\Framepdr_i$ is used to strengthen $\Frame_i$ of $\Lambda$-PDR (roughly, in PDR, for $c$ to added to $\Framepdr_i$, it must block a counterexample from $\bkwrch{k}$ and overapproximate the post-image of the previous frame; the same properties would hold for $c$ in $\Lambda$-PDR, thus ensuring that $c$ is conjoined to $\Frame_i$---see~\Cref{cor:lambda-pdr-underapproximates-pdr}).

Our goal then is to show that $\Lambda$-PDR performs significant overapproximation over exact forward reachability, and thereby establish the same for PDR.
Our first step is to characterize the overapproximation that $\Lambda$-PDR performs, and for this we need tools developed in exact concept learning.
\end{changebar}

\subsection{Abstraction from The Monotone Theory}\label{sec:overview-monotone}
The main technical enabler of our paper is the observation (\Cref{lem:justify-overview-next-frame}) that in $\Lambda$-PDR, there is a well-defined relation between successive frames, through what we call the \emph{monotone hull}:
\vspace{-0.15cm}
\begin{tcolorbox}[boxsep=-4pt]
\begin{equation}
\label{eq:overview-next-frame}
	\Frame_{i+1} = \mhull{\postimage{\tr}{\Frame_i}}{\bkwrch{k}} \eqdef \bigwedge_{b \in \bkwrch{k}}{\monox{\postimage{\tr}{\Frame_i}}{b}},
\end{equation}
\end{tcolorbox}
\vspace{-0.15cm}
\noindent
where $\monox{\varphi}{b}$ is the central operator in the monotone theory~\cite{DBLP:journals/iandc/Bshouty95}, the \emph{least $b$-monotone overapproximation of $\varphi$} (``$b$-monotonization'' in short). A Boolean function $f$ is $b$-monotone, when $b$ is a state, if whenever $f(\sigma_1)=1$ and $\sigma_1 \leq_b \sigma_2$, which means that $\sigma_2$ is obtained from $\sigma_1$ by flipping bits on which $\sigma_1,b$ agree, then also $f(\sigma_2)=1$.
$\monox{\varphi}{b}$ is the \emph{least} $b$-monotone formula (function) implied by $\varphi$.
(We elaborate on the technical details in~\Cref{sec:monotone-background}.)
The insight of~\Cref{eq:overview-next-frame} is that, as we show, every lemma in PDR is implied by the monotone hull, and the conjunction of all possible lemmas is exactly the monotone hull.
Technically, the observation builds on an equivalent formulation of $\monox{\varphi}{b}$ in a conjunctive form, which is not explicit in Bshouty's paper (\Cref{lem:monox-conjunction-clauses}).

Our central observation is that the monotone hull operator introduces \emph{overapproximation} to the sequence of frames---$\mhull{\postimage{\tr}{\Frame_i}}{\bkwrch{k}}$ can include many more states than $\postimage{\tr}{\Frame_i}$.
This is an interesting deviation from Bshouty's use of monotonizations in \emph{exact} learning of an unknown $\varphi$, %
where a set $B$ is chosen such that $\varphi \equiv \mhull{\varphi}{B}$; in that case $B$ is said to be a monotone basis for $\varphi$, denoted $\varphi \in \mspan{B}$ (\Cref{def:monotone-span}).
In contrast, here the monotone hull is applied to intermediate frames, and we are interested in the cases that the set $\bkwrch{k}$ is \emph{not} a monotone basis for $\postimage{\tr}{\Frame_i}$, for then \Cref{eq:overview-next-frame} indicates a strict \emph{overapproximation of exact forward reachability} that $\Lambda$-PDR performs.

\label{sec:overview-example-frame}
Consider again the running example (\Cref{fig:skip-counter}).
One step of exact forward reachability discovers the state $\vec{x}=00\ldots01, \vec{y}=00\ldots00, z=0$ in addition to the initial state.
In contrast, by~\Cref{eq:overview-next-frame}, the first frame of $\Lambda$-PDR with $k=1$ is $\Frame_1 = \mhull{\postimage{\tr}{\Init}}{\bkwrch{k}}$, resulting in
\begin{equation}
\label{eq:running-frame-1}
	\Frame_1 \, = \, x_n=0 \land \vec{y}=00\ldots00 \land z=0;
\end{equation}
in a single iteration the algorithm has leaped over an exponential number of steps of $\tr$!

To compute $\mhull{\postimage{\tr}{\Init}}{\bkwrch{k}}$, in order to arrive at~\Cref{eq:running-frame-1}, we  compute the monotone overapproximations. %
In our example, $\bkwrch{k}$ is a single cube:
\begin{align*}
	\bkwrch{k} &= (x_n=1 \land x_{n-1}=0 \land \ldots \land x_1=0 \land x_0=0 \land y_n=1 \land \ldots \land y_1=1 \land z=1),
\\
\intertext{in which case $\mhull{\postimage{\tr}{\Frame_0}}{\bkwrch{k}}$ can be calculated by writing $\postimage{\tr}{\Frame_0}$ in DNF (see~\Cref{lem:mhull-dnf-base,lem:bshouty-mon-mindnf}):}
	\postimage{\tr}{\Frame_0} &=
		(x_n=0 \land \litabs{x_{n-1}=0} \land \ldots \land \litabs{x_1=0} \land \litabs{x_0=0} \land y_n=0 \land \ldots \land y_1=0 \land y_0=0 \land z=0)
		\\
		&\lor
		(x_n=0 \land \litabs{x_{n-1}=0} \land \ldots \land \litabs{x_1=0} \land x_0=1 \land y_n=0 \land \ldots \land y_1=0 \land y_0=0 \land z=0),
\end{align*}
and in each term omitting every literal that agrees with the cube of $\bkwrch{k}$ (appearing in color).
(When $\bkwrch{k}$ is not a single cube, $\Frame_{i+1}$ is computed as a conjunction of the above for each cube in $\bkwrch{k}$.)

\begin{figure}[t]
  \centering
  \begin{subfigure}[t]{0.35\textwidth}
    \includegraphics[width=\textwidth]{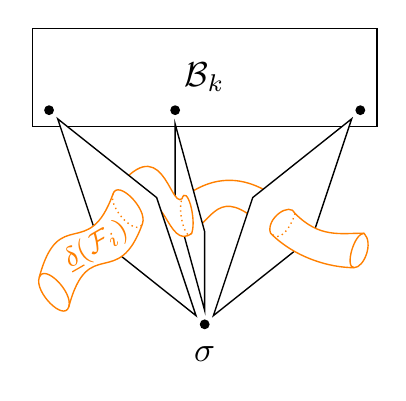}
    \caption{}
    \iflong\else\vspace{-0.2cm}\fi
    \label{fig:protected-single}
  \end{subfigure}
  \hspace{0.15\textwidth}
  \begin{subfigure}[t]{0.3\textwidth}
    \includegraphics[width=\textwidth]{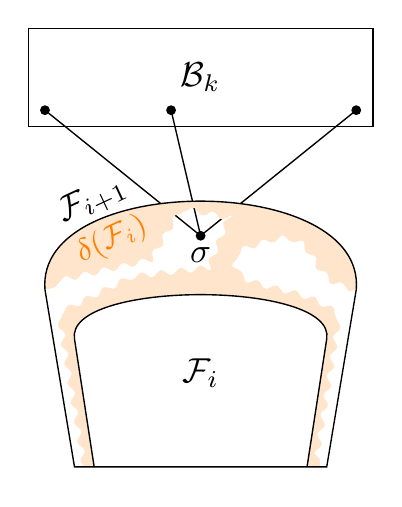}
    \caption{}
    \iflong\else\vspace{-0.2cm}\fi
  \end{subfigure}
  \caption{(a) $\sigma$ is ``protected'' by $\postimage{\tr}{\Frame_i}$ from exclusion due to blocking $\bkwrch{k}$,
    thus (b) $\sigma$ is in $\Frame_{i+1} = \mhull{\postimage{\tr}{\Frame_i}}{\bkwrch{k}}$.
    \TODO{Say that we are ``punning'' Euclidean geometry and Hamming geometry.}}
  \label{fig:protected-hull}
\iflong\else\vspace{-0.45cm}\fi
\end{figure}

What is especially significant about this overapproximation in $\Lambda$-PDR is that it exists \emph{in each step}, as we describe in the next subsection (\Cref{sec:overview-successive}). But before that, we explain the intuition for where this overapproximation stems from.

The cause of overapproximation in $\Lambda$-PDR is the special constraints on the lemmas the algorithm can generate.
Recall that the states that remain in $\Frame_{i+1}$ are those that \emph{cannot be excluded} by \emph{any} lemma starting from \emph{any} counterexample, due to the need to satisfy property~\ref{it:frames-onestep-overapprox}.
Since lemmas are \emph{not} arbitrary formulas, perhaps surprisingly, such states exist beyond the exact post-image.
We demonstrate what these states are using the running example.
Consider a state $\sigma$ that satisfies~\Cref{eq:running-frame-1}.
Why does no lemma $c$ learned by the algorithm exclude $\sigma$, i.e., $\sigma \not\models c$? The reason is that every $c$ excludes some counterexample $b \in \bkwrch{k}$, and, furthermore, $c$ is a clause---a negation of a cube. A cube is a very rigid geometric shape; if $\neg c$ contains both $\sigma$ and $b$ then it necessarily contains many other states---it must include all states that are within the smallest cube that contains both $\sigma,b$, a.k.a.\ the Hamming interval between $\sigma,b$~\cite[e.g.][]{wiedemann1987hamming}.
For example, the Hamming interval between $\sigma = (\vec{x}=011\ldots101,\vec{y}=00\ldots00,z=0)$ and $b = (\vec{x}=10\ldots00,\vec{y}=11\ldots10,z=1)$ is $x_1=0 \land y_0=0$---the conjunction of the literals (or constraints) that hold in both $\sigma,b$.
However, $\neg c$ must not contain any state in $\postimage{\tr}{\Frame_i}$, so the Hamming interval between $\sigma,b$ cannot intersect $\postimage{\tr}{\Frame_i}$.
In our example, the Hamming interval between $\sigma$ and $b$ includes the state $\widetilde{\sigma}=(\vec{x}=00\ldots00, \vec{y}=00\ldots00, z=0)$, and $\widetilde{\sigma} \in \postimage{\tr}{\Frame_0}$, so a lemma $c$ that excludes $\sigma$ and originates from blocking $b$ cannot be conjoined to $\Frame_1$.

Put differently, $\widetilde{\sigma} \in \postimage{\tr}{\Frame_0}$ ``protects'' $\sigma \in \Frame_{1}$ from being excluded due to $b$.
In general, a state $\sigma$ is included in $\Frame_{i+1}$ if a protector state $\widetilde{\sigma} \in \Frame_i$ exists \emph{for every} $b \in \bkwrch{k}$, namely, the Hamming interval between $\sigma,b$ crosses $\postimage{\tr}{\Frame_i}$ for all $b$'s (\Cref{fig:protected-hull}). In our example, the same $\widetilde{\sigma}$ actually protects every $\sigma \in \Frame_1$ from exclusion due to any $b \in \bkwrch{k}$, but multiple protector states may be necessary in general.

The idea of protector states explains why $\Frame_{i+1}$ is $\mhull{\postimage{\tr}{\Frame_i}}{\bkwrch{k}}$ (\Cref{eq:overview-next-frame}).
Every state $\widetilde{\sigma} \in \postimage{\tr}{\Frame_0}$ is a protector state. The states that $\widetilde{\sigma}$ protects from $b$ are the states $\sigma$ such that $\widetilde{\sigma}$ is in the Hamming interval between $\sigma$ and $b$; these states are ``farther away'' from $b$ than $\widetilde{\sigma}$, in the sense of $\widetilde{\sigma} \leq_b \sigma$ as defined above (and formally in~\Cref{def:b-monotone-order}).
The set protected from $b$ by $\postimage{\tr}{\Frame_i}$ is therefore $\monox{\postimage{\tr}{\Frame_i}}{b}$, and the states that are protected from all $b \in \bkwrch{k}$ are the conjunction over all $b$'s, namely $\mhull{\postimage{\tr}{\Frame_{i}}}{\bkwrch{k}}$.

\subsection{Successive Overapproximation: Abstract Interpretation}
\label{sec:overview-successive}
The overapproximation of~\Cref{eq:overview-next-frame} is present between each two consecutive frames; it is thus performed repeatedly, using the previous overapproximation as the starting point of the next. In~\Cref{sec:ai} we show that $\Lambda$-PDR can be cast as \textbf{\textit{abstract interpretation in a new logical domain}}, of the formulas in $\bkwspan{k}$, the formulas $\varphi$ s.t.\ $\mhull{\varphi}{\bkwrch{k}} \equiv \varphi$, which are the formulas expressible by a conjunction of clauses that each excludes a state from $\bkwrch{k}$ (\Cref{def:monotone-span}). \textbf{\textit{The frames of $\Lambda$-PDR are completely characterized as Kleene iterations with the best abstract transformer}} in this domain (\Cref{lem:ai-eepdr-sandwich}), when correcting for the slightly different initial frame ($\Frame_0=\Init$ vs.\ $\mhull{\Init}{\bkwrch{k}}$; we show that the resulting difference in the number of frames is at most one).

\newcounter{reorder-fig-idx}
\begin{figure}[t]
\begin{minipage}{.37\textwidth}
\begin{minipage}{\textwidth}
  \centering
  \captionsetup{width=.9\textwidth}
  \includegraphics[width=\textwidth]{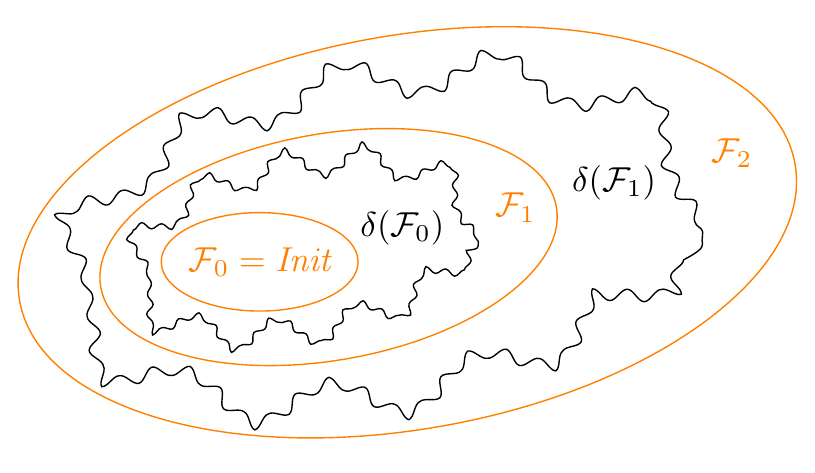}
  \vspace{-1cm}
  \caption{\iflong\footnotesize\fi
  Repeatedly interleaving the post-image and monotone hull operators results in
    successive overapproximation.}
  \label{fig:successive-approx}
\vspace{0.3cm}
\end{minipage}
\begin{minipage}{\textwidth}
  \setcounter{reorder-fig-idx}{\value{figure}}
  \setcounter{figure}{\value{reorder-fig-idx}+1}
  \centering
\begin{tikzpicture}
  \def\w{0.5in}
  \def\h{1in}
  \def\b{3pt}
  \def\mytiny#1{\scalebox{.5}{#1}}
  \node[draw,shape=rectangle,minimum width=\w,minimum height=\h] (Bk) {$\bkwrch{k}$};
  \draw[gray,->,>={Classical TikZ Rightarrow[]},double,double distance=1.5pt] ($(Bk.north west)-(\b,\b)$) -- +(-\w, 0);
  \draw[gray,->,>={Classical TikZ Rightarrow[]},double,double distance=1.5pt] ($(Bk.west)-(\b,0)$) -- node[align=center,above=8pt] {\mytiny{direction of}\\[-6pt]\mytiny{monotonization}} +(-\w, 0);
  \draw[gray,->,>={Classical TikZ Rightarrow[]},double,double distance=1.5pt] ($(Bk.south west)+(-\b,\b)$) -- +(-\w, 0);

  \node[circle,fill=black,inner sep=0.75pt,outer sep=3pt, label={[label distance=-7pt]below right:{\tiny{$\sigma_2$}}}] (t1s) at ($(Bk.south west)!0.25!(Bk.north west)+1.5*(-\w,0)$) {};
  \node[circle,fill=black,inner sep=0.75pt,outer sep=3pt, label={[label distance=-7pt]above left:{\tiny{$\sigma_2'$}}}] (t1t) at ($(Bk.south west)!0.75!(Bk.north west)+2.5*(-\w,0)$) {};
  \node[circle,fill=black,inner sep=0.75pt,outer sep=3pt, label={[label distance=-7pt]above right:{\tiny{$\sigma_1'$}}}] (t2t) at ($(Bk.south west)!0.75!(Bk.north west)+1.5*(-\w,0)$) {};
  \node[circle,fill=black,inner sep=0.75pt,outer sep=3pt, label={[label distance=-7pt]below left:{\tiny{$\sigma_1$}}}] (t2s) at ($(Bk.south west)!0.25!(Bk.north west)+2.5*(-\w,0)$) {};
  \draw[->] (t1s) -- node[right, pos=.3]{\tiny{$t_1$}} (t1t);
  \draw[->] (t2s) -- node[left, pos=.3]{\tiny{$t_2$}} (t2t);
  \draw[gray,->,>={Classical TikZ Rightarrow[]},double,double distance=0.5pt] (t2t) -- (t1t);
  \draw[gray,->,>={Classical TikZ Rightarrow[]},double,double distance=0.5pt] (t1s) -- (t2s);
\end{tikzpicture}
  \caption{\iflong\footnotesize\fi The monotonization of a transition $t_1=(\sigma_1,\sigma'_1) \in \tr $ subsumes that of $t_2=(\sigma_2,\sigma'_2) \in \tr$ if $\sigma_1$ is farther from the backward reachable cube than $\sigma_2$, and $\sigma'_1$ is closer than $\sigma'_2$. If few transitions subsume the rest, $\dnfsize{\absr{\tr}}$ is small, which implies convergence with few frames for $\Lambda$-PDR.}
  \label{fig:arrow-subsumption}
\end{minipage}
\end{minipage}\hspace{0.2cm}%
\begin{minipage}{.59\textwidth}
  \centering
  \setcounter{figure}{\value{reorder-fig-idx}}
  \setcounter{reorder-fig-idx}{\value{figure}}
  \captionsetup{width=.95\textwidth}
  \includegraphics[width=\textwidth]{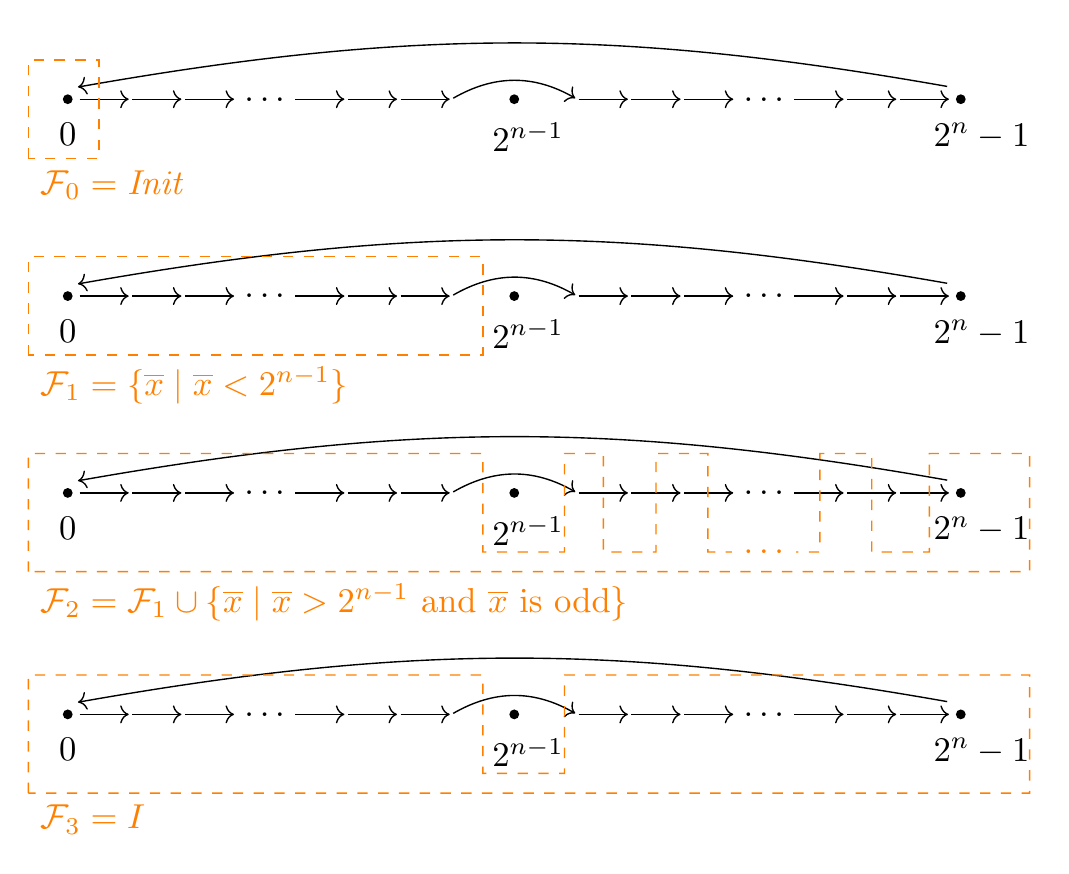}
  \caption{\iflong\footnotesize\fi The frames of $\Lambda$-PDR on the running example (only values of $\vec{x}$ are displayed, always $\vec{y}=0\ldots0,z=0$).
    The number of frames required for convergence is always 4, independent of the parameter $n$.
    }
  \label{fig:lambda-pdr-frames}
\end{minipage}
\vspace{-0.4cm}
\end{figure}
\setcounter{figure}{\value{reorder-fig-idx}+1}

Repeatedly applying the abstraction can cause $\Lambda$-PDR to converge much faster than exact forward reachability (illustrated schematically in~\Cref{fig:successive-approx}).
For example, the frames of $\Lambda$-PDR on the running example are displayed in~\Cref{fig:lambda-pdr-frames} (we perform the calculation in detail in~\Cref{ex:running-all-frames}), and
$\Frame_3$ is none other than the inductive invariant of~\Cref{eq:skip-counter-invariant}.
In this way, \textbf{\textit{successive overapproximation can lead to convergence in a smaller number of frames than exact forward reachability}}; in this example, $\Lambda$-PDR converges in $4$ frames, rather than an exponential number as exact forward reachability would use.\footnote{
	The operations in the abstract domain are not efficient; we focus on the number of iterations until convergence.
}
In~\Cref{sec:itp-friends}, we show that $\Lambda$-PDR holds a similar advantage over the unrolling depth of an interpolation-based algorithm.

\subsection{Convergence Bounds via (Hyper)Diameter Bounds}
\label{sec:overview-diameter-bound}
When does the successive overapproximation of $\Lambda$-PDR terminate in a small number of iterations? The lattice height of the abstract domain is exponential in the number of variables (see~\Cref{sec:ai}), and rapid convergence depends on properties of the transition system rather than of the abstract domain.

We prove a convergence bound using one such property: \textit{\textbf{when the DNF size}} (the number of terms in the smallest DNF representation) \textit{\textbf{of specific monotone overapproximations of the transition relation are small, then $\Lambda$-PDR terminates after a small number of iterations}} (\Cref{thm:abstract-diamter-bound,thm:abstract-hyperdiamter-bound}).
The central idea is to relate $\Lambda$-PDR to \textit{\textbf{exact forward reachability in an ``abstract'' transition system}}, and then bound the number of iterations using techniques for analyzing the diameter of transition systems.

Let us consider an example where~\Cref{thm:abstract-diamter-bound} derives an efficient convergence bound (specifically, a linear bound, as opposed to the potential exponential number of iterations) through a syntactic analysis of the transition system in question.
Consider the same example of~\Cref{fig:skip-counter}, but with additional transitions that ``bounce back'' from $\vec{x}=01\ldots11$ to $\vec{x}=00\ldots00$ and from $\vec{x}=11\ldots11$ to any $\vec{x}=10\ldots010\ldots0$ (number with msb and other variable $1$).  \yotamsmall{turn every subset of $1$ variables to $0$---not so easy to describe in a transition relation formula. where every number can bounce back---commented out}
(The new transitions have no effect on the behavior of either exact forward reachability or $\Lambda$-PDR;\footnote{
 \yotamsmall{new} The algorithms are affected by the new transitions when they arrive to the transitions' pre-states, but at this point both algorithms will have already arrived to the smaller numbers in the post-states of the new transitions, resulting in the same frames as without these transitions.
 }
 we explain below why this is needed to obtain a good bound through~\Cref{thm:abstract-diamter-bound}.\footnote{
	\yotamsmall{new} An earlier version of this paper stated that \Cref{thm:abstract-diamter-bound} proves $\Lambda$-PDR's convergence in a linear number of frames for the system in~\Cref{fig:skip-counter}. However, this is incorrect; see~\Cref{ex:skip-counter-pure-exponential}. A similar error is now corrected also in~\Cref{ex:multiskip-counter-poly}.
})

In this transition system, we bound the number of iterations of $\Lambda$-PDR by applying~\Cref{thm:abstract-diamter-bound} to the part of $\tr$ restricted to states where $\vec{y}=0\ldots0,z=0$ (this is valid because both the transitions of $\tr$ and monotonization
w.r.t.\ $\bkwrch{k}$, $\monox{\cdot}{\vec{x}=10\ldots0,\vec{y}=1\ldots1,z=1}$, leave $\vec{y}=0\ldots0,z=0$ unchanged---see~\Cref{lem:diameter-bound-project-irrelevant}). %
The transition relation $\tilde{\tr} = \restrict{\tr}{\vec{y}\gets00\ldots00, z\gets0}$ can be written in DNF (as a double-vocabulary formula, with unprimed variables for the pre-state and primed variables for the post-state) as the disjunction of individual transitions, as appears on the left hand side here (colors and boxes should be ignored at this point): %
\newcommand*{\bbox}{%
  \tcboxmath[colback=white, colframe=black, size=fbox, arc=0pt, boxrule=0.4pt]%
}
\newcommand*{\nbox}{%
  \tcboxmath[colback=white, colframe=white, size=fbox, arc=0pt, boxrule=0.4pt]%
}
\small
\begin{align}
	\tilde{\tr}
		=
		  \label{eq:it:00start}
		  &\nbox{(\vec{x}=\litabs{0}00\ldots0000 \land \vec{x}'=0\litabs{00\ldots000}1)}
		  && \absr{\tr} = \monox{\tilde{\tr}}{\vec{x}=011\ldots11,\vec{x}'=100\ldots00} =
		\\
		\lor
		  &\nbox{(\vec{x}=\litabs{0}00\ldots000\litabs{1} \land \vec{x}'=0\litabs{00\ldots00}1\litabs{0})}
		  &&
	    \\
	    \lor
		  &
		  \nbox{(\vec{x}=\litabs{0}00\ldots00\litabs{1}0 \land \vec{x}'=0\litabs{00\ldots00}11)}
		  &&
	    \\
		\lor
		  &\nbox{(\vec{x}=\litabs{0}00\ldots00\litabs{11} \land \vec{x}'=0\litabs{00\ldots0}1\litabs{00})}
		  &&
	    \\
	    \label{eq:it:00end}
	    \lor
		  &
		  \nbox{\ldots}
		  &&
		\\
		\label{eq:it:0bounce}
		\lor
		  &\bbox{(\vec{x}=\litabs{011\ldots1111} \land \vec{x}'=0\litabs{00\ldots0000})}
		  && \quad \lor x'_n=0
		\\
		\label{eq:it:01skip}
		\lor
		  &\bbox{(\vec{x}=\litabs{011\ldots1111} \land \vec{x}'=\litabs{100\ldots000}1)}
		  && \quad \lor x'_0=1
		\\
		\lor
		\label{eq:it:11start}
		  &\nbox{(\vec{x}=100\ldots000\litabs{1} \land \vec{x}'=\litabs{100\ldots00}1\litabs{0})}
		  &&
		\\
		\lor
		  &\nbox{(\vec{x}=100\ldots00\litabs{1}0 \land \vec{x}'=\litabs{100\ldots00}11)}
		  &&
		\\
		\lor
		  &
		  \nbox{\ldots}
		  &&
		\\
		\lor
		\label{eq:it:11end}
		  &\nbox{(\vec{x}=1\litabs{11\ldots111}0 \land \vec{x}'=\litabs{1}11\ldots1111)}
		  &&
		\\
		\lor
		\label{eq:it:10wrap}
		  &
		  \nbox{(\vec{x}=1\litabs{11\ldots1111} \land \vec{x}'=0\litabs{00\ldots0000})}
		  &&
		\\
		\label{eq:it:1bounce-start}
		\lor
		  &\nbox{(\vec{x}=1\litabs{11\ldots1111} \land \vec{x}'=\litabs{100\ldots000}1)}
		  && 
		\\
		\label{eq:it:1bounce-start-really}
		\lor
		  &\bbox{(\vec{x}=1\litabs{11\ldots1111} \land \vec{x}'=\litabs{100\ldots00}1\litabs{0})}
		  && \quad \lor (x_n = 1 \land x'_1 = 1)
		\\
		\lor
		  &\bbox{(\vec{x}=1\litabs{11\ldots1111} \land \vec{x}'=\litabs{100\ldots0}1\litabs{00})}
		  && \quad \lor (x_n = 1 \land x'_2 = 1)
		\\
		\lor
		  &\bbox{(\vec{x}=1\litabs{11\ldots1111} \land \vec{x}'=\litabs{100\ldots}1\litabs{000})}
		  && \quad \lor (x_n = 1 \land x'_3 = 1)
		\\
		\lor
		  \label{eq:it:1bounce-end}
		  &\bbox{\ldots}
		  && \quad \lor \ldots
\end{align}
\normalsize
\yotamsmall{omitting: Obviously, the number of terms in $\tilde{\tr}$ is $\Omega\left(2^n\right)$.}
To compute a bound for the number of frames in $\Lambda$-PDR,
we perform a monotonization of the (two-vocabulary) transition relation.
Recall that in this example $\bkwrch{k}$ consists of a single cube, in which case we need only consider one monotonization (the case of more complex syntactic forms of $\bkwrch{k}$ is discussed later):
examine the monotonization that omits literals that agree with $\bkwrch{k}$ in the post-state, and \emph{conversely} in the pre-state,
$\absr{\tr} = \monox{\tilde{\tr}}{\vec{x}=01\ldots11,\vec{x}'=10\ldots00}$. The literals in $\tilde{\tr}$ that are omitted in $\absr{\tr}$ appear colored.
As we show in~\Cref{thm:absract-reach}, the resulting transition relation captures the behavior of $\Lambda$-PDR:
the set of $i$-reachable states of $\absr{\tr}$
matches the $i$'th frame of the %
Kleene iterations with the best transformer for $\tr$ in the $\bkwspan{k}$ domain.
Hence, \textbf{\textit{bounds on the diameter of the abstract system result in bounds on the number of frames}} of $\Lambda$-PDR.

To bound the diameter, we consider the DNF size $\absr{\tr}$.
The monotonization term-by-term from $\tilde{\tr}$
creates many redundant terms; the terms that originate from the transitions marked by boxes above subsume all the others.\footnote{
	The term arising from the ``bounce back'' transition with msb $0$ in~\cref{eq:it:0bounce}
	subsumes all other terms that originate from transitions where the msb is $0$ in both the pre-state and the post-state (\crefrange{eq:it:00start}{eq:it:00end}), as well as the term originating from the ``wraparound'' transition in~\cref{eq:it:10wrap};
	the term arising from the ``skip'' transition in~\cref{eq:it:01skip} %
	\yotamsmall{changed}subsumes the term originating from the transition in~\Cref{eq:it:1bounce-start};
	and the terms arising from ``bounce back'' transitions with msb $1$ in~\crefrange{eq:it:1bounce-start}{eq:it:1bounce-end} (\yotamsmall{new}including~\Cref{eq:it:1bounce-start} which is subsumed by~\Cref{eq:it:01skip})
	subsume all the terms that arise from transitions where the msb is $1$ both in the pre-state and the post-state (\crefrange{eq:it:11start}{eq:it:11end}). 	
} %
This generates a DNF representation of $\absr{\tr}$ with linear number of terms (appearing in the right-hand side above)---even though the original $\tilde{\tr}$ has an exponential number of terms in its DNF representation.
By~\Cref{thm:abstract-diamter-bound}
we deduce from
the linear DNF size of $\absr{\tr}$ that
$\Lambda$-PDR converges in at most a linear number of frames. %

One way to think about the difference between $\tr,\absr{\tr}$ is by the way transitions in $\tr$ give rise to the transitions in $\absr{\tr}$, illustrated in~\Cref{fig:arrow-subsumption}.
A transition of $\absr{\tr}$ can \emph{abstract} and move away from $\bkwrch{k}$, then follow a \emph{concrete} transition of $\tr$, and from the resulting post-state again \emph{abstract} and move in the direction away from $\bkwrch{k}$.
In this way, it may be possible for $\absr{\tr}$ to use the transition $(\sigma_1,\sigma'_1)$ in order to arrive from $\sigma_2$ to $\sigma'_2$, even if $(\sigma_2,\sigma'_2)$ were not a transition of $\tr$ (see~\Cref{fig:arrow-subsumption}).
When this is the case for $(\sigma_1,\sigma'_1),(\sigma_2,\sigma'_2) \in \tr$, the transitions of $\absr{\tr}$ that use the concrete transition $(\sigma_2,\sigma'_2)$ are also possible using the concrete transition $(\sigma_1,\sigma'_1)$; hence
the term generated from $(\sigma_2,\sigma'_2)$ can be discarded in the monotonization of $\tr$ to obtain $\absr{\tr}$, because it is subsumed by the term generated from $(\sigma_1,\sigma'_1)$. %
Roughly, \Cref{thm:abstract-diamter-bound} shows that $\Lambda$-PDR converges rapidly whenever there is a small number of transitions that subsume the others, by going from a pre-state $\sigma$ that is ``very far'' from $\bkwrch{k}$ in Hamming distance compared to the pre-states of other transitions, to the post-state $\sigma'$ that is ``very close'' to $\bkwrch{k}$ compared to the post-states of other transitions.
This is an intuition for how a small $\dnfsize{\absr{\tr}}$ can arise from the monotonization of the fully-expanded DNF representation of $\tr$.
(The starting point for monotonization can also be a more succinct DNF representation of $\tr$, in which case the intuition for an even shorter DNF representation of $\absr{\tr}$ is similar.) %

If we were not to add the ``bounce back'' transitions to the example of~\Cref{fig:skip-counter}, still monotonization of the transition relation produces an abstract transition system whose reachable states coincide with $\Lambda$-PDR's frames, and whose diameter corresponds to the number of iterations in which $\Lambda$-PDR converges. However, in that case the bound of~\Cref{thm:abstract-diamter-bound} is poor, because in this case the DNF-size is a poor estimate of the abstract system's diameter (see~\Cref{ex:skip-counter-pure-exponential}).

For when $\bkwrch{k}$ consists of multiple cubes, we generalize \Cref{thm:abstract-diamter-bound} to~\Cref{thm:abstract-hyperdiamter-bound}, bounding the number of frames by a product of monotonizations of $\tr$ and $\Init$. In the proof, the construction involves not an ordinary transition system, but a \emph{hyper}transition system: the hypertransition relation $\absr{\tr}$ arrives through concrete transitions to a \emph{set} of states, and abstracts from them to a state ``protected'' by that set, because the abstraction requires a ``protector'' state from every state in $\bkwrch{k}$ (see~\Cref{fig:protected-hull}).
A similar diameter bound using the DNF size of the hypertransition relation $\absr{\tr}$ applies.
We show that $\absr{\tr}$ can be written as a conjunction of per-cube monotonizations, leading to a bound by the product of DNF sizes of monotonizations of $\tr$ and $\Init$ (see~\Cref{sec:hyper-all}).

This technique does not explain rapid convergence of $\Lambda$-PDR in full generality, but provides one explanation for how the abstraction can bring this about.

\subsection{PDR, Revisited}
Through $\Lambda$-PDR, we have shown how PDR's frames perform an abstract interpretation in a domain founded on the monotone theory, and how such an abstraction can lead to faster convergence. We observe that these important characteristics of PDR are concealed in a simple property of PDR's frames: that they can be written in CNF so that every clause excludes at least one state from $\bkwrch{N}$ (\Cref{lem:pdr-also}).
In the monotone theory from above this reads that for every $1 \leq i \leq N$,
\iflong
\else
\vspace{-0.2cm}
\fi
\begin{tcolorbox}[boxsep=0pt]
		\begin{enumerate}
			\setcounter{enumi}{\value{overview-frame-props}}
			\item \label{it:frames-mbasis} $\Frame_i \in \mspan{\bkwrch{N}}$.
		\end{enumerate}
\end{tcolorbox}
\iflong
\else
\vspace{-0.2cm}
\fi
\noindent
The frames of $\Lambda$-PDR are the least set of states that satisfy this property together with properties~\ref{it:frames-start}--\ref{it:frames-end} from~\Cref{sec:overview-frame-props} (\Cref{lem:lambda-frames-minimality}), and the frames of PDR overapproximate them (\Cref{cor:lambda-pdr-underapproximates-pdr}).
Property~\ref{it:frames-mbasis} is the regularization in our abstract domain (\Cref{sec:ai}), and we have shown that it can lead to faster convergence than exact post-image computations---although PDR does not necessarily converge in the same number of frames as $\Lambda$-PDR, due to its additional overapproximation and heuristics.
The fact that PDR's frames are not the least to satisfy the above properties can
\yotamforlater{This could also lead PDR to be ``distracted'' and converge slower than the abstraction of $\Lambda$-PDR. However,}%
have some benefits. We show two:
\begin{itemize}
	\item \textbf{Faster convergence}: In some cases $\Lambda$-PDR performs little or no abstraction over exact forward reachability, but the fact that PDR only samples a subset of the possible lemmas can guarantee fast convergence. We show an example where $\Lambda$-PDR requires an exponential number of frames, whereas a linear number always suffices for standard PDR.

	\item \textbf{Frame size}: $\Lambda$-PDR's frames may be (needlessly) complex to represent as a formula. We show an example where some frames of $\Lambda$-PDR necessarily have an exponential DNF or CNF size, whereas standard PDR can converge in the same number of frames that include only a small number of important lemmas.
\end{itemize}
\yotamforlater{However, in some cases PDR can be led astray whereas $\Lambda$-PDR's more systematic exploration converges quickly.}

\subsubsection{Discussion: Additional PDR Features}
Our study focuses on what is, in our view, the most basic version of PDR. Our approach provides an interesting starting point to a discussion of two common, more advanced features of PDR.

\para{Other forms of generalization}
\label{sec:overview-inductive-generalization}
Inductive generalization~\cite{ic3} minimizes lemmas using a stronger criterion:
a lemma $c$ can be learned in $\Framepdr_{i+1}$ if it is inductive \emph{relative} to $\Frame_i$---whether $\postimage{\tr}{\Framepdr_i \land c} \implies c$, namely, checking whether $c$ holds in the post-state while also assuming $c$ in the pre-state. At first sight, this feature is not present in $\Lambda$-PDR, which uses the standard check (\Cref{alg:eepdr}, \cref{ln:eepdr-lemma-check}). Surprisingly, lemmas that PDR can generate using inductive generalization are also present in $\Lambda$-PDR (with $k=N$). This is a consequence of the fact that PDR with inductive generalization still satisfies properties~\ref{it:frames-start}--\ref{it:frames-end}~\cite{ic3,pdr}, as well as property~\ref{it:frames-mbasis}. The optimization of inductive generalization becomes important only when lazily backtracking to refine previous frames.
Other techniques, such as ternary simulation~\cite{pdr}, propagate sets of states to block together. If any of the states is in $\bkwrch{k}$, the resulting lemma is present also in $\Lambda$-PDR.

\para{May-counterexamples}
\label{sec:overview-may-cexs}
Some variants of PDR produce lemmas by blocking may counterexamples~\cite{DBLP:conf/fmcad/GurfinkelI15} that are not necessarily backward reachable, as a way to encourage pushing existing lemmas to later frames.
In $\Lambda$-PDR, all admissible lemmas from $\bkwspan{k}$ are always included, hence may counterexamples are not useful for pushing such lemmas.
However, may counterexamples also mean that lemmas no longer necessarily block states in $\bkwrch{N}$, which could be beneficial if a large $N$ is required to have an inductive invariant $I \in \bkwspan{N}$.
(It is a necessary condition PDR; in $\Lambda$-PDR it is both necessary and sufficient, see~\Cref{lem:eepdr-lfp}.)
This could be thought of as (heuristically) increasing the set $\bkwrch{k}$. In $\Lambda$-PDR, this results in a richer abstract domain that includes more inductive invariants but leads to tighter frames with less overapproximation. The theoretical ramifications of this beyond $\Lambda$-PDR merit more study.

\subsubsection{Outline}
The rest of the paper is organized as follows:
\Cref{sec:prelim} introduces preliminary notation.
\Cref{sec:monotone} introduces notions from the monotone theory and establishes their connection to $\Lambda$-PDR.
\Cref{sec:ai} proves the connection to abstract interpretation.
\Cref{sec:abstract-diameter-all} develops a bound on the number of iterations of $\Lambda$-PDR for the case that $\bkwrch{k}=\bkcube$ a single cube, and \Cref{sec:hyper-all} generalizes these results to arbitrary $\bkwrch{k}$. %
\Cref{sec:itp-friends} contrasts forward reachability in $\Lambda$-PDR with exact forward reachability and a dual interpolation-based algorithm.
\Cref{sec:vs-pdr} compares $\Lambda$-PDR to standard PDR.
\Cref{sec:related} discusses related work, and~\Cref{sec:conclusion} concludes.  
\section{Preliminaries}
\label{sec:prelim}
We consider the safety of transition systems defined over propositional vocabularies. %

\para{States, transition systems, inductive invariants}
Given a vocabulary $\voc = \set{p_1,\ldots,p_n}$ of $n$ Boolean variables, a \emph{state} is a \emph{valuation} to $\voc$. The set of states over $\voc$ is denoted $\States$.
If $x$ is a state, $x[p]$ is the value ($\true/\false$ or $1/0$) that $x$ assigns to the variable $p \in \voc$.
A \emph{transition system} is a triple $(\Init,\tr,\Bad)$ where $\Init,\Bad$ are formulas over $\voc$ denoting the set of initial and bad states (respectively), and the \emph{transition relation} $\tr$ is a formula over $\voc \uplus \voc'$, where $\voc' = \{ \prop' \mid \prop \in \voc\}$ is a copy of the vocabulary used to describe the post-state of a transition.
If $\tilde{\voc},\tilde{\voc}'$ are distinct copies of $\voc$, $\tr[\tilde{\voc},\tilde{\voc}']$ denotes the substitution in $\tr$ of each $p \in \voc$ by its corresponding in $\tilde{\voc}$ and likewise for $\voc',\tilde{\voc'}$.
Given a set of states $S \subseteq \States$, the post-image $\tr(S) = \set{\sigma' \mid \exists \sigma \in S. \ (\sigma,\sigma') \models \tr}$.
The reflexive post-image is $\postimage{\tr}{S} = \tr(S) \cup S$.
The reachability diameter of a system is the least $s$ s.t.\ if $\sigma$ is reachable from $\Init$ by $\tr$ in any number of steps, it is reachable in at most $s$ steps. %
\begin{changebar}
The set of \emph{$k$-backward reachable states}---those that can reach a state in $\Bad$ along some execution of length at most $k$---are $\bkwrch{k} = \set{\sigma \ | \ \bmc{\tr}{\set{\sigma}}{k} \cap \Bad \neq \emptyset}$.
\end{changebar}
A transition system is \emph{safe} if all the states that are reachable from $\Init$ via any number of steps of $\tr$ satisfy $\neg \Bad$. %
An \emph{inductive invariant} is a formula $I$ over $\voc$ such that
\begin{inparaenum}[(i)]
	\item $\Init \implies I$,
	\item $I \land \tr \implies I'$, and
	\item $I \implies \neg\Bad$, where $I'$ %
denotes the result of substituting each $\prop \in \voc$ for $\prop' \in \voc'$ in $I$,
\end{inparaenum}
and $\varphi \implies \psi$ denotes the validity of the formula $\varphi \to \psi$. In the context of propositional logic, a transition system is safe iff it has an inductive invariant. %

\para{CNF, DNF, and cubes}
A \emph{literal} $\ell$ is a variable $p$ or its negation $\neg p$.
A \emph{clause} $c$ is a disjunction of %
literals. %
The empty clause is $\false$.
A formula is in \emph{conjunctive normal norm (CNF)} if it is a conjunction of clauses.
A \emph{cube} or \emph{term} $d$ is a conjunction of a consistent set of literals; at times, we also refer directly to the set and write $\ell \in d$. The empty cube is $\true$.
A formula is in \emph{disjunctive normal form (DNF)} if it is a disjunction of terms.
The \emph{domain}, $\cubdom{d}$, of a cube $d$ is the set of variables that appear in it (positively or negatively).
Given a state $\sigma$, we use the state and the (full) cube that consists of all the literals that are satisfied in $\sigma$ interchangeably; the only satisfying valuation of the cube is $\sigma$.
We identify a formula with the set of its valuations, and a set of valuations with an arbitrary formula that represents it, chosen arbitrarily (which always exists in propositional logic).
\section{The Monotone Theory for $\Lambda$-PDR}
\label{sec:monotone}
In this section, we present the monotone theory by~\citet{DBLP:journals/iandc/Bshouty95} and our extensions, and use it to derive the relation between successive frames in $\Lambda$-PDR (\Cref{eq:overview-next-frame}).
\Cref{sec:monotone-background} defines \emph{least $b$-monotone overapproximations} $\monox{\cdot}{b}$.
\Cref{sec:monotone-hull} defines the \emph{monotone hull} $\mhull{\cdot}{B}$ w.r.t.\ a set of states $B$, which is a conjunction of monotone overapproximations, and then relates this to $\Lambda$-PDR.
\iflong
\else
All omitted proofs appear in the extended version~\cite{extendedVersion}.
\fi

\subsection{Monotone Overapproximations}
\label{sec:monotone-background}
Our definitions and claims concerning $b$-monotone overapproximations %
generalize \citet{DBLP:journals/iandc/Bshouty95} by considering a partial cube $b$, and coincide with the original in the case of a full cube.
\begin{changebar}
\begin{definition}[$b$-Monotone Order~\cite{DBLP:journals/iandc/Bshouty95}]
\label{def:b-monotone-order}
Let $b$ be a cube. We define a partial order over states where $v \leq_b x$ when $x,v$ agree on all variables not present in $b$, and $x$ disagrees with $b$ on all variables on which also $v$ disagrees with $b$:
$\forall p \in \voc. \ x[p] \neq v[p] \mbox{ implies } p \in \cubdom{b} \land v[p]=b[p]$.
\end{definition}
Geometrically, $v \leq_b x$ indicates that $x$ is ``farther away'' from $b$ in the Hamming cube than $v$ from $b$, namely, that there is a shortest path w.r.t.\ Hamming distance between $x$ and its projection onto $b$ that goes through $v$.
\begin{definition}[$b$-Monotonicity~\cite{DBLP:journals/iandc/Bshouty95}]
\label{def:b-monotonicity}
A formula $\psi$ is $b$-monotone for a cube $b$ if
$
\forall v \leq_b x. \ v \models \psi \mbox{ implies } x \models \psi.
$
\end{definition}
That is, if $v$ satisfies $\psi$, so do all the states that are farther away from $b$ than $v$.
For example, if $\psi$ is $000$-monotone and $100 \models \psi$, then because $100 \leq_{000} 111$ (starting in $100$ and moving away from $000$ can reach $111$), also $111 \models \psi$.
In contrast, $100 \not\leq_{000} 011$ (the same process cannot flip the $1$ bit that already disagrees with $000$), so $011$ does not necessarily belong to $\psi$.

The importance of $b$-monotonicity for us is that if $\sigma \models b$, then, as we shall see (in~\Cref{thm:mhull-conjunctive}), any clause $c \subseteq \neg \sigma$ is a $b$-monotone formula.
The reason is that if $v \models c$ and we flip variables to disagree more with $b$, we can only make $v$ satisfy more literals of $c$ than before: all the variables in $b$ appear in the opposite polarity in $\neg \sigma$ and hence also in $c$, so flipping them in $v \models c$ to disagree with $b$ makes them agree with $c$ even more, and the result also satisfies $c$. %

Further, the lemmas that PDR learns are not just clauses, but clauses that overapproximate a certain set, hence they are $b$-monotone overapproximations. There are many $b$-monotone overapproximations, but our main workhorse is the least such formula:
\begin{definition}[Least $b$-Monotone Overapproximation~\cite{DBLP:journals/iandc/Bshouty95}]
\label{def:monox}
Given a formula $\varphi$ and a cube $b$, the \emph{least $b$-monotone overapproximation} of $\varphi$ is a formula $\monox{\varphi}{b}$ defined by
\begin{equation*}
	x \models \monox{\varphi}{b} \mbox{ iff } \exists v. \ v \leq_b x 	\land	 v \models \varphi.
\end{equation*}
\end{definition}
For example, if $100 \models \varphi$, then $100 \models \monox{\varphi}{000}$ because $\monox{\varphi}{000}$ is an overapproximation, and hence $111 \models \monox{\varphi}{000}$ because it is $000$-monotone, as above.
Here, thanks to minimality, $011$ does not belong to $\monox{\varphi}{000}$, unless $000$, $001$, $010$, or $011$ belong to $\varphi$.

$\monox{\varphi}{b}$ is a well-defined overapproximation of $\varphi$. Its main significance for learning theory is that it can be computed efficiently (through the DNF representation we also show below), obtaining the original $\varphi$ as the conjunction of least $b$-monotone overapproximations (with different $b$'s).
Surprisingly, the same overapproximation is related to PDR; we show below that $\monox{\varphi}{b}$ is exactly the conjunction of all the clauses $c$ that overapproximate $\varphi$ and can arise from blocking a state in $b$ (\Cref{thm:mhull-conjunctive}), matching generalization in the construction of frames of $\Lambda$-PDR (\Cref{cor:lambda-pdr-underapproximates-pdr}).

\end{changebar}

A technical observation that will prove useful several times is that $\monox{\cdot}{b}$ is a monotone operator:
\begin{lemma}
\label{lem:bshouty-monox-monotone}
If $\varphi_1 \implies \varphi_2$ then $\monox{\varphi_1}{b} \implies \monox{\varphi_2}{b}$.
\end{lemma}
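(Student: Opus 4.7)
The plan is to unfold the definition of $\monox{\cdot}{b}$ and observe that the witness used on the left is still a valid witness on the right. Specifically, I would fix an arbitrary state $x$ and assume $x \models \monox{\varphi_1}{b}$, aiming to show $x \models \monox{\varphi_2}{b}$.

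By \Cref{def:monox}, the assumption yields some state $v$ such that $v \leq_b x$ and $v \models \varphi_1$. Since $\varphi_1 \implies \varphi_2$ by hypothesis, the same $v$ satisfies $\varphi_2$. Thus $v$ witnesses, via \Cref{def:monox} again, that $x \models \monox{\varphi_2}{b}$. Since $x$ was arbitrary, $\monox{\varphi_1}{b} \implies \monox{\varphi_2}{b}$.

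There is essentially no obstacle here: the lemma is a direct consequence of the existential form of the definition, which is preserved under strengthening the existential body. No case analysis on $\leq_b$ or on the structure of $b$ is needed, because the order $\leq_b$ appears only in the witness condition and is untouched when we replace $\varphi_1$ by the weaker $\varphi_2$. The proof therefore fits in a couple of lines, and can be cited later as the statement that $\monox{\cdot}{b}$ is a monotone (in the order-theoretic sense) operator on formulas ordered by implication.
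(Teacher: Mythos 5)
Your proof is correct and matches the paper's approach: the paper dismisses this as ``immediate from the definition of $\monox{\cdot}{b}$,'' and your argument is precisely the spelled-out version of that observation, reusing the witness $v$ from \Cref{def:monox}. Nothing further is needed.
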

\toolong{
\begin{proof}
Immediate from the definition of $\monox{\cdot}{b}$.
\end{proof}	
}

\para{Disjunctive form}
The monotone overapproximation can be related to a DNF representation of the original formula, a fact that we use extensively in \Cref{sec:abstract-diameter-all} and also when we analyze $\Lambda$-PDR on specific examples.
Starting with a DNF representation of $\varphi$, we can derive a DNF representation of $\monox{\varphi}{b}$
by dropping in each term the literals that agree with $b$.
Intuitively, a ``constraint'' that $\sigma \models \ell$ in order to have $\sigma \models \monox{t}{b}$ where $\ell$ agrees with $b$ is dropped because if $\sigma \models \monox{t}{b}$ then flipping a bit $\sigma$ to disagree with $b$ results in a state $\tilde{\sigma}$
such that also
$\tilde{\sigma} \models \monox{t}{b}$, as $\sigma \leq_{b} \tilde{\sigma}$.

\begin{lemma}[Generalization of~\citet{DBLP:journals/iandc/Bshouty95}, Lemma 1(7)]
\label{lem:bshouty-mon-mindnf}
Let $\varphi = t_1 \lor \ldots \lor t_m$ in DNF. Then the monotonization $\monox{\varphi}{b} \equiv \monox{t_1}{b} \lor \ldots \lor \monox{t_m}{b}$ where $\monox{t_i}{b} \equiv t_i \setminus b = \bigwedge \set{\ell \in t_i \land \ell \not\in b}$.
\end{lemma}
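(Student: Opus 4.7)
The plan is a two-step proof: first reduce to the single-term case by distributing $\monox{\cdot}{b}$ over disjunction, and then prove the base case $\monox{t}{b} \equiv t \setminus b$ by double inclusion.

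For the distributivity step, I would unfold \Cref{def:monox}: $x \models \monox{\varphi}{b}$ iff $\exists v \leq_b x$ with $v \models \varphi$. Substituting $\varphi = \bigvee_i t_i$, the existential commutes with disjunction, giving $x \models \monox{\varphi}{b}$ iff $\exists i, \exists v \leq_b x$ with $v \models t_i$, i.e., iff $x \models \bigvee_i \monox{t_i}{b}$.

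For the base case, let $t$ be a term and write $t' \eqdef t \setminus b = \bigwedge\set{\ell \in t \mid \ell \notin b}$. I would prove the two directions separately. For $\monox{t}{b} \implies t'$, pick a witness $v \leq_b x$ with $v \models t$ and show $x \models \ell$ for each $\ell \in t'$. Let $p$ be the variable of $\ell$. If $p \notin \cubdom{b}$, then $\leq_b$ forces $x[p]=v[p]$, so $x \models \ell$. Otherwise $\ell \in t$ and $\ell \notin b$ means $v[p] \neq b[p]$ (since $v \models \ell$ while $\ell$ has opposite polarity to $b$ on $p$); but the contrapositive of the definition of $\leq_b$ says $v[p] \neq b[p] \Rightarrow x[p]=v[p]$, giving $x \models \ell$ again.

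For the converse $t' \implies \monox{t}{b}$, the natural construction is to define $v$ by $v[p] \eqdef $ the value forced by $t$ when $p \in \cubdom{t}$, and $v[p]\eqdef x[p]$ otherwise. Then $v \models t$ by construction, and to check $v \leq_b x$ I would fix any $p$ with $x[p] \neq v[p]$; necessarily $p \in \cubdom{t}$, let $\ell_p$ be the literal of $t$ on $p$. If $\ell_p \notin b$, then $\ell_p \in t'$, so $x \models \ell_p$ and thus $x[p]=v[p]$, contradicting the choice of $p$. Hence $\ell_p \in b$, which gives both $p \in \cubdom{b}$ and $v[p] = b[p]$, as required by \Cref{def:b-monotone-order}.

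The only delicate point, and the one I would be careful to state explicitly, is the case analysis on whether a literal $\ell \in t$ has the same variable as some literal of $b$ with opposite polarity (in which case $\ell \in t'$ even though $p \in \cubdom{b}$). Everything else is unfolding definitions; there is no real obstacle, just bookkeeping around $\cubdom{b}$ versus $\cubdom{t}$ and the asymmetric definition of $\leq_b$.
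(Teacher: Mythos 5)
Your proof is correct and takes essentially the same route as the paper's: reduce to a single term by commuting the existential quantifier in \Cref{def:monox} with the disjunction, then prove $\monox{t}{b} \equiv t \setminus b$ by double inclusion via an explicit witness $v \leq_b x$. The only difference is cosmetic---in the backward inclusion you construct $v$ by completing $x$ to a model of $t$ on $\cubdom{t}$, whereas the paper modifies $x$ on the $\cubdom{b}$-variables absent from $t \setminus b$; both hinge on the same observation that the flipped coordinates must satisfy $v[p]=b[p]$, which is exactly the polarity bookkeeping you correctly flagged as the delicate point.
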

\toolong{
\begin{proof}
First we argue that for any term $t$, $\monox{t}{b} \equiv t \setminus b$. Denote the rhs by $\psi$.
Let $x \in \monox{t}{b}$. Then there is $v \models t$ such that $v \leq_b x$. Let $\ell \in t$; $v \models \ell$. Consider a literal $\ell \in \psi$, then $\ell \in t$ and $\ell \not\in b$. Since $v \models t$, the former means that in particular $v \models \ell$. The latter means that to satisfy $v \leq_b x$ necessarily $v,x$ agree on the variable in $\ell$, and hence also $x \models \ell$. This proves $\monox{t}{a} \implies \psi$.
For the other direction, let $x \models \psi$. Let $v$ be obtained from $x$ by setting every variable $p_i \in \cubdom{b}$ that does not appear in $\psi$ to disagree with the corresponding value in $b$; then $v \leq_a x$. Now $v \models \psi$ (since these variables do not appear in $\psi$), and, furthermore, $v \models \ell$ for every $\ell \in t$ that was dropped from $t$ to $\psi$, because $v$ disagrees with $b$ on those literals, which are those that appear in $b$ in a negated form compared to $\ell$. Overall, $v \models t$, which implies $x \models \monox{t}{a}$.

We now claim, more generally, that $\monox{\psi_1 \lor \psi_2}{b} \equiv \monox{\psi_1}{b} \lor \monox{\psi_2}{a}$:
Let $x \models \monox{\psi_1 \lor \psi_2}{b}$. Then there is $v \models \psi_1 \lor \psi_2$ such that $x \leq_b x$. If $v \models \psi_1$, by definition we must have $x \models \monox{\psi_1}{b}$ and in particular $x \models \monox{\psi_1}{b} \lor \monox{\psi_2}{b}$; similarly for $\psi_2$. This shows $\monox{\psi_1 \lor \psi_2}{b} \implies \monox{\psi_1}{b} \lor \monox{\psi_2}{b}$.
As for the other direction, let $x \models \monox{\psi_1}{b} \lor \monox{\psi_2}{b}$. Without loss of generality, assume $x \models \monox{\psi_1}{b}$. Then there is $v \models \psi_1$, and in particular $v \models \psi_1 \lor \psi_2$, such that $v \leq_b x$. So we must have $x \models \monox{\psi_1 \lor \psi_2}{b}$.
\end{proof}
}

A corollary provides a canonical (inefficient) disjunctive form for $\monox{\varphi}{b}$:
\begin{corollary}
\label{lem:monox-disjunction-cubes}
Given a state $v$, we denote
$
	\cubemon{v}{b} \eqdef \monox{v}{b} =
					 \bigwedge{\set{p \ | \ v[p]=\true, \, p \not\in b}} \land
					 \bigwedge{\set{\neg p_i \ | \ v[p]=\false, \, \neg p \not\in b}}.
$
Then $\monox{\varphi}{b} \equiv \bigvee_{v \models \varphi}{\cubemon{v}{b}}$.
\end{corollary}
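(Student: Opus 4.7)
The plan is to obtain the statement as a direct corollary of \Cref{lem:bshouty-mon-mindnf} by choosing the particular (canonical but wasteful) DNF representation of $\varphi$ in which every term is a full state cube.

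First I would note that, viewing states as full cubes (containing a literal for each variable in $\voc$), we have the trivially valid DNF representation
\[
  \varphi \;\equiv\; \bigvee_{v \models \varphi} v,
\]
since the set of satisfying valuations of the right-hand side is exactly $\{v : v \models \varphi\}$. This is the standard ``minterm expansion'' of $\varphi$, and it is a DNF in the sense required by \Cref{lem:bshouty-mon-mindnf}, each term being the conjunction of $n$ literals corresponding to a state.

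Next I would apply \Cref{lem:bshouty-mon-mindnf} to this DNF representation: monotonization distributes over the disjunction term-by-term, and on a single term $v$ it drops exactly the literals that appear in $b$, giving $\monox{v}{b} \equiv v \setminus b$. Unwinding the definition of $\cubemon{v}{b}$ as stated in the corollary, we see that it is literally the conjunction of the literals of $v$ (encoded as $p$ if $v[p]=\true$ and $\neg p$ if $v[p]=\false$) whose corresponding literal is not in $b$, i.e., $v \setminus b$. So $\cubemon{v}{b} \equiv \monox{v}{b}$, which is consistent with the notation $\cubemon{v}{b} \eqdef \monox{v}{b}$ introduced in the statement. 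Combining these two observations yields
\[
  \monox{\varphi}{b} \;\equiv\; \bigvee_{v \models \varphi} \monox{v}{b} \;\equiv\; \bigvee_{v \models \varphi} \cubemon{v}{b},
\]
as required.

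There is essentially no obstacle here: the only thing to check carefully is that \Cref{lem:bshouty-mon-mindnf} applies to the minterm DNF (which it does, since it is stated for an arbitrary DNF representation $\varphi = t_1 \lor \ldots \lor t_m$), and that the symbolic equality $\monox{v}{b} = v \setminus b$ from \Cref{lem:bshouty-mon-mindnf} matches the explicit definition of $\cubemon{v}{b}$ given in the corollary. Both are immediate syntactic checks. No induction or case analysis on the structure of $\varphi$ is required beyond what \Cref{lem:bshouty-mon-mindnf} already supplies.
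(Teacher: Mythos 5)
Your proposal is correct and follows exactly the paper's own one-line argument: apply \Cref{lem:bshouty-mon-mindnf} to the minterm (full-state-cube) DNF expansion of $\varphi$ and observe that $\cubemon{v}{b}$ is precisely $v \setminus b$. Nothing further is needed.
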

\toolong{
\begin{proof}
Apply~\Cref{lem:bshouty-mon-mindnf} to the representation of $\varphi$ as the disjunction of all satisfying states.
\end{proof}
}

\subsection{Monotone Hull}
\label{sec:monotone-hull}
We now define the monotone hull, which is a conjunction of $b$-monotone overapproximations over all $b$ from a fixed set of states $B$ (in the case of $\Lambda$-PDR, $B = \bkwrch{k}$). We start with the definition that uses a conjunction over monotone-overapproximations w.r.t.\ individual states, and then extend this to the union of (partial) cubes.
\begin{definition}[Monotone Hull]
The \emph{monotone hull} of a formula $\varphi$ w.r.t.\ a set of states $B$ is $\mhull{\varphi}{B} = \bigwedge_{b \in B}{\monox{\varphi}{b}}$. %
\end{definition}

The monotone hull can be simplified to use a succinct DNF representation of the basis $B$ instead of a conjunction over all states.
(This is the motivation for generalizing $\monox{\cdot}{b}$ to a cube $b$ in \Cref{sec:monotone-background}.)
\begin{lemma}
\label{lem:mhull-dnf-base}
If $B = b_1 \lor \ldots \lor b_m$ and $b_1,\ldots,b_m$ are cubes, then $\mhull{\varphi}{B} \equiv \monox{\varphi}{b_1} \land \ldots \land \monox{\varphi}{b_m}$.
\end{lemma}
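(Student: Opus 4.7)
My plan is to reduce the statement to a single-cube identity, namely that for any cube $b$,
\[
  \monox{\varphi}{b} \;\equiv\; \bigwedge_{\sigma \models b} \monox{\varphi}{\sigma},
\]
where the conjunction ranges over full states satisfying $b$. Granting this, the lemma follows quickly: by definition $\mhull{\varphi}{B} = \bigwedge_{\sigma \models B} \monox{\varphi}{\sigma}$, and since $\sigma \models B$ iff $\sigma \models b_i$ for some $i$, we may regroup the conjunction (using idempotence for states in multiple $b_i$'s) as $\bigwedge_{i} \bigwedge_{\sigma \models b_i} \monox{\varphi}{\sigma}$, which by the single-cube identity collapses to $\bigwedge_i \monox{\varphi}{b_i}$.

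For the single-cube identity, I would split into two implications. The direction $\monox{\varphi}{b} \implies \bigwedge_{\sigma \models b} \monox{\varphi}{\sigma}$ is straightforward monotonicity: for any $\sigma \models b$, the order $\leq_b$ is contained in $\leq_{\sigma}$, because if $v \leq_b x$ then every disagreement between $v$ and $x$ happens at a variable $p \in \cubdom{b}$ with $v[p] = b[p]$, and since $\sigma$ agrees with $b$ on $\cubdom{b}$ this witnesses $v \leq_{\sigma} x$. Hence $\monox{\varphi}{b} \implies \monox{\varphi}{\sigma}$ for every $\sigma \models b$, giving the direction.

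For the converse $\bigwedge_{\sigma \models b} \monox{\varphi}{\sigma} \implies \monox{\varphi}{b}$, I would argue contrapositively: given $x \not\models \monox{\varphi}{b}$, I exhibit a specific $\sigma \models b$ with $x \not\models \monox{\varphi}{\sigma}$. The natural choice is $\sigma[p] = b[p]$ for $p \in \cubdom{b}$ and $\sigma[p] = x[p]$ for $p \notin \cubdom{b}$. If some $v \models \varphi$ witnessed $v \leq_{\sigma} x$, then for $p \notin \cubdom{b}$ the condition on $\leq_\sigma$ combined with $\sigma[p] = x[p]$ forces $v[p] = x[p]$ (otherwise $v[p] = \sigma[p] = x[p]$, contradiction); and for $p \in \cubdom{b}$, if $v[p] \neq x[p]$ then $v[p] = \sigma[p] = b[p]$. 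Together these are exactly the conditions for $v \leq_b x$, contradicting $x \not\models \monox{\varphi}{b}$.

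The main subtlety I expect is this contrapositive construction---finding the right $\sigma$ so that a hypothetical $\leq_\sigma$-witness actually satisfies the more restrictive $\leq_b$ conditions, in particular forcing agreement between $v$ and $x$ off of $\cubdom{b}$. Choosing $\sigma$ to match $x$ outside $\cubdom{b}$ is what makes the argument work, since it rules out the problematic case $v[p] = \sigma[p] \neq x[p]$ for $p \notin \cubdom{b}$. Everything else is bookkeeping with the definitions of $\leq_b$ and $\monox{\cdot}{b}$.
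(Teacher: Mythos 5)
Your proposal is correct and follows essentially the same route as the paper's proof: reduce to the single-cube identity $\monox{\varphi}{b} \equiv \bigwedge_{\sigma \models b}\monox{\varphi}{\sigma}$ by regrouping the conjunction defining the hull, prove the easy direction by noting $\leq_b$ refines $\leq_\sigma$ for each $\sigma \models b$, and prove the converse by projecting the tested state onto the cube $b$ and checking that a $\leq_\sigma$-witness for that projection is already a $\leq_b$-witness. The paper states this last step directly rather than contrapositively, but the construction and the bookkeeping are identical.
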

\toolong{
\begin{proof}
It follows from the definition that $\mhull{\cdot}{B}$ distributes over union in $B$.
Hence, $\mhull{\varphi}{B} \equiv \mhull{\varphi}{b_1} \land \ldots \land \mhull{\varphi}{b_m}$.
Further, $\mhull{\varphi}{b_i} = \bigwedge_{\sigma_b \in b_i}{\mhull{\varphi}{\set{\sigma_b}}} = \bigwedge_{\sigma_b \in b_i}{\monox{\varphi}{\sigma_b}}$.
It remains to argue that $\monox{\varphi}{b_i} \equiv \bigwedge_{\sigma_b \in b_i}{\monox{\varphi}{\sigma_b}}$.

\noindent
$\subseteq$:
By definition, if $\sigma \models \monox{\varphi}{b_i}$ then there is $x \models \varphi$ s.t.\ $x \leq_{b_i} \sigma$. In particular, $x \leq_{\sigma_b} \sigma$ for every $\sigma_b \models b_i$ (because $\sigma_b$ agrees with all the literals in $b_i$), and hence $\sigma \models \monox{\varphi}{\sigma_b}$ for every such $\sigma_b$.

\noindent
$\supseteq$:
Suppose that $\sigma$ is a model of the rhs. Let $\sigma_b$ be the state obtained from $\sigma$ by setting the variables present in $b_i$ to be as in $b_i$ (geometrically, this the projection of $\sigma$ onto the cube $b_i$). We have $\sigma_b \models b_i$. Thus $\sigma \models \monox{\varphi}{\sigma_b}$, so there exists $x \models \varphi$ such that $x \leq_{\sigma_b} \sigma$. But because $\sigma_b$ agrees with $\sigma$ on all literals except those also present in $b_i$, this implies also that $x \leq_{b_i} \sigma$. Hence also $\sigma \models \monox{\varphi}{b_i}$.
\end{proof}
}
Note that when $B = b$ is a single cube, $\mhull{\varphi}{b} = \monox{\varphi}{b}$.

Our main technical observation, connecting the monotone hull to~\Cref{alg:eepdr}, is that the monotone hull has an equivalent CNF form, as the conjunction of all overapproxmating clauses that exclude a state in $B$:
\begin{theorem}
\label{thm:mhull-conjunctive}
\label{lem:monox-conjunction-clauses}
	$\mhull{\varphi}{B} \equiv \bigwedge{\set{c \ | \ \mbox{$c$ is a clause, } \varphi \implies c \mbox{, and } \exists b \in B. \, b \not\models c}}.$
\end{theorem}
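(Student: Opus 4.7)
The plan is to prove both inclusions of the equivalence by working directly with \Cref{def:monox} and with the semantics of $v \leq_b \sigma$. I will assume, as in the rest of \Cref{sec:monotone-hull}, that $B$ is a set of states (the extension to cubes follows via \Cref{lem:mhull-dnf-base}), and denote the right-hand side by $\Psi_B(\varphi)$.

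\medskip
\noindent\textbf{Direction $\mhull{\varphi}{B} \implies \Psi_B(\varphi)$.}
I would first establish the key lemma that any clause $c$ with $b \not\models c$ is $b$-monotone. The argument is direct from the definitions: if $b \not\models c$, then every literal $\ell$ of $c$ is falsified by $b$, so $\ell$'s variable lies in $\cubdom{b}$ and $\ell$ disagrees with $b$ on that variable. Now suppose $v \models c$, witnessed by some literal $\ell \in c$ with $v \models \ell$, and pick any $x$ with $v \leq_b x$. Since $v$ already disagrees with $b$ on the variable $p$ of $\ell$, \Cref{def:b-monotone-order} forbids $x$ from differing from $v$ on $p$, so $x[p] = v[p]$ and $x \models \ell$, whence $x \models c$. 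Combined with $\varphi \implies c$, minimality of the least $b$-monotone overapproximation (\Cref{def:monox}) gives $\monox{\varphi}{b} \implies c$, and therefore $\mhull{\varphi}{B} \implies c$. Conjoining over all such $c$ yields the inclusion.

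\medskip
\noindent\textbf{Direction $\Psi_B(\varphi) \implies \mhull{\varphi}{B}$.}
I would prove the contrapositive. Suppose $\sigma \not\models \mhull{\varphi}{B}$; then there is some $b \in B$ with $\sigma \not\models \monox{\varphi}{b}$, meaning that no $v$ satisfying $v \leq_b \sigma$ satisfies $\varphi$. The central step is to identify this set $\{v : v \leq_b \sigma\}$ with the cube
\[
  d_\sigma \;\eqdef\; \bigwedge\{\ell : \sigma \models \ell \text{ and } b \models \ell\},
\]
i.e., the conjunction of literals on which $\sigma$ and $b$ agree. Indeed, unpacking \Cref{def:b-monotone-order} shows that $v \leq_b \sigma$ iff $v$ agrees with $\sigma$ on every variable where $\sigma$ and $b$ agree (and is otherwise free on variables where they differ). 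Hence the hypothesis becomes $\varphi \land d_\sigma$ unsatisfiable, i.e., $\varphi \implies c^\star$ where $c^\star \eqdef \neg d_\sigma$ is a clause. By construction every literal of $c^\star$ is falsified by both $\sigma$ and $b$, so $\sigma \not\models c^\star$ and $b \not\models c^\star$. Thus $c^\star$ is one of the conjuncts of $\Psi_B(\varphi)$, and $\sigma$ falsifies it, yielding $\sigma \not\models \Psi_B(\varphi)$, as required.

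\medskip
\noindent\textbf{Main obstacle.}
The only delicate step is the cube characterization of $\{v : v \leq_b \sigma\}$ in the second direction; everything else is a clean unfolding of definitions. The care needed is in checking both cases of \Cref{def:b-monotone-order}: variables where $\sigma[p] = b[p]$ force $v[p]$ to this common value (since a flip would have to put $v[p] = b[p]$, a contradiction), while variables where $\sigma[p] \neq b[p]$ leave $v[p]$ free between these two values. Once this description is pinned down, the witnessing clause $c^\star = \neg d_\sigma$ essentially writes itself, and the claim drops out.
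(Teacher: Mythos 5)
Your proof is correct and follows essentially the same route as the paper's: both directions hinge on the same observations, namely that a clause falsified by $b$ is $b$-monotone (so $\monox{\varphi}{b}$ implies it), and that the witnessing clause for a state $\sigma$ outside $\monox{\varphi}{b}$ is the negation of the cube of literals on which $\sigma$ and $b$ agree. The only (cosmetic) difference is that in the backward direction you justify $\varphi \implies c^\star$ by directly identifying $\set{v : v \leq_b \sigma}$ with that agreement cube, whereas the paper routes the same fact through \Cref{lem:bshouty-mon-mindnf} and the monotonicity of $\monox{\cdot}{b}$; your unfolding is, if anything, the cleaner of the two.
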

\begin{proof}
\noindent
$\Longrightarrow$: Let $c$ be a clause as in the rhs. Then there exists $b \in B$ s.t.\ $b \not\models c$. It suffices to show that $\monox{\varphi}{b} \implies c$. Recall that $c$ is a disjunction of literals; since $b \not\models c$, all those literals are falsified in $b$. Hence, by~\Cref{lem:bshouty-mon-mindnf}, $\monox{c}{b} \equiv c$.
Now $\varphi \implies c$ (by the choice of $c$), and~\Cref{lem:bshouty-monox-monotone} yields $\monox{\varphi}{b} \implies \monox{c}{b} \equiv c$.

\noindent
$\Longleftarrow$:
Let $\sigma$ be a model of the rhs. We want to prove that $\sigma \models \monox{\varphi}{b}$ for every $b \in B$.
Assume otherwise.
Take $d$ as the conjunction of all literals that hold in both $\sigma$ and $b$ (if this set is empty, $d=\true$). Clearly $d$ is a term and $b,\sigma \models d$.
Take $c = \neg d$; then $c$ is a clause and $b \not\models c$.
It remains to show that $\varphi \implies c$, because then $c$ belongs to the rhs but $\sigma \not\models c$, in contradiction to the premise.
To see this, note that $c = \monox{\neg \sigma}{b}$ by~\Cref{lem:bshouty-mon-mindnf}---all the literals from the clause $\neg \sigma$ on which $b$ disagrees (because $b$ agrees on them with $\sigma$).
$\varphi \implies \neg \sigma$ because $\sigma \not\models \varphi$ (as $\sigma \not\models \monox{\varphi}{b}$, which is an overapproximation of $\varphi$). Hence \Cref{lem:bshouty-monox-monotone} implies $\monox{\varphi}{b} \implies \monox{\neg \sigma}{b} = c$, and in particular $\varphi \implies c$. The claim follows.
\end{proof}
We can now derive~\Cref{eq:overview-next-frame}:
\begin{corollary}
\label{lem:justify-overview-next-frame}
In $\Lambda$-PDR (\Cref{alg:eepdr}), $\Frame_{i+1} = \mhull{\postimage{\tr}{\Frame_i}}{\bkwrch{k}}$.
\end{corollary}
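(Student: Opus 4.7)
The plan is to apply \Cref{thm:mhull-conjunctive} directly, with $\varphi = \postimage{\tr}{\Frame_i}$ and $B = \bkwrch{k}$. The work is almost entirely bookkeeping: unfold the inner loops of~\Cref{alg:eepdr} to extract a conjunctive form for $\Frame_{i+1}$, then check that its conjuncts coincide (up to equivalence) with those enumerated on the right-hand side of~\Cref{thm:mhull-conjunctive}.

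First, I would read off from lines~\ref{ln:eepdr-bmc}--\ref{ln:eepdr-lemma-check} of~\Cref{alg:eepdr} that, starting from $\Frame_{i+1} = \true$, the algorithm conjoins every clause $c$ such that (i) $c \subseteq \neg \sigma_b$ for some $\sigma_b \in \bkwrch{k}$ and (ii) $\postimage{\tr}{\Frame_i} \implies c$. Hence
\[
   \Frame_{i+1} \;\equiv\; \bigwedge \bigl\{\, c \;\bigm|\; c \text{ is a clause},\ \postimage{\tr}{\Frame_i} \implies c,\ \exists \sigma_b \in \bkwrch{k}.\; c \subseteq \neg \sigma_b \,\bigr\}.
\]

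Next, I would note the elementary correspondence between the condition ``$c \subseteq \neg \sigma_b$'' used in the algorithm and the condition ``$\sigma_b \not\models c$'' used in~\Cref{thm:mhull-conjunctive}: since $\neg \sigma_b$ is the clause whose literals are exactly those falsified by $\sigma_b$, any sub-clause $c \subseteq \neg \sigma_b$ is falsified by $\sigma_b$; conversely, if $\sigma_b \not\models c$ then every literal of $c$ is falsified in $\sigma_b$ and therefore belongs to $\neg \sigma_b$. So the two sets of conjuncts are identical, and~\Cref{thm:mhull-conjunctive} identifies the conjunction with $\mhull{\postimage{\tr}{\Frame_i}}{\bkwrch{k}}$.

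There is no real obstacle here---the technical content sits in~\Cref{thm:mhull-conjunctive}, which has already been proved. The only point requiring a moment of care is the ``$c \subseteq \neg \sigma_b$'' vs.\ ``$\sigma_b \not\models c$'' reformulation above; everything else is an unfolding of the pseudocode.
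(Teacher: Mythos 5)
Your proof is correct and follows essentially the same route as the paper's: unfold the loops at \cref{ln:eepdr-bmc}--\cref{ln:eepdr-lemma-check} into a conjunction of overapproximating clauses each excluding some $\sigma_b \in \bkwrch{k}$, and invoke~\Cref{thm:mhull-conjunctive}. Your explicit remark that ``$c \subseteq \neg\sigma_b$'' is equivalent to ``$\sigma_b \not\models c$'' is a small point the paper leaves implicit, and it is argued correctly.
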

\begin{proof}
$\Frame_{i+1}$ is the conjunction formed by the process that for each $\sigma_b \in \bkwrch{k}$ (\cref{ln:eepdr-bmc} of~\Cref{alg:eepdr}) iterates in~\cref{ln:eepdr-for-clause} over all clauses that exclude $\sigma_b$, and conjoins $c$ if it overapproximates $\postimage{\tr}{\Frame_i}$ (\cref{ln:eepdr-lemma-check}). By~\Cref{thm:mhull-conjunctive} this is $\mhull{\postimage{\tr}{\Frame_i}}{\bkwrch{k}}$.
\end{proof}

\begin{example}
\label{ex:running-all-frames}
The above lemmas are the basis for our presentation in~\Cref{sec:overview-example-frame} of $\Frame_1$ on~\Cref{fig:skip-counter} as~\Cref{eq:running-frame-1}.
Let us now use these lemmas to describe later frames in that execution.
Recall that $\bkwrch{k}$ is the cube $\vec{x} = 10\ldots0 \land y_n y_{n-1} \ldots y_1 = 11\ldots1 \land z=1$, denote it by $b$.
For the next frame,
$\postimage{\tr}{\Frame_1}=\Frame_1 \lor (\vec{x}=10\ldots01 \land \vec{y}=0\ldots0 \land z=0)$. Then
\begin{align*}
		\Frame_2
	&\underset{\rm \Cref{thm:mhull-conjunctive}}{=}
		\mhull{\postimage{\tr}{\Frame_1}}{\bkwrch{k}}
	\underset{\rm \Cref{lem:mhull-dnf-base}}{=}
		\monox{\postimage{\tr}{\Frame_1}}{\bkcube}
	\\
	&\underset{\rm \Cref{lem:bshouty-mon-mindnf}}{=}
		\monox{\Frame_1}{\bkcube} \lor
		\monox{\vec{x}=10\ldots01 \land \vec{y}=0\ldots0 \land z=0}{\bkcube}
	\\
	&\underset{\rm \Cref{lem:bshouty-mon-mindnf}}{=}
		\Frame_1 \lor (x_0=1 \land \vec{y}=0\ldots0 \land z=0).
\end{align*}
For the next frame, $\postimage{\tr}{\Frame_2} = \Frame_2 \lor (x_0=0 \land \vec{y}=0\ldots0 \land z=0) \land \neg(\vec{x}=1\ldots0 \land \vec{y}=0\ldots0 \land z=0)$. This is equivalent to $(\vec{y}=0\ldots0 \land z=0) \land \neg(\vec{x}=1\ldots0 \land \vec{y}=0\ldots0 \land z=0)$, which is already $\bkcube$-monotone and hence this is also $\Frame_3 = \mhull{\postimage{\tr}{\Frame_2}}{\bkcube}$.
$\postimage{\tr}{\Frame_3}=\Frame_3$, and so $\mhull{\postimage{\tr}{\Frame_3}}{\bkwrch{k}} = \Frame_3$ (see~\Cref{lem:mhull-idempotence} below), and the algorithm converges.
\end{example}
\begin{example}
\label{ex:monotone-several-cubes}
In the previous example, $\bkwrch{k}$ consisted of a single cube. To exemplify the more general case, consider a system over $n$ variables $x_1,\ldots,x_n$, with $\Init = (x_1=\ldots=x_n=0)$, $\Bad$ the set of states with exactly one variable $1$, and $\tr$ that non-deterministically chooses some $i \neq j$ with $x_i=x_j=0$ and sets $x_i \gets 1$ and $x_j \gets 1$.
Take $k=0$. Then $\bkwrch{k}=\bkcube_1 \lor \ldots \lor \bkcube_n$ where $\bkcube_i$ is $x_i=1 \land \bigwedge_{j \neq i}{(x_j=0)}$.
After one step, $\postimage{\tr}{\Frame_0} = (0\ldots0) \lor \bigvee_{i_1 \neq i_2}{(x_{i_1}=x_{i_2}=1 \land \bigwedge_{j \not\in \set{i_1,i_2}}(x_j=0)})$ is the set of states where there are zero or two variables $1$.
Then for every $b_i$,
\begin{align*}
		\monox{\postimage{\tr}{\Frame_0}}{b_i}
	\underset{\rm \Cref{lem:bshouty-mon-mindnf}}{=}
			(x_i=1)
		\lor
			\left(\bigvee_{i_1 \neq i_2, i\not\in \set{i_1,i_2}}{x_{i_1}=x_{i_2}=x_i=1}\right)
		\lor
			\left(\bigvee_{i_1 \neq i_2, i=i_1}{x_{i_2}=1}\right)
	= \bigvee_{j}{(x_j=1)}.
\end{align*}
Hence,
$
		\Frame_1 \underset{\rm \Cref{thm:mhull-conjunctive}}{=} \mhull{\postimage{\tr}{\Frame_0}}{\bkwrch{k}} = \bigvee_{i}{\monox{\postimage{\tr}{\Frame_0}}{b_i}} = \bigvee_{j}{(x_j=1)}.
$
After another step, $\postimage{\tr}{\Frame_1}=\Frame_1$ and so $\mhull{\postimage{\tr}{\Frame_1}}{\bkwrch{k}} = \Frame_1$ (see~\Cref{lem:mhull-idempotence} below), and the algorithm converges.
\end{example}

\para{Additional lemmas}
Before proceeding, we state a few helpful, simple lemmas that we use later:
\begin{lemma}
\label{lem:mhull-overapproximation}
$\varphi \implies \mhull{\varphi}{B}$.
\end{lemma}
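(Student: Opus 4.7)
The plan is to reduce the claim to the per-state case, i.e., to show that $\varphi \implies \monox{\varphi}{b}$ for every cube $b$, since by definition $\mhull{\varphi}{B} = \bigwedge_{b \in B}{\monox{\varphi}{b}}$, and a conjunction is implied by $\varphi$ iff each conjunct is.

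For the per-state step, I would unfold \Cref{def:monox}: $x \models \monox{\varphi}{b}$ iff there exists $v$ with $v \leq_b x$ and $v \models \varphi$. Given $x \models \varphi$, the natural witness is $v \eqdef x$. It remains to verify reflexivity of $\leq_b$, which follows immediately from \Cref{def:b-monotone-order}: the implication $x[p] \neq v[p] \implies \ldots$ is vacuous when $v = x$. Hence $x \models \monox{\varphi}{b}$, establishing $\varphi \implies \monox{\varphi}{b}$.

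Taking the conjunction over all $b \in B$ yields $\varphi \implies \mhull{\varphi}{B}$, as required. An alternative one-line argument uses \Cref{thm:mhull-conjunctive} directly: $\mhull{\varphi}{B}$ is a conjunction of clauses $c$ each satisfying $\varphi \implies c$, and any conjunction of consequences of $\varphi$ is itself a consequence of $\varphi$. I expect no real obstacle here; the only subtle point is recalling reflexivity of the $b$-monotone order, but this is immediate from the definition.
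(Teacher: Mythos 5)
Your proof is correct and follows essentially the same route as the paper's: reduce to $\varphi \implies \monox{\varphi}{b}$ for each $b \in B$ (which the paper asserts directly from the definition of the least $b$-monotone overapproximation, and which you justify by taking $v = x$ and reflexivity of $\leq_b$), then conjoin over $B$. No gaps.
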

\toolong{
\begin{proof}
$\varphi \implies \monox{\varphi}{b}$ for every $b \in B$, from the definition of $b$-monotone overapproximation. Hence also $\varphi \implies \bigwedge_{b \in B}{\monox{\varphi}{b}}$.
\end{proof}
}
\begin{lemma}
\label{lem:mhull-monotonicity}
If $\varphi_1 \implies \varphi_2$ then $\mhull{\varphi_1}{b} \implies \mhull{\varphi_2}{b}$.
\end{lemma}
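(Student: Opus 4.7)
The plan is to reduce this to the already-established pointwise monotonicity of the least $b$-monotone overapproximation (\Cref{lem:bshouty-monox-monotone}). By definition, $\mhull{\varphi}{B} = \bigwedge_{b \in B}\monox{\varphi}{b}$, so the monotone hull is just a (possibly infinite) conjunction of pointwise operators, and conjunction preserves implication coordinatewise.

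Concretely, I would first unfold the definition of $\mhull{\cdot}{B}$. Then, for each $b \in B$, I invoke \Cref{lem:bshouty-monox-monotone} on the hypothesis $\varphi_1 \implies \varphi_2$ to obtain $\monox{\varphi_1}{b} \implies \monox{\varphi_2}{b}$. Taking the conjunction of these implications over all $b \in B$ yields
\[
  \bigwedge_{b \in B}\monox{\varphi_1}{b} \;\implies\; \bigwedge_{b \in B}\monox{\varphi_2}{b},
\]
which is exactly $\mhull{\varphi_1}{B} \implies \mhull{\varphi_2}{B}$. (If the statement is read with a single cube $b$ in place of the set $B$, the argument is an immediate single instance of \Cref{lem:bshouty-monox-monotone}; and if one prefers to work with a DNF basis $B = b_1 \lor \cdots \lor b_m$, \Cref{lem:mhull-dnf-base} reduces the hull to a finite conjunction $\monox{\varphi}{b_1} \land \cdots \land \monox{\varphi}{b_m}$ and the same coordinatewise argument applies.)

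There is no substantive obstacle: once \Cref{lem:bshouty-monox-monotone} is in hand, this lemma is pure bookkeeping that lifts pointwise monotonicity through the defining conjunction of the hull.
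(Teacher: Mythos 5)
Your proposal is correct and matches the paper's proof: both unfold $\mhull{\cdot}{B}$ as the conjunction $\bigwedge_{b \in B}\monox{\cdot}{b}$, apply \Cref{lem:bshouty-monox-monotone} to each conjunct, and conclude by lifting the pointwise implications through the conjunction.
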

\toolong{
\begin{proof}
$\monox{\varphi_1}{b} \implies \monox{\varphi_2}{b}$ for every $b \in B$ by~\Cref{lem:bshouty-monox-monotone}, so if $\sigma \models \bigwedge_{b \in B}{\monox{\varphi_1}{b}}$ it also satisfies $\sigma \models \bigwedge_{b \in B}{\monox{\varphi_2}{b}}$.
\end{proof}
}
\begin{lemma}
\label{lem:mhull-idempotence}
The monotone hull is idempotent, that is, $\mhull{\mhull{\varphi}{B}}{B} \equiv \mhull{\varphi}{B}$.
\end{lemma}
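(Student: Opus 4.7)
The plan is to prove two implications separately, combining them to obtain the equivalence. The first direction, $\mhull{\varphi}{B} \implies \mhull{\mhull{\varphi}{B}}{B}$, is immediate from the fact that the monotone hull is an overapproximation (\Cref{lem:mhull-overapproximation} applied to $\mhull{\varphi}{B}$).

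For the nontrivial direction, $\mhull{\mhull{\varphi}{B}}{B} \implies \mhull{\varphi}{B}$, the key step is to establish that each single-state operator $\monox{\cdot}{b}$ is idempotent: $\monox{\monox{\varphi}{b}}{b} \equiv \monox{\varphi}{b}$. This follows by unfolding \Cref{def:monox} twice: $x \models \monox{\monox{\varphi}{b}}{b}$ iff there exists $v \leq_b x$ with $v \models \monox{\varphi}{b}$, iff there exist $u \leq_b v \leq_b x$ with $u \models \varphi$. By transitivity of $\leq_b$ (which is a partial order, as noted in \Cref{def:b-monotone-order}), this collapses to $\exists u \leq_b x. \ u \models \varphi$, which is exactly $x \models \monox{\varphi}{b}$.

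Given idempotence of $\monox{\cdot}{b}$, I apply the monotonicity of $\monox{\cdot}{b}$ (\Cref{lem:bshouty-monox-monotone}) to the inclusion $\mhull{\varphi}{B} \implies \monox{\varphi}{b}$ (which holds by definition of the conjunction, for each $b \in B$), obtaining $\monox{\mhull{\varphi}{B}}{b} \implies \monox{\monox{\varphi}{b}}{b} \equiv \monox{\varphi}{b}$. Conjoining over all $b \in B$ yields
\[
  \mhull{\mhull{\varphi}{B}}{B} \;=\; \bigwedge_{b \in B} \monox{\mhull{\varphi}{B}}{b} \;\implies\; \bigwedge_{b \in B} \monox{\varphi}{b} \;=\; \mhull{\varphi}{B},
\]
as needed.

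There is no real obstacle here; the only subtlety worth highlighting in the write-up is that the step $\monox{\monox{\varphi}{b}}{b} \equiv \monox{\varphi}{b}$ relies on the transitivity (and reflexivity) of $\leq_b$, which are implicit in \Cref{def:b-monotone-order}. Alternatively, one could argue via \Cref{thm:mhull-conjunctive}: every clause $c$ in the CNF characterization of $\mhull{\varphi}{B}$ is, by construction, a conjunct of $\mhull{\varphi}{B}$, hence $\mhull{\varphi}{B} \implies c$; since $c$ also excludes some $b \in B$, it is a conjunct of $\mhull{\mhull{\varphi}{B}}{B}$ as well, giving the desired implication. I would choose the $\monox{\cdot}{b}$-idempotence route since it is shorter and makes the role of the partial order $\leq_b$ transparent.
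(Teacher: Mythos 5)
Your proof is correct and follows essentially the same route as the paper's: both reduce the nontrivial direction to the per-$b$ implication $\monox{\mhull{\varphi}{B}}{b} \implies \monox{\varphi}{b}$ via $\mhull{\varphi}{B} \implies \monox{\varphi}{b}$, monotonicity of $\monox{\cdot}{b}$, and idempotence of $\monox{\cdot}{b}$. The only cosmetic difference is that you justify the idempotence of $\monox{\cdot}{b}$ by unfolding \Cref{def:monox} and using transitivity of $\leq_b$, whereas the paper appeals to $\monox{\varphi}{b}$ being the \emph{least} $b$-monotone overapproximation of an already $b$-monotone formula.
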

\toolong{
\begin{proof}
We claim that for every $b \in B$, $\monox{\mhull{\varphi}{B}}{b} \equiv \monox{\varphi}{b}$, which implies $\mhull{\mhull{\varphi}{B}}{B} = \bigwedge_{b \in B}{\monox{\mhull{\varphi}{B}}{b}} \equiv \bigwedge_{b \in B}{\monox{\varphi}{b}} = \mhull{\varphi}{B}$.

Let $b \in B$. $\monox{\varphi}{b} \subseteq \monox{\mhull{\varphi}{B}}{b}$ because $\varphi \subseteq \mhull{\varphi}{B}$ (\Cref{lem:mhull-overapproximation}) and by~\Cref{lem:mhull-monotonicity}.
Since from the definition $\mhull{\varphi}{B} \subseteq \monox{\varphi}{b}$, again by~\Cref{lem:mhull-idempotence}
$\monox{\mhull{\varphi}{B}} \subseteq \monox{\monox{\varphi}{b}}{b}$ and $\monox{\cdot}{b}$ is idempotent from the definition as least $b$-monotone overapproximation.
\end{proof}
}

\section{Abstract Interpretation in The Monotone Span of $\bkwrch{k}$}
\label{sec:ai}

In this section, we cast $\Lambda$-PDR as an abstract interpretation in a new logical abstract domain of the \emph{monotone span of $\bkwrch{k}$}.
We first discuss abstract interpretation in general, and then develop the notion of a monotone span. We then define the abstract domain and show the connection to $\Lambda$-PDR.

\subsection{Background: Abstract Interpretation}
\label{sec:ai-background}
In this section, we review the basics of abstract interpretation~\cite{DBLP:conf/popl/CousotC77}; see~\cite{urban2015static,rival2020introduction} for complete and general presentations.
A \emph{complete join-semilattice} is a tuple $\langle D, \sleq, \join, \bot \rangle$ where $D$ is partially-ordered by $\sleq$, $\bigjoin X$ is the least upper bound of every $X \subseteq D$ ($\forall x \in X. \ x \sleq \bigjoin X$ and $X$ is the smallest w.r.t.\ $\sleq$ that satisfies this), and $\bot$ is the minimal element ($\forall x \in D. \ \bot \sleq x$).
A \emph{chain} is a sequence of elements from $D$ satisfying $x_1 \sleq x_2 \sleq \ldots$, and a \emph{strictly ascending chain} if additionally $x_i \neq x_{i+1}$ for every $i$. The lattice's \emph{height} is the maximal length of a strictly ascending chain.
We consider finite domains, where in particular the height is also finite.

A function $F: D \to D$ is Scott-continuous if for every chain $X=x_0,x_1,\ldots \subseteq D$ it holds $f(\bigjoin_{x \in X} x) = \bigjoin_{x \in X}{f(x)}$.
By the Knaster-Tarski theorem, such $F$ has a least fixed-point (lfp)---the least $x$ such that $f(x)=x$---and by Kleene's theorem it is $\bigjoin_{i \geq 0}{F^i(\bot)}$. When %
the domain's height is also finite, the sequence $\set{F^i(\bot)}_{i \geq 0}$ converges to the lfp.

In our setting, the \emph{concrete domain} is the join-semilattice powerset domain of the set of states, $\mathcal{C} = \langle D=2^{\States}, \subseteq, \cup, \emptyset \rangle$.
An \emph{abstract domain} is a join-semilattice $\mathcal{A} = \langle \abs{D}, \abs{\sleq}, \abs{\join}, \abs{\bot} \rangle$.
An \emph{abstraction function} is a monotone function $\alpha: D \to \abs{D}$, that is, $\forall S_1 \subseteq S_2 \in D. \ \alpha(S_1) \abs{\sleq} \alpha(S_2)$.
A \emph{conretization function} is a monotone function $\gamma: \abs{D} \to D$, that is, $\forall a_1 \abs{\sleq} a_2 \in \abs{D}. \ \gamma(a_1) \subseteq \gamma(a_2)$.
There is a \emph{Galois connections} between $(D, \subseteq)$ and $(\abs{D}, \abs{\sleq})$ through $(\alpha,\gamma)$, denoted $(D, \subseteq) \galois{\alpha}{\gamma} (\abs{D}, \abs{\sleq})$, if $\forall S \in D, a \in \abs{D}. \ \alpha(S) \abs{\sleq} a \Leftrightarrow S \subseteq \gamma(a)$.

Let $(\Init,\tr)$ be a transition system.
Define the concrete transformer $F_{\Init,\tr}: D \to D$ by $F_{\Init,\tr}(S) = \tr(S) \cup \Init$. It is Scott-continuous, since
for every increasing $\subseteq$-chain $X \subseteq D$, we have $F_{\Init,\tr}(\bigcup_{S \in X} S) = \bigcup_{S \in X}{F_{\Init,\tr}(S)}$.
Its fixed-point $\textit{lfp}(F_{\Init,\tr})$ is the set of reachable states of $(\Init,\tr)$.
The corresponding \emph{best abstract transformer} is given by $\abs{F}_{\Init,\tr}(a) = \alpha(F_{\Init,\tr}(\gamma(a)))$. $\abs{F}_{\Init,\tr}$ is also Scott-continuous, and there is
a \emph{fixed-point transfer} from $F$ to $\abs{F}$: $\textit{lfp}(\abs{F}_{\Init,\tr}) = \alpha(\textit{lfp}(F_{\Init,\tr}))$.
It follows that $\textit{lfp}(\abs{F}_{\Init,\tr})$ is the least abstraction of the set of reachable states, and it is obtained by the chain
\begin{equation*}
	\abs{\bot} \ \abs{\sleq} \ (\abs{F}_{\Init,\tr})^{1}(\abs{\bot}) \ \abs{\sleq} \ (\abs{F}_{\Init,\tr})^2(\abs{\bot}) \ \abs{\sleq} \ \ldots
\end{equation*}
at its convergence point in a finite $i^*$ with $(\abs{F}_{\Init,\tr})^{i^*} = (\abs{F}_{\Init,\tr})^{i^*+1}$ (due to the finite height of $\abs{D}$).
We call the chain the \emph{Kleene iterations with the best transformer}, and overall it converges to the most precise sound inductive invariant in the abstract domain.
Another way to phrase the same chain, denoting $\xi_i=(\abs{F}_{\Init,\tr})^{i}(\abs{\bot})$, is by
\begin{align*}
	\xi_1 = \alpha(\Init) \qquad \qquad \xi_{i+1}=\xi_i \mathbin{\abs{\join}} \alpha(\tr(\gamma(\xi_i))=\alpha(\postimage{\tr}{\gamma(\xi_i)}),
\end{align*}
because
$\xi_i \abs{\sleq} \xi_{i+1}$ implies that $\gamma(\xi_i) \subseteq \gamma(\xi_{i+1})$ and therefore
$\xi_{i+1}=\abs{F}_{\Init,\tr}(\xi_i) = \abs{F}_{\Init,\tr}(\xi_i) \mathbin{\abs{\join}} \alpha(\tr(\gamma(\xi_i)) \cup \Init) = \alpha(\tr(\gamma(\xi_i)) \cup \Init \cup \gamma(\xi_i)) = \alpha(\postimage{\tr}{\gamma(\xi_i)})$.

\subsection{Monotone Basis and Monotone Span}
\label{sec:monotone-basis}
We define the abstract domain in which $\Lambda$-PDR operates using the notion of a monotone span.
\begin{definition}[Monotone Basis~\cite{DBLP:journals/iandc/Bshouty95}]
\label{def:monotone-basis}
A \emph{monotone basis} is a set of states $B$.
It is a basis \emph{for a formula} $\varphi$ if
$
	\varphi \equiv \mhull{\varphi}{B}.
$
\end{definition}

\begin{definition}[Monotone Span]
\label{def:monotone-span}
$\mspan{B} = \set{\mhull{\varphi}{B} \mid \varphi \mbox{ over } \voc}$,
the set of formulas for which $B$ is a monotone basis.
\end{definition}
\citet{DBLP:journals/iandc/Bshouty95} showed that $\varphi \in \mspan{B}$ iff
there exist clauses $c_1,\ldots,c_s$ such that $\varphi \equiv c_1 \land \ldots \land c_s$ and for every $1 \leq i \leq s$ there exists $b_j \in B$ such that $b_j \not \models c_i$. (This is also a corollary of~\Cref{thm:mhull-conjunctive}.)
The monotone span is thus the set of all formulas that can be written in CNF using clauses that exclude states from the basis.
A consequence of this is that it is closed under conjunction:
\begin{lemma}
\label{lem:mspan-conjunctive-closed}
If $\varphi_1, \varphi_2 \in \mspan{B}$ then also $\varphi_1 \land \varphi_2 \in \mspan{B}$.
\end{lemma}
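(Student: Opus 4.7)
The plan is to argue directly from the definition $\mspan{B} = \{\mhull{\varphi}{B} \mid \varphi\}$, i.e., membership in $\mspan{B}$ is equivalent to the fixpoint condition $\varphi \equiv \mhull{\varphi}{B}$. So I would aim to show $\varphi_1 \land \varphi_2 \equiv \mhull{\varphi_1 \land \varphi_2}{B}$, using only the three structural lemmas already established in~\Cref{sec:monotone-hull}: overapproximation (\Cref{lem:mhull-overapproximation}), monotonicity (\Cref{lem:mhull-monotonicity}), and the CNF characterization (\Cref{thm:mhull-conjunctive}) if needed.

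The $\implies$ direction is immediate from~\Cref{lem:mhull-overapproximation}, which gives $\varphi_1 \land \varphi_2 \implies \mhull{\varphi_1 \land \varphi_2}{B}$ with no further work.

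For the $\impliedby$ direction, I would apply~\Cref{lem:mhull-monotonicity} twice. Since $\varphi_1 \land \varphi_2 \implies \varphi_1$, monotonicity yields $\mhull{\varphi_1 \land \varphi_2}{B} \implies \mhull{\varphi_1}{B}$, and the hypothesis $\varphi_1 \in \mspan{B}$ means $\mhull{\varphi_1}{B} \equiv \varphi_1$, so $\mhull{\varphi_1 \land \varphi_2}{B} \implies \varphi_1$. The symmetric argument using $\varphi_2 \in \mspan{B}$ gives $\mhull{\varphi_1 \land \varphi_2}{B} \implies \varphi_2$. Conjoining these two implications gives $\mhull{\varphi_1 \land \varphi_2}{B} \implies \varphi_1 \land \varphi_2$, completing the equivalence.

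There is no substantive obstacle here; the claim is essentially a distributivity-style consequence of the monotonicity of $\mhull{\cdot}{B}$ combined with the fact that membership in the span is exactly the fixpoint property of the hull operator. As an alternative route, one could instead invoke the CNF characterization right after~\Cref{thm:mhull-conjunctive}: write $\varphi_1 \equiv c_1 \land \cdots \land c_s$ and $\varphi_2 \equiv c'_1 \land \cdots \land c'_t$ with each clause excluding some state in $B$; then $\varphi_1 \land \varphi_2$ is itself a CNF of such clauses, hence in $\mspan{B}$. I would prefer the monotonicity proof since it is self-contained and does not require recognizing the characterization theorem.
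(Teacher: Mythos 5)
Your proof is correct, but your preferred route is genuinely different from the paper's. The paper proves this lemma via the CNF characterization (\Cref{thm:mhull-conjunctive}): it writes each $\varphi_j$ as the conjunction of \emph{all} overapproximating clauses excluding some $b \in B$, observes that the corresponding clause set for $\varphi_1 \land \varphi_2$ contains both of those sets, and concludes $\mhull{\varphi_1 \land \varphi_2}{B} \implies \varphi_1 \land \varphi_2$ — essentially the ``alternative route'' you sketch at the end. Your primary argument is instead purely order-theoretic: overapproximation (\Cref{lem:mhull-overapproximation}) for one direction, and monotonicity (\Cref{lem:mhull-monotonicity}) applied to $\varphi_1 \land \varphi_2 \implies \varphi_j$ plus the fixpoint property $\mhull{\varphi_j}{B} \equiv \varphi_j$ for the other. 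This is a standard fact about closure operators (monotone, extensive, idempotent maps have meet-closed fixpoint sets), so your version is more elementary and would generalize to any such operator without knowing anything about clauses; the paper's version is arguably more informative here because it exhibits the witnessing CNF directly, which is the representation the surrounding text cares about. One small point to make explicit: your step ``$\varphi_1 \in \mspan{B}$ means $\mhull{\varphi_1}{B} \equiv \varphi_1$'' is the identification of the image of $\mhull{\cdot}{B}$ with its fixpoint set, which the paper builds into \Cref{def:monotone-basis,def:monotone-span} but which formally rests on idempotence (\Cref{lem:mhull-idempotence}); citing that lemma would close the loop.
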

\begin{proof}
By~\Cref{thm:mhull-conjunctive}, $\varphi_1 \equiv \bigwedge{\set{c \ | \ \mbox{$c$ is a clause, } \varphi_1 \implies c \mbox{ and } \exists b \in B. \, b \not\models c}}$, likewise for $\varphi_2$. The set $\set{c \ | \ \mbox{$c$ is a clause, } \varphi_1 \land \varphi_2 \implies c \mbox{, and } \exists b \in B. \, b \not\models c}$ includes the conjuncts associated with both $\varphi_1,\varphi_2$ (and more), and so $\mhull{\varphi_1 \land \varphi_2}{B} \implies \varphi_1 \land \varphi_2$. The other implication is by~\Cref{lem:mhull-overapproximation}.
\end{proof}

\subsection{Abstract Interpretation in the Monotone Span}
For a set of states $B$, we define the abstract domain $\madom{B}=\langle \mspan{B}, \implies, \join_{B}, \false \rangle$, a logical abstract domain~\cite{DBLP:conf/popl/GulwaniMT08} consisting of the set of (propositional) formulas for which $B$ is a monotone basis (\Cref{def:monotone-basis}), ordered by logical implication, with bottom element $\false$. The existence of $\join_B$ relies on~\Cref{lem:mspan-conjunctive-closed} (the least upper-bound is the conjunction of all upper-bounds), and $\false \in \mspan{B}$ because $\mhull{\false}{B}=\false$, seeing that $\monox{\false}{b}=\false$ for every $b$.

To define a Galois connection~\cite{DBLP:conf/popl/CousotC77} between sets of concrete states and formulas in $\mspan{B}$, we use the \emph{concretization} $\gamma(\varphi) = \set{\sigma \, | \, \sigma \models \varphi}$ (in the sequel, we refer to $\gamma$ as the identity function, by our convention of equating formulas with the set of states they represent). %
The best abstraction is expressed by $\malpha{B}(S)=\mhull{S}{B}$:
\begin{lemma}
\label{lem:best-abstraction}
Let $S \subseteq \States$. Then $\mhull{S}{B}$ is the least overapproximation of $S$ in $\mspan{B}$, namely, $\mhull{S}{B} \implies \varphi$ for every $\varphi \in \mspan{B}$ s.t.\ $S \implies \varphi$.
\end{lemma}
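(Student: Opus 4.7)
The proof plan is to establish three facts in sequence, each using a lemma already proved in the excerpt.

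First, I would verify that $\mhull{S}{B}$ itself belongs to $\mspan{B}$, so that it is a legitimate candidate for the least overapproximation in the abstract domain. This is immediate from \Cref{def:monotone-span}, since $\mspan{B}$ is defined as the image of $\mhull{\cdot}{B}$; alternatively, idempotence of the monotone hull (\Cref{lem:mhull-idempotence}) gives $\mhull{\mhull{S}{B}}{B} \equiv \mhull{S}{B}$, which is exactly the condition for membership in $\mspan{B}$ per \Cref{def:monotone-basis}.

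Second, I would invoke \Cref{lem:mhull-overapproximation} to obtain $S \implies \mhull{S}{B}$, which says $\mhull{S}{B}$ is indeed an overapproximation of $S$.

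Third, for the minimality claim, let $\varphi \in \mspan{B}$ with $S \implies \varphi$. Applying monotonicity of the monotone hull (\Cref{lem:mhull-monotonicity}) to this implication yields $\mhull{S}{B} \implies \mhull{\varphi}{B}$. Since $\varphi \in \mspan{B}$, by \Cref{def:monotone-basis} we have $\mhull{\varphi}{B} \equiv \varphi$, and chaining the two gives $\mhull{S}{B} \implies \varphi$, as required.

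There is no substantial obstacle here, since each of the three ingredients (closure of $\mspan{B}$ under the hull operator, the overapproximation property, and monotonicity of $\mhull{\cdot}{B}$) has already been established earlier in the section. The only subtlety worth stating explicitly in the write-up is the convention (noted in \Cref{sec:prelim}) that $S$ is identified with a representing formula, so that applying $\mhull{\cdot}{B}$ and implication between $S$ and a formula $\varphi$ are well-defined.
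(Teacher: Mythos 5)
Your proof is correct, and it takes a genuinely different route from the paper's. You argue abstractly that $\mhull{\cdot}{B}$ is a closure operator---extensive (\Cref{lem:mhull-overapproximation}), monotone (\Cref{lem:mhull-monotonicity}), and idempotent (\Cref{lem:mhull-idempotence})---so that for any $\varphi \in \mspan{B}$ with $S \implies \varphi$, monotonicity gives $\mhull{S}{B} \implies \mhull{\varphi}{B} \equiv \varphi$, the last equivalence being exactly the fixed-point characterization of membership in $\mspan{B}$. The paper instead argues by contradiction through the clausal characterization of \Cref{thm:mhull-conjunctive}: it writes $\varphi$ as a conjunction of clauses $c_i$ each excluding some $a_i \in B$, and shows that $\mhull{S}{B} \notimplies c_i$ would force $\monox{S}{a_i} \notimplies c_i$, contradicting the fact that $c_i$ is an $a_i$-monotone overapproximation of $S$ while $\monox{S}{a_i}$ is the least such. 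Your argument is shorter and more general (it is the standard fact that a closure operator yields best abstractions, and would apply to any operator with these three properties), whereas the paper's proof keeps the clause-level structure in view, which ties the best-abstraction property directly to the lemma-generation mechanics of PDR. One small point worth being careful about in a final write-up: deducing $\mhull{\varphi}{B} \equiv \varphi$ from $\varphi \in \mspan{B}$ does require idempotence (since \Cref{def:monotone-span} defines $\mspan{B}$ as the image of the hull operator, not directly as its fixed points), and you correctly flag this in your first step.
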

\begin{proof}
Since $\varphi \in \mspan{B}$, by~\Cref{thm:mhull-conjunctive}, $\varphi \equiv \bigwedge_{i}{c_i}$ where each clause $c_i$ excludes some $a_i \in B$. If $\malpha{k}(S) \notimplies \varphi$, then there is some $c_i$ and a corresponding $a_i \in B$ that it excludes such that $\malpha{B}(S) \notimplies c_i$. This means that $\monox{S}{b} \notimplies c_i$ for every $b \in B$. But then in particular $\monox{S}{a_i} \notimplies c_i$, even though $S \implies c_i$ and $a_i \not\models c_i$, which is a contradiction to~\Cref{lem:monox-disjunction-cubes} for $\monox{S}{a_i}$.
\end{proof}

\begin{lemma}
There is a Galois connection $(2^{\States}, \subseteq) \galois{\malpha{B}}{\gamma} (\mspan{B}, \implies)$.\footnote{
	Quotient over logical equivalence, this is a Galois \emph{insertion}, as $\malpha{B}(\gamma(\psi))=\mhull{\psi}{B}\equiv\psi$ for $\psi \in \mspan{B}$.
}
\end{lemma}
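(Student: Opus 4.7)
The plan is to verify the Galois connection directly from its defining biconditional: for every $S \subseteq \States$ and $\psi \in \mspan{B}$,
\[
    \malpha{B}(S) \implies \psi \;\Longleftrightarrow\; S \subseteq \gamma(\psi).
\]
Unfolding the definitions, $\malpha{B}(S) = \mhull{S}{B}$ and $\gamma(\psi) = \psi$ (viewing $\psi$ as its set of models), so the biconditional reduces to $\mhull{S}{B} \implies \psi$ iff $S \implies \psi$.

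For the direction ($\Rightarrow$), I would invoke \Cref{lem:mhull-overapproximation}, which gives $S \implies \mhull{S}{B}$; transitivity of implication then yields $S \implies \psi$ from $\mhull{S}{B} \implies \psi$. For the direction ($\Leftarrow$), I would use \Cref{lem:best-abstraction}, which has exactly the required content: $\mhull{S}{B}$ is the least element of $\mspan{B}$ that overapproximates $S$, so whenever $\psi \in \mspan{B}$ satisfies $S \implies \psi$, we also have $\mhull{S}{B} \implies \psi$. Hence both directions reduce to lemmas already established.

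It remains to observe that the connection is well-typed, i.e., that $\malpha{B}$ lands in $\mspan{B}$ and that both maps are monotone. Landing in $\mspan{B}$ is immediate from the definition $\mspan{B} = \{\mhull{\varphi}{B}\}$ together with the idempotence \Cref{lem:mhull-idempotence} (so $\mhull{S}{B} \in \mspan{B}$). Monotonicity of $\malpha{B}$ is \Cref{lem:mhull-monotonicity}, and $\gamma$ is monotone because logical implication preserves the set of models. I do not anticipate a significant obstacle: the real content of the lemma is already packaged in \Cref{lem:mhull-overapproximation} and \Cref{lem:best-abstraction}, and the proof is a short assembly of these two facts plus the routine verification of typing and monotonicity.
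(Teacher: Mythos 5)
Your proposal is correct and matches the paper's proof essentially verbatim: both directions of the adjunction biconditional are discharged by \Cref{lem:mhull-overapproximation} and \Cref{lem:best-abstraction} respectively, with monotonicity of $\malpha{B}$ from \Cref{lem:mhull-monotonicity} and monotonicity of $\gamma$ being routine. The only addition is your explicit well-typedness check via idempotence, which the paper leaves implicit.
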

\begin{proof}
$\malpha{B}$ is monotone by~\Cref{lem:mhull-monotonicity}.
$\gamma$ is also monotone.
Let $\varphi \in \mspan{B}$ and $S \subseteq \States$.
If $\malpha{B}(S) \implies \varphi$ then $S \subseteq \gamma(\varphi)$ since $S \subseteq \mhull{S}{B}$ (\Cref{lem:mhull-overapproximation}).
If $S \subseteq \gamma(\varphi)$ then $\malpha{B}(S) \implies \varphi$ by~\Cref{lem:best-abstraction}.
\end{proof}

\begin{remark}[Disjunctive completion]
When $B = \bkcube_1 \lor \ldots \lor \bkcube_m$
is a disjunction of multiple cubes, the domain is not disjunctively-complete~\cite{POPL:CC79}: if $\varphi_1,\varphi_2 \in \mspan{B}$, it could be that $\varphi_1 \lor \varphi_2 \not\in \mspan{B}$.
However, for a single cube $\bkcube$, the join operation of $\madom{\bkcube}$ is disjunction, as follows from~\Cref{lem:bshouty-mon-mindnf}.
In this case, a definition of the abstraction $\malpha{\bkcube}$ through the representation function $\beta_{\bkcube}(\sigma)=\moncube{\sigma}{\bkcube}$ is straightforward, since $\malpha{\bkcube}(S) = {\bigjoin_{\sigma \in S}}{\beta_{\bkcube}(\sigma)}$ reads as~\Cref{lem:monox-disjunction-cubes}.
\end{remark}

\begin{remark}[As a reduced product domain]
One way to understand $\madom{B}$ when $B = \bkcube_1 \lor \ldots \lor \bkcube_m$ is as a reduced product~\cite{POPL:CC79} of the per-cube domains: $\madom{B}$ (quotient on logical equivalence) is isomorphic to $\bigotimes_{i}{\madom{\bkcube_i}}$.
The Cartesian product domain is over $m$-tuples of formulas, $\bigtimes_{i}{\mspan{\bkcube_i}}$, ordered by $(\varphi^1_1,\ldots,\varphi^1_m) \sleq (\varphi^2_1,\ldots,\varphi^2_m) \iff \bigwedge_{i=1}^{m}{(\varphi^1_i \implies \varphi^2_i)}$, with concretization $\gamma_{\times \bkcube_i}(\varphi_1,\ldots,\varphi_m)=\bigcap_{i=1}^{m}{\gamma(\varphi_i)}$.
The reduced product quotients the Cartesian product w.r.t.\ having the same concretization.
This is isomorphic to $\madom{B}$ because $\gamma_{\times \bkcube_i} = \gamma(\bigwedge_{i=1}^{m}{\varphi_i})$, and if $\varphi_i \in \mspan{\bkcube_i}$ then $\bigwedge_{i=1}^{m}{\varphi_i} \in \mspan{B}$.
\end{remark}

\begin{wrapfigure}{r}{0.33\textwidth}
\vspace{-0.8cm}
\begin{minipage}{0.33\textwidth}
\begin{algorithm}[H]
\caption{Kleene Iterations in $\madom{\bkwrch{k}}$}
\label{alg:eepdr-ai}
\begin{algorithmic}[1]
\begin{footnotesize}
\Procedure{$\Lambda$-PDR}{$\Init$, $\tr$, $\Bad$, $k$}
	\State $i \gets 0$
	\State $\Frameai_{-1} \gets \false$
	\State $\Frameai_0 \gets \malpha{\bkwrch{k}}(\Init)$ $\label{ln:eepdr-ai:frame0}$
	\While{$\Frameai_{i} \notimplies \Frameai_{i-1}$}
		\State $\Frameai_{i+1} = \malpha{\bkwrch{k}}(\postimage{\tr}{\Frameai_i})$ $\label{ln:eepdr-ai:transformer}$
		\State $i \gets i+1$
	\EndWhile
	\State \Return $\Frameai_i$
\EndProcedure
\end{footnotesize}
\end{algorithmic}
\end{algorithm}
\end{minipage}
\vspace{-0.6cm}
\end{wrapfigure} %
\para{$\Lambda$-PDR as Kleene iterations}
\Cref{alg:eepdr-ai} shows Kleene iterations in $\madom{\bkwrch{k}}$ with the best abstract transformer for $(\Init,\tr)$.
The next iterate (\cref{ln:eepdr-ai:transformer}) is always $\Frameai_{i+1}=\malpha{\bkwrch{k}}(\postimage{\tr}{\Frameai_i})$, which exactly matches the relation between successive frames in $\Lambda$-PDR (\Cref{lem:justify-overview-next-frame}).
This means that $\Lambda$-PDR's frames exactly match the Kleene iterates, at least when the initial states are in $\bkwspan{k}$ themselves (in which case the first iterate in~\cref{ln:eepdr-ai:frame0} is simply $\Init$); otherwise there is a difference of at most one frame:
\begin{theorem}
\label{lem:ai-eepdr-sandwich}
$\Frameai_i \subseteq \Frame_{i+1} \subseteq \Frameai_{i+1}$ for every $i$ where $\Frame_{i+1}$ exists.\yotamforlater{assuming that Kleene iterations are always defined, it's not this way in the \Cref{alg:eepdr-ai} but it is in the basic exposition. we don't care?}
Further, if $\Init \in \bkwspan{k}$, then $\Frame_{i}=\Frameai_i$.
\end{theorem}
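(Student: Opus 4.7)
The plan is to prove both sandwich inclusions by a straightforward induction on $i$, whose main ingredients are the recursive characterization $\Frame_{i+1} = \mhull{\postimage{\tr}{\Frame_i}}{\bkwrch{k}}$ established in \Cref{lem:justify-overview-next-frame}, the analogous recurrence $\Frameai_{i+1} = \mhull{\postimage{\tr}{\Frameai_i}}{\bkwrch{k}}$ read off from \cref{ln:eepdr-ai:transformer} of \Cref{alg:eepdr-ai}, plus two monotonicity facts: $\postimage{\tr}{\cdot}$ is monotone on the powerset, and $\mhull{\cdot}{\bkwrch{k}}$ is monotone (\Cref{lem:mhull-monotonicity}). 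The one-frame offset between the two sequences is accounted for entirely by their initializations: $\Frame_0 = \Init$ while $\Frameai_0 = \malpha{\bkwrch{k}}(\Init) = \mhull{\Init}{\bkwrch{k}}$, and by \Cref{lem:mhull-overapproximation} the latter is an overapproximation of the former.

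For the base case ($i=0$), both inclusions follow by a single application of the two monotonicity facts. For $\Frameai_0 \subseteq \Frame_1$, observe $\Init \subseteq \postimage{\tr}{\Init}$ by reflexivity of the reflexive post-image, and apply $\mhull{\cdot}{\bkwrch{k}}$ to get $\Frameai_0 = \mhull{\Init}{\bkwrch{k}} \subseteq \mhull{\postimage{\tr}{\Init}}{\bkwrch{k}} = \Frame_1$. For $\Frame_1 \subseteq \Frameai_1$, use $\Init \subseteq \mhull{\Init}{\bkwrch{k}} = \Frameai_0$ (\Cref{lem:mhull-overapproximation}), pass through $\postimage{\tr}{\cdot}$ and then through $\mhull{\cdot}{\bkwrch{k}}$.

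The inductive step assumes $\Frameai_{i-1} \subseteq \Frame_i \subseteq \Frameai_i$ and derives $\Frameai_i \subseteq \Frame_{i+1} \subseteq \Frameai_{i+1}$. Both inclusions come from applying $\postimage{\tr}{\cdot}$ followed by $\mhull{\cdot}{\bkwrch{k}}$ to the corresponding halves of the inductive hypothesis: from $\Frameai_{i-1} \subseteq \Frame_i$ we get $\Frameai_i = \mhull{\postimage{\tr}{\Frameai_{i-1}}}{\bkwrch{k}} \subseteq \mhull{\postimage{\tr}{\Frame_i}}{\bkwrch{k}} = \Frame_{i+1}$, and symmetrically from $\Frame_i \subseteq \Frameai_i$ we get $\Frame_{i+1} \subseteq \Frameai_{i+1}$. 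The proviso ``where $\Frame_{i+1}$ exists'' means only that $\Lambda$-PDR may have terminated earlier than the Kleene iteration; the inclusions are claimed only at indices where the former is defined, and the induction is unaffected.

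For the ``Further'' clause, the hypothesis $\Init \in \bkwspan{k}$ means $\mhull{\Init}{\bkwrch{k}} \equiv \Init$ (\Cref{def:monotone-span,def:monotone-basis}), so the base becomes $\Frame_0 = \Init = \Frameai_0$ rather than an inclusion, and the same monotonicity step as above propagates exact equality through the recurrence: $\Frame_i = \Frameai_i$ implies $\Frame_{i+1} = \mhull{\postimage{\tr}{\Frame_i}}{\bkwrch{k}} = \mhull{\postimage{\tr}{\Frameai_i}}{\bkwrch{k}} = \Frameai_{i+1}$. There is no substantive obstacle here; the only thing to be careful about is keeping the off-by-one bookkeeping straight between the two recurrences, which is exactly what justifies both halves of the sandwich rather than a single equality in the general case.
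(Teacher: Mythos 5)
Your proof is correct and follows essentially the same route as the paper's: both arguments rest on the recurrence $\Frame_{i+1} = \mhull{\postimage{\tr}{\Frame_i}}{\bkwrch{k}}$ from \Cref{lem:justify-overview-next-frame}, the matching Kleene recurrence, and monotonicity of the monotone hull (\Cref{lem:mhull-monotonicity}), with the one-frame offset traced to the differing initializations. The only cosmetic difference is that you run a single induction on the sandwich pair whereas the paper runs the two inclusions as separate inductions; the content is identical.
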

\begin{proof}
By induction on $i$, first prove that $\Frame_i \subseteq \Frameai_i$ for every $i$ where $\Frame_i$ exists.
Initially, $\Frame_0 = \Init \subseteq \malpha{\bkwrch{k}}(\Init) =\Frameai_0$.
For the step, assume that $\Frame_i \subseteq \Frameai_i$.  %
Using~\Cref{lem:mhull-monotonicity}, this implies that $\mhull{\postimage{\tr}{\Frame_i}}{\bkwrch{k}} \subseteq \mhull{\postimage{\tr}{\Frameai_i}}{\bkwrch{k}}$, which by~\Cref{lem:justify-overview-next-frame} and the Kleene iterations means that $\Frame_{i+1} \subseteq \Frameai_{i+1}$. %
For the other inclusion, likewise, since $\Frameai_0 = \malpha{\bkwrch{k}}(\Init) = \mhull{\Init}{\bkwrch{k}} \subseteq \mhull{\postimage{\tr}{\Init}}{\bkwrch{k}} = \Frame_1$ \iflong(where the inclusion uses~\Cref{lem:mhull-monotonicity})\fi, we have by induction that $\Frameai_i \subseteq \Frame_{i+1}$ for every $i$ s.t.\ $\Frame_{i+1}$ exists.
Similarly, if $\Init \in \bkwspan{k}$ then $\malpha{\bkwrch{k}}(\Init) = \Init$, and by induction $\Frame_{i}=\Frameai_i$ for every $i$.
\end{proof}
We can now relate the number of frames in $\Lambda$-PDR and the number of Kleene iterations:
\begin{corollary}
\label{lem:eepdr-ai-iterations}
$\eepdr(\Init,\tr,\Bad,k)$ (\Cref{alg:eepdr}) converges or fails (\cref{ln:eepdr-restart}) in a frame whose index is at most one greater than the number of Kleene iterations in $\madom{\bkwrch{k}}$ on $(\Init,\tr)$ (\Cref{alg:eepdr-ai}).
\end{corollary}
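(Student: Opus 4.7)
The plan is to derive the corollary directly from the sandwich inequality in \Cref{lem:ai-eepdr-sandwich}, noting that the while-loop termination condition of \Cref{alg:eepdr} is the equality (not strict inclusion) of two consecutive frames, thanks to monotonicity of $\Lambda$-PDR's frames.

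First I would argue that $\Lambda$-PDR's frames are monotonically increasing: $\Frame_i \subseteq \Frame_{i+1}$. This follows from \Cref{lem:justify-overview-next-frame} together with \Cref{lem:mhull-overapproximation}, since $\Frame_i \subseteq \postimage{\tr}{\Frame_i} \subseteq \mhull{\postimage{\tr}{\Frame_i}}{\bkwrch{k}} = \Frame_{i+1}$. Consequently the loop condition $\Frame_i \not\implies \Frame_{i-1}$ fails precisely when $\Frame_i \equiv \Frame_{i-1}$, and the analogous observation for $\Frameai_i$ gives that ``number of Kleene iterations'' means the least $i^*$ with $\Frameai_{i^*} \equiv \Frameai_{i^*-1}$; moreover, $\Frameai_j \equiv \Frameai_{i^*}$ for all $j \geq i^*$.

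Next, I would assume that $\Lambda$-PDR does not hit the restart branch (\cref{ln:eepdr-restart}) in its first $i^*+1$ iterations, since if it does, then the corollary holds trivially (the algorithm fails at index at most $i^*+1$). Under this assumption, $\Frame_0,\ldots,\Frame_{i^*+1}$ are all well-defined, and \Cref{lem:ai-eepdr-sandwich} applies. Instantiating the sandwich at $i = i^*{-}1$ yields $\Frameai_{i^*-1} \subseteq \Frame_{i^*} \subseteq \Frameai_{i^*}$, and using $\Frameai_{i^*-1} \equiv \Frameai_{i^*}$ forces $\Frame_{i^*} \equiv \Frameai_{i^*}$. Instantiating the sandwich at $i = i^*$ yields $\Frameai_{i^*} \subseteq \Frame_{i^*+1} \subseteq \Frameai_{i^*+1}$, and using $\Frameai_{i^*+1} \equiv \Frameai_{i^*}$ forces $\Frame_{i^*+1} \equiv \Frameai_{i^*} \equiv \Frame_{i^*}$.

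Thus $\Frame_{i^*+1} \implies \Frame_{i^*}$, so the while-loop condition of \Cref{alg:eepdr} becomes false at iteration $i^*+1$ and the algorithm returns $\Frame_{i^*+1}$. Since convergence happens at frame index $i^*+1$, and any restart must have occurred even earlier, the convergence-or-failure index is at most $i^*+1$, which is the claim. There is no real obstacle here once the sandwich is in hand; the only subtlety is making sure that monotonicity of the $\Lambda$-PDR frame sequence is spelled out, so that the stopping condition is equivalent to $\Frame_{i^*+1}\equiv\Frame_{i^*}$ rather than something strictly stronger.
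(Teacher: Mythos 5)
Your proof is correct and follows essentially the same route as the paper: both derive the corollary directly from the sandwich in \Cref{lem:ai-eepdr-sandwich} together with the fact that the Kleene iterates are stationary once they reach the fixed point, after dispensing with the restart case. The only cosmetic difference is the final step---the paper concludes that $\Frame_{i^*+1}$ coincides with the abstract lfp, hence is an inductive invariant in $\bkwspan{k}$ whose monotone hull reproduces itself, whereas you squeeze $\Frame_{i^*+1}\equiv\Frame_{i^*}$ directly from two instantiations of the sandwich plus monotonicity of the frames; both are valid.
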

\begin{proof}
Let $i^*$ be the iteration where \Cref{alg:eepdr-ai} converges to the least-fixed point, i.e. $\Frameai_{i^*+1}=\Frameai_{i^*}$.
If \Cref{alg:eepdr} does not terminate with error (\cref{ln:eepdr-restart}) before $i = i^*+1$, by~\Cref{lem:ai-eepdr-sandwich} $\Frameai_i \subseteq \Frame_{i+1} \subseteq \Frameai_{i+1}$ for every $i \leq i^*$, and for $i=i^*$ we have $\Frameai_{i^*} \subseteq \Frame_{i+1} \subseteq \Frameai_{i^*+1}=\Frameai_{i^*}$, so $\Frame_{i^*+1}$ is an inductive invariant.
Therefore, $\postimage{\tr}{\Frame_{i^*+1}} = \Frame_{i^*+1}$, and thus, because $\Frame_{i^*+1} \in \bkwspan{k}$, also $\mhull{\postimage{\tr}{\Frame_{i^*+1}}}{\bkwrch{k}} = \Frame_{i^*+1}$ and the algorithm converges.
\end{proof}
Further, $\Lambda$-PDR converges whenever the Kleene iterations converge to an inductive invariant:
\begin{corollary}
\label{lem:eepdr-lfp}
If there exists an inductive invariant $I \in \bkwspan{k}$ for $(\Init,\tr,\Bad)$, then $\eepdr(\Init,\tr,\Bad,k)$ (\Cref{alg:eepdr}) converges to an inductive invariant.
\end{corollary}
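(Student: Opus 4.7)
The plan is to leverage \Cref{lem:eepdr-ai-iterations}, which already tells us that $\Lambda$-PDR either converges or restarts within one step of when the Kleene iterations in $\madom{\bkwrch{k}}$ stabilize. Since $\madom{\bkwrch{k}}$ is a lattice of propositional formulas over a finite vocabulary modulo logical equivalence, it has finite height, so the Kleene sequence $\Frameai_0 \implies \Frameai_1 \implies \cdots$ necessarily converges. The work left is therefore to rule out the restart branch (\cref{ln:eepdr-restart} of \Cref{alg:eepdr}) and the unsafe branch (\cref{ln:eepdr-unsafe}), under the hypothesis that some inductive invariant $I \in \bkwspan{k}$ exists.

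The main step is to show by induction on $i$ that $\Frame_i \implies I$ for every frame constructed by \Cref{alg:eepdr}. The base case $\Frame_0 = \Init \implies I$ is the initiation property of $I$. For the step, using \Cref{lem:justify-overview-next-frame} we have $\Frame_{i+1} = \mhull{\postimage{\tr}{\Frame_i}}{\bkwrch{k}}$. Applying the induction hypothesis and consecution of $I$ gives $\postimage{\tr}{\Frame_i} \implies \postimage{\tr}{I} \implies I$, and then monotonicity of the hull (\Cref{lem:mhull-monotonicity}) together with $I \in \bkwspan{k}$ (so $\mhull{I}{\bkwrch{k}} \equiv I$) yields $\Frame_{i+1} \implies \mhull{I}{\bkwrch{k}} \equiv I$.

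Once $\Frame_i \implies I$ is established, the key observation is that any state satisfying $I$ cannot reach $\Bad$ in any number of steps (by repeated consecution and $I \implies \neg \Bad$), so $I \land \bkwrch{k} \equiv \false$. Hence $\Init \land \bkwrch{k} \equiv \false$, meaning the algorithm does not declare \textbf{unsafe} in \cref{ln:eepdr-unsafe}. Likewise, $\tr(\Frame_i) \implies \tr(I) \implies I$, so $\tr(\Frame_i) \land \bkwrch{k} \equiv \false$, and the algorithm never triggers the restart in \cref{ln:eepdr-restart}.

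With both failure modes excluded, \Cref{lem:eepdr-ai-iterations} guarantees convergence of the outer loop at some frame $\Frame_N$ with $\Frame_N \implies \Frame_{N-1}$. Combined with $\tr(\Frame_{N-1}) \implies \Frame_N$ (which holds for all frames by the defining equation for $\Frame_{i+1}$ together with \Cref{lem:mhull-overapproximation}) and $\Frame_{N} \implies I \implies \neg \Bad$, the frame $\Frame_N$ is an inductive invariant. The main obstacle is purely notational—pinning down that the ``restart or converge'' dichotomy of \Cref{lem:eepdr-ai-iterations} really does yield convergence here, but the invariant $\Frame_i \implies I$ makes this mechanical.
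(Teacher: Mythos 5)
Your proof is correct and follows essentially the same route as the paper's: the paper's one-line argument (the Kleene iterations converge from below to the abstract lfp, which is below any inductive invariant in $\bkwspan{k}$, then transfer via \Cref{lem:ai-eepdr-sandwich}/\Cref{lem:eepdr-ai-iterations}) is exactly your induction $\Frame_i \implies I$ unpacked. Your version is somewhat more explicit than the paper's in spelling out why the \textbf{unsafe} and \textbf{restart} branches cannot fire, which the paper leaves implicit; that is a welcome bit of extra rigor rather than a divergence.
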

\begin{proof}
\Cref{alg:eepdr-ai} converges from below to the abstract lfp for $(\Init,\tr)$ in $\madom{\bkwrch{k}}$, which from the premise is strong enough to prove safety. Using~\Cref{lem:ai-eepdr-sandwich}, %
the same is true for \Cref{alg:eepdr}.
\end{proof}

\para{Increasing $k$ refines $\madom{\bkwrch{k}}$}
Increasing the backward exploration bound $k$ refines $\madom{\bkwrch{k}}$ by increasing $\bkwspan{k}$ ($\bkwrch{k} \subsetneq \bkwrch{k'}$ implies $\bkwspan{k} \subsetneq \bkwspan{k'}$: every $\varphi \in \bkwspan{k}$ is also $\varphi \in \bkwspan{k'}$, and for instance the clause $\neg \sigma_b \in \bkwspan{k'} \setminus \bkwspan{k}$ %
if the state
$\sigma_b \in \bkwrch{k'} \setminus \bkwrch{k}$). %
Restarting $\Lambda$-PDR with a larger $k$ (\cref{ln:eepdr-increase-k} in~\Cref{alg:eepdr})
thus refines the domain until it includes an inductive invariant that establishes safety. Such a $k$ always exists because the set of all backward reachable states (attained by some finite $k$ in the setting of propositional systems) is sufficient to express the weakest safe inductive invariant.

\para{Efficient convergence}
Unfortunately, the lattice height of $\mspan{B}$ is exponential. For example, all the formulas in the strictly ascending chain of formulas $\set{\vec{x} \leq i}_{0 \leq i \leq 2^n-1}$ over $n$ variables $\vec{x}=x_{n-1},\ldots,x_0$ are in $\mspan{\set{0\ldots0}}$ (see~\Cref{sec:slow-convergence}).
Therefore, to bound the number of iterations we need to consider properties of the transition system, a task on which we embark next.  
\section{Convergence Bounds via Abstract Diameter}
\label{sec:abstract-diameter-all}
In this section and the next, we prove a bound on the number of iterations of $\Lambda$-PDR on a given transition system via the DNF size of a monotonization of the transition relation.
In the current section we assume that $\bkwrch{k}$ can be expressed as a single cube $\bkcube$, and generalize it in~\Cref{sec:hyper-all}.

To formulate the bound, we define a monotonization of the transition relation, which is a formula over $\voc \cup \voc'$. For cubes $c_1$ and $c_2$ over $\voc$, we denote by $\monox{\tr}{(c_1,c_2)}$ the monotonization $\monox{\tr}{a}$ where $a = c_1 \land c_2'$ and $c_2'$ is obtained from $c_2$ by substituting each $p \in \voc$ by the corresponding $p' \in \voc'$.

The monotonization we perform on the pre-state vocabulary $\voc$ uses the \emph{reflection} of the backward reachable cube:
\begin{definition}[Reflection]
For a cube $\bkcube = \ell_1 \land \ldots \land \ell_r$, the \emph{reflection} is $\reflect{\bkcube} = \neg \ell_1 \land \ldots \land \neg \ell_r$.
\end{definition}

Our main theorem in this section is as follows.
\begin{theorem}
\label{thm:abstract-diamter-bound}
Let $(\Init,\tr,\Bad)$ be a transition system, and $\bkwrch{k}=\bkcube$ a cube.
Then $\eepdr(\Init,\tr,\Bad,k)$ converges or fails in a frame whose index  is bounded by $\dnfsize{\monox{\tr}{(\reflect{\bkcube},\bkcube)}}+1$.
\end{theorem}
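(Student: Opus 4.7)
The plan is to bound the number of frames of $\Lambda$-PDR by first reducing to the number of Kleene iterations in $\madom{\bkcube}$ (via \Cref{lem:eepdr-ai-iterations}, which already absorbs the ``$+1$''), and then bounding those iterations by constructing an explicit abstract transition system whose reflexive reachable set at step $i$ coincides with the $i$-th Kleene iterate. The diameter of that abstract system will then be bounded via a pigeonhole argument using the DNF size of its transition relation.

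Concretely, I would set $\absr{\tr} = \monox{\tr}{(\reflect{\bkcube},\bkcube)}$, viewed as a transition relation over $\voc \cup \voc'$. The central technical lemma is that for every $\varphi \in \mspan{\bkcube}$,
\begin{equation*}
    \mhull{\postimage{\tr}{\varphi}}{\bkcube} \;\equiv\; \postimage{\absr{\tr}}{\varphi}.
\end{equation*}
Unpacking the two-vocabulary monotonization, $(\sigma,\sigma') \models \absr{\tr}$ iff there exist $(u,v) \models \tr$ with $u \leq_{\reflect{\bkcube}} \sigma$ and $v \leq_{\bkcube} \sigma'$. Because $\reflect{\bkcube}$ is the bitwise negation of $\bkcube$, the first condition is equivalent to $\sigma \leq_\bkcube u$, so an abstract transition performs a ``move away from $\bkcube$'' on the pre-state, a concrete $\tr$-step, and a second ``move away from $\bkcube$'' on the post-state (matching \Cref{fig:arrow-subsumption}). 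The $\supseteq$ direction of the lemma follows by instantiating $u := \sigma$. The $\subseteq$ direction uses that $\varphi$ is $\bkcube$-monotone, so $\sigma \in \varphi$ and $\sigma \leq_\bkcube u$ imply $u \in \varphi$, letting me replace the abstract pre-move by membership in $\varphi$. Using this lemma together with \Cref{lem:mhull-monotonicity} and induction on $i$, $\Frameai_i$ is exactly the set of states reachable from $\Frameai_0$ in at most $i$ steps of $\absr{\tr}$, so the number of Kleene iterations to convergence equals the reachability diameter of $(\Frameai_0, \absr{\tr})$.

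Next, I bound that diameter by $m := \dnfsize{\absr{\tr}}$. Writing $\absr{\tr} = t_1 \lor \cdots \lor t_m$ in DNF and taking any shortest witness path $\sigma_0 \to \sigma_1 \to \cdots \to \sigma_\ell$ with $\ell > m$, two transitions must be licensed by the same term $t_j$, say at steps $a < b$. Since $t_j$ is a cube, its pre-state conjuncts are satisfied by $\sigma_a$ (used at step $a$) and its post-state conjuncts are satisfied by $\sigma_{b+1}$ (used at step $b$), so $(\sigma_a, \sigma_{b+1}) \models t_j$, producing a shortcut and contradicting minimality of $\ell$. Hence the diameter is at most $m$, and combining with the $+1$ of \Cref{lem:eepdr-ai-iterations} yields the stated bound.

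The main obstacle, and the part that truly exploits the monotone-theoretic machinery of \Cref{sec:monotone}, is the abstract-transition lemma: one must justify the asymmetric monotonization (using $\reflect{\bkcube}$ on the pre-state side and $\bkcube$ on the post-state side) by relying on the invariant that every Kleene iterate lies in $\mspan{\bkcube}$, so that the pre-state ``abstract move'' can be absorbed by $\bkcube$-monotonicity of $\varphi$. Once that correspondence is in hand, the rest is a routine combination of fixed-point transfer in the abstract domain and a standard DNF-size diameter bound.
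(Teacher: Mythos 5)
Your proposal is correct and follows essentially the same route as the paper: it builds the abstract transition system $\absr{\tr}=\monox{\tr}{(\reflect{\bkcube},\bkcube)}$, identifies its $i$-step reachable sets with the Kleene iterates $\Frameai_i$ (using the reflection equivalence $u \leq_{\reflect{\bkcube}} \sigma \iff \sigma \leq_{\bkcube} u$ to absorb the pre-state monotonization into $\bkcube$-monotonicity of the iterates, exactly as in \Cref{lem:moncube-reflect} and \Cref{thm:absract-reach}), bounds the diameter by $\dnfsize{\absr{\tr}}$ via the same cube-decomposition shortcut argument as \Cref{lem:diam-dnf}, and recovers the $+1$ from \Cref{lem:eepdr-ai-iterations}. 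The only cosmetic difference is that the paper routes the reachability correspondence through an intermediate system monotonized only in the post-state, whereas you prove the one-step identity $\postimage{\absr{\tr}}{\varphi} \equiv \mhull{\postimage{\tr}{\varphi}}{\bkcube}$ for $\varphi \in \mspan{\bkcube}$ directly.
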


\begin{example}
\label{ex:even-counter-constant}
\yotamsmall{new}
For an example where~\Cref{thm:abstract-diamter-bound} yields a tight bound, consider a counter over $\vec{x}=x_n,\ldots,x_0$ where $\Init$ is $\vec{x}=00\ldots00$, $\tr$ increments even numbers by two, $\Bad$ is $\vec{x}=10\ldots01$, and $k=0$.
The monotonization $\absr{\tr} = \monox{\tr}{\vec{x}=01\ldots10,\vec{x}'=10\ldots01}$ is $x'_0=0$ (see the calculation below), so $\dnfsize{\monox{\tr}{\vec{x}=01\ldots10,\vec{x}'=10\ldots01}}=1$. By~\Cref{thm:abstract-diamter-bound}, $\Lambda$-PDR converges in $\Frame_2$, and indeed in this example $\Frame_1 = (x_n=0 \land x_0=0)$, and $\Frame_2 = (x_0=0)$ is the inductive invariant (all the even numbers).

To see that indeed $\absr{\tr} = (x'_0=0)$, note that $(01\ldots10,10\ldots0) \in \tr$, the monotonization $\moncube{\vec{x}=01\ldots10,\vec{x}'=10\ldots01}{\vec{x}=01\ldots10,\vec{x}'=10\ldots00)} = (x'_0=0)$ and use~\Cref{lem:monox-disjunction-cubes}; for the other direction, $x'_0=0$ holds in every transition of $\tr$ and remains in every such monotone cube, invoking again~\Cref{lem:monox-disjunction-cubes}.
\end{example}

\begin{example}
\yotamsmall{new}
\label{ex:skip-counter-bounce-polynomial}
An example where~\Cref{thm:abstract-diamter-bound} yields a polynomial yet non-tight bound appears in~\Cref{sec:overview-diameter-bound}.
\end{example}

\para{Outline}
We prove~\Cref{thm:abstract-diamter-bound} by constructing an ``abstract'' transition system whose diameter is the number of iterations required for \Cref{alg:eepdr-ai} to converge (\Cref{sec:abstract-transition}), and bound its diameter %
(\Cref{sec:diameter-bound}).
Throughout, we fix a transition system $(\Init,\tr,\Bad)$ and a backwards bound $k \in \mathbb{N}$, denoting the cube $\bkwrch{k}$ by $\bkcube$.

\subsection{Abstract Transition System}
\label{sec:abstract-transition}
Given $(\Init,\tr,\Bad)$, the \emph{abstract transition system} $(\abs{\Init},\abs{\tr},\abs{\Bad})$ is defined over the same set of states as the original, and extends its transitions:
\begin{definition}[Abstract Transition System]
\label{def:abs-tr} %
The abstract transition system of $(\Init,\tr,\Bad)$ w.r.t.\ $\bkcube$
is defined as a transition system $(\absr{\Init},\absr{\tr},\absr{\Bad})$ over $\States$ with $\absr{\Init} = \monox{\Init}{\bkcube}$, $\absr{Bad} = \Bad$, and $\absr{\tr}=\monox{\tr}{(\reflect{\bkcube},\bkcube)}$.
\end{definition}
The monotonization in $\absr{\tr}$ of the pre-image is understood using the following technical lemma about monotonization w.r.t.\ a reflection.
\begin{lemma}
\label{lem:moncube-reflect}
$\sigma_1 \models \moncube{\sigma_2}{\reflect{\bkcube}} \iff \sigma_2 \models \moncube{\sigma_1}{\bkcube}$ for every cube $b$ and states $\sigma_1,\sigma_2$.
\end{lemma}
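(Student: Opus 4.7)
The plan is to reduce both sides to the $\leq_b$ partial order from \Cref{def:b-monotone-order} and show that the two resulting conditions coincide after a bit-by-bit analysis.

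First, I would unfold both sides using \Cref{def:monox} (and $\moncube{v}{b} = \monox{v}{b}$): since $v$ is a state (full cube), $\sigma \models \moncube{v}{b}$ is equivalent to $v \leq_b \sigma$. Therefore the claim reduces to the equivalence
\[
  \sigma_2 \leq_{\reflect{\bkcube}} \sigma_1 \;\iff\; \sigma_1 \leq_{\bkcube} \sigma_2.
\]
Second, I would unpack each side using the explicit characterization in \Cref{def:b-monotone-order}, which says $v \leq_b x$ iff for every variable $p$ with $v[p] \neq x[p]$ we have $p \in \cubdom{b}$ and $v[p] = b[p]$. The right-hand side therefore states: for every $p$ with $\sigma_1[p] \neq \sigma_2[p]$, we have $p \in \cubdom{\bkcube}$ and $\sigma_1[p] = \bkcube[p]$. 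The left-hand side states: for every $p$ with $\sigma_2[p] \neq \sigma_1[p]$, we have $p \in \cubdom{\reflect{\bkcube}}$ and $\sigma_2[p] = \reflect{\bkcube}[p]$.

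Third, I would use two easy observations about the reflection: $\cubdom{\reflect{\bkcube}} = \cubdom{\bkcube}$, and for every $p \in \cubdom{\bkcube}$, $\reflect{\bkcube}[p] = \neg \bkcube[p]$. Over Boolean variables, for any $p$ with $\sigma_1[p] \neq \sigma_2[p]$ the two disagreeing values are exactly $\bkcube[p]$ and $\neg \bkcube[p]$ (in some order), so the condition ``$\sigma_1[p] = \bkcube[p]$'' is equivalent to ``$\sigma_2[p] = \neg \bkcube[p] = \reflect{\bkcube}[p]$.'' Combined with $\cubdom{\reflect{\bkcube}} = \cubdom{\bkcube}$, the two unpacked conditions become identical literal by literal, yielding the equivalence.

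There is no real obstacle here; the only mild subtlety is the symmetry between the roles of $\sigma_1$ and $\sigma_2$ caused by swapping the base with its reflection, and verifying that this symmetry falls out cleanly from the two-valued nature of the variables. An alternative route, slightly less transparent, would be to start directly from the conjunctive description in \Cref{lem:monox-disjunction-cubes} and compare literal-by-literal which variables appear in $\moncube{\sigma_2}{\reflect{\bkcube}}$ versus $\moncube{\sigma_1}{\bkcube}$, but the $\leq_b$ formulation is more concise.
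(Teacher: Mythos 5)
Your proof is correct and follows essentially the same route as the paper's: both reduce $\sigma \models \moncube{v}{b}$ to the relation $v \leq_b \sigma$ and then observe, variable by variable, that reflection swaps agreement and disagreement with $\bkcube$, so the two conditions coincide (the paper phrases this via a contrapositive on disagreements and appeals to $\reflect{\reflect{\bkcube}}=\bkcube$ for symmetry, while you show the two unpacked conditions are literally identical). Your version is, if anything, slightly more explicit about the variables outside $\cubdom{\bkcube}$, which the paper leaves implicit.
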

\toolong{
\begin{proof}
Suppose that $\sigma_1 \models \moncube{\sigma_2}{\reflect{\bkcube}}$.
Then for every $p$ present in $\reflect{\bkcube}$ (equivalently, in $\bkcube$), $\sigma_1,\reflect{\bkcube}$ \emph{dis}agree on $p$ whenever $\sigma_2,\reflect{\bkcube}$ do. Contrapositively, $\sigma_2,\reflect{\bkcube}$ \emph{agree} on $p$ whenever $\sigma_1,\reflect{\bkcube}$ do. Equivalently, $\sigma_2,\bkcube$ \emph{dis}agree on $p$ whenever $\sigma_1,\bkcube$ do. This shows that $\sigma_2 \models \moncube{\sigma_1}{\bkcube}$. The other direction of the implication is symmetric because $\reflect{\reflect{\bkcube}}=\bkcube$.
\end{proof}
}

The central property of the abstract transition system is that its $i$-reachable state capture executions of iterations in the $\madom{\bkwrch{k}}$ abstract domain.
\begin{lemma}
\label{thm:absract-reach}
Let $\absr{R}_i$ be the set of states reachable in $(\absr{\Init},\absr{\tr},\absr{Bad})$ w.r.t.\ $\bkwrch{k}=\bkcube$ (\Cref{def:abs-tr}) in at most $i$ steps.
Then $\absr{R}_i \equiv \Frameai_{i}$ where $\Frameai_{i}$ is the $i$'th iterate of~\Cref{alg:eepdr-ai} in $\madom{\bkcube}$ on $(\Init,\tr,\Bad)$.
\end{lemma}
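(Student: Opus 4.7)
The plan is to proceed by induction on $i$, reducing the statement to a key lemma that says $\absr{\tr}(S)=\monox{\tr(S)}{\bkcube}$ whenever $S\in\mspan{\bkcube}$.

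For the base case $i=0$, I would note that $\absr{R}_0=\absr{\Init}=\monox{\Init}{\bkcube}$ by~\Cref{def:abs-tr}, and $\Frameai_0=\malpha{\bkcube}(\Init)=\mhull{\Init}{\bkcube}$, which collapses to $\monox{\Init}{\bkcube}$ since $\bkwrch{k}=\bkcube$ is a single cube (by~\Cref{lem:mhull-dnf-base} with $m=1$). So the two sides agree.

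For the inductive step, assume $\absr{R}_i\equiv\Frameai_i$; then $\absr{R}_{i+1}=\absr{R}_i\cup\absr{\tr}(\absr{R}_i)=\Frameai_i\cup\absr{\tr}(\Frameai_i)$, and I need to match this with $\Frameai_{i+1}=\monox{\postimage{\tr}{\Frameai_i}}{\bkcube}=\monox{\Frameai_i\cup\tr(\Frameai_i)}{\bkcube}$. Using the disjunctive distributivity of $\monox{\cdot}{\bkcube}$ over unions (a direct consequence of~\Cref{lem:bshouty-mon-mindnf}), this splits as $\monox{\Frameai_i}{\bkcube}\cup\monox{\tr(\Frameai_i)}{\bkcube}$. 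Since $\Frameai_i\in\mspan{\bkcube}$ (being in the image of $\malpha{\bkcube}$), the first term is just $\Frameai_i$ itself. It remains to prove the key lemma $\absr{\tr}(\Frameai_i)=\monox{\tr(\Frameai_i)}{\bkcube}$.

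For the key lemma, I would unfold $\sigma'\in\absr{\tr}(S)$ via~\Cref{def:monox} applied to $\absr{\tr}=\monox{\tr}{(\reflect{\bkcube},\bkcube)}$: there exist $\sigma\in S$ and $(v,v')\models\tr$ with $v\leq_{\reflect{\bkcube}}\sigma$ and $v'\leq_{\bkcube}\sigma'$. By~\Cref{lem:moncube-reflect}, $v\leq_{\reflect{\bkcube}}\sigma$ is equivalent to $\sigma\leq_{\bkcube}v$. So an abstract step first moves from $\sigma$ away from $\bkcube$ to some $v$, then performs a concrete step to $v'$, then moves farther from $\bkcube$ to $\sigma'$. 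For $\supseteq$, given $\sigma'\in\monox{\tr(S)}{\bkcube}$ with witness $\tilde{v}\in\tr(S)$, $\tilde{v}\leq_{\bkcube}\sigma'$, and $\sigma\in S$ with $(\sigma,\tilde{v})\models\tr$, simply pick $v=\sigma$ and $v'=\tilde{v}$. For $\subseteq$, use that $S\in\mspan{\bkcube}$ means $S$ is $\bkcube$-monotone (by~\Cref{def:monotone-basis}, $S\equiv\mhull{S}{\bkcube}$, and $\mhull{\cdot}{\bkcube}$ is $\bkcube$-monotone by~\Cref{def:b-monotonicity} applied to~\Cref{def:monox}): thus $\sigma\in S$ and $\sigma\leq_{\bkcube}v$ gives $v\in S$, so $v'\in\tr(S)$, and then $v'\leq_{\bkcube}\sigma'$ places $\sigma'\in\monox{\tr(S)}{\bkcube}$.

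The main obstacle is the pre-state reflection in $\absr{\tr}$: one must be careful that the ``extra'' abstraction step on the pre-state (moving $\sigma$ to $v$ before applying $\tr$) does not enlarge the post-image beyond the plain post-state monotonization. The load-bearing step is~\Cref{lem:moncube-reflect}, which converts the $\reflect{\bkcube}$-order on the pre-state into a $\bkcube$-order going the other way, together with $\bkcube$-monotonicity of $S$ furnished by the abstract-domain membership $S\in\mspan{\bkcube}$. Everything else is mechanical rewriting.
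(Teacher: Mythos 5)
Your proof is correct and follows essentially the same route as the paper's: induction on $i$, distributivity of $\monox{\cdot}{\bkcube}$ over unions, and the crucial use of \Cref{lem:moncube-reflect} together with the $\bkcube$-monotonicity of the iterates (from $\Frameai_i\in\mspan{\bkcube}$) to show the pre-state monotonization adds nothing. The only difference is organizational: the paper factors the argument through an intermediate system $\abs{\tr}=\monox{\tr}{(\true,\bkcube)}$ and then proves $\absr{R}_i=\abs{R}_i$, whereas you fold both halves into the single identity $\absr{\tr}(S)=\monox{\tr(S)}{\bkcube}$ for $S\in\mspan{\bkcube}$ --- the same content, equally valid.
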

This will imply that the diameter of the abstract transition system---the minimal $i$ where $i$-reachability converges to all the reachable states---equals the number of iterations needed for convergence of the abstract interpretation in $\madom{\bkwrch{k}}$.

Before proving this connection, let us explain the intuition for the abstract transition system and the relation to the algorithm.

Through~\Cref{lem:monox-disjunction-cubes,lem:moncube-reflect} we can see that a transition $(\sigma,\sigma')$ of $\absr{\tr}$ consists of three steps:
\begin{itemize}
	\item monotonization of $\sigma$ to $\widetilde{\sigma}$ %
($\sigma$ is the ``protector'' state for $\widetilde{\sigma}$); then
	\item a concrete transition $(\widetilde{\sigma},\widetilde{\sigma}') \in \tr$; and
	\item the monotonization of $\widetilde{\sigma}'$ to $\sigma'$ %
($\widetilde{\sigma}'$ is the ``protector'' state for $\sigma'$).
\end{itemize}

The monotonization \emph{after} the concrete transition mimics a step of the algorithm, which computes the post-image and then a monotone overapproximation.
A potentially critical difference between the algorithm and $\absr{\tr}$ is that the transition system performs this state-by-state, while the algorithm computes these operations over sets. The insight is that when $\bkwrch{k}$ is a single cube, the abstraction of $\Lambda$-PDR factors to individual states as well, and so can be captured using an ordinary transition system, so that its $i$-reachable states correspond to iterations of the algorithm, which is the objective of the next lemma.

As the proof shows, the monotonization \emph{before} the concrete transition does not change reachability, because in execution traces of $\absr{\tr}$, this monotonization is absorbed by the monotonization at the end of the previous transition (the first step in the trace is handled by taking the monotonization of the initial states, $\abs{\Init}=\absr{\Init}=\monox{\Init}{b}$).
Even though the monotonization of the pre-state does not change the diameter, it can improve (and never worsen) our diameter \emph{bound}, which is derived in~\Cref{sec:diameter-bound}.

\begin{proof}[Proof of~\Cref{thm:absract-reach}]
We first prove a similar result for a slightly simpler, ``less abstract'' transition system, where monotonization is performed in the post-state but not in the pre-state.
Define a transition system $(\abs{\Init},\abs{\tr},\abs{\Bad})$ over $\States$ by $\abs{\Init} = \monox{\Init}{\bkcube}$, $\abs{Bad} = \Bad$, and
\begin{equation*}
	(\sigma,\sigma') \in \abs{\tr} \iff \exists \sigma''. \ (\sigma,\sigma'') \in \tr \land \sigma' \in \cubemon{\sigma''}{\bkcube}.
\end{equation*}
(In fact, $\abs{\tr}=\monox{\tr}{(\true,\bkcube)}$.)

Denote by $\abs{R}_i$ the set of states reachable in $(\abs{\Init},\abs{\tr},\abs{Bad})$ in at most $i$ steps.
We argue that $\abs{R}_i \equiv \Frameai_{i}$,
by induction on $i$.
Initially, $\abs{R}_0 = \abs{\Init} = \monox{\Init}{\bkcube} = \Frameai_0$.
For the step, by the definition of the abstract system, by the induction hypothesis $\abs{R}_i \equiv \Frameai_i \in \mspan{\bkcube}$. Hence,
\begin{align*}
	\abs{R}_{i+1} &= \postimage{\abs{\tr}}{\abs{R}_i}
	= \abs{R}_i \cup \bigvee_{\sigma'' \in \tr({\abs{R}_i})}{\cubemon{\sigma''}{\bkcube}}
	\underset{\rm \Cref{lem:monox-disjunction-cubes}}{=}
	\abs{R}_i \cup \monox{{\tr}({\abs{R}_i})}{\bkcube}
	\\
	&\underset{\abs{R}_i \in \mspan{\bkcube}}{=}
	\monox{\abs{R}_i}{\bkcube} \cup \monox{{\tr}({\abs{R}_i})}{\bkcube}
	\underset{\rm \Cref{lem:bshouty-mon-mindnf}}{=}
	\monox{\postimage{\tr}{\abs{R}_i}}{\bkcube}
	\underset{\rm ind.}{=}
	\monox{\postimage{\tr}{\Frameai_i}}{\bkcube}
	= \Frameai_{i+1}.
\end{align*} %

It remains to show that $\absr{R}_i = \abs{R}_i$, i.e., that the $i$-reachable states of $\abs{\tr},\absr{\tr}$ coincide (although they are not in general bisimilar).

First, $\abs{\tr} \subseteq \absr{\tr}$. This is because if $(\sigma,\sigma') \in \abs{\tr}$, then by definition there is $\sigma''$ such that $(\sigma,\sigma'') \in \tr$ and $\sigma' \in \moncube{\sigma''}{\bkcube}$. Considering the product monotone order, $(\sigma,\sigma'') \leq_{(\cdot,\bkcube)} (\sigma,\sigma')$,
and so $(\sigma,\sigma'') \in \tr \implies (\sigma,\sigma') \in \monox{\tr}{(\reflect{\bkcube},\bkcube)} = \absr{\tr}$, as required.

Second, we show that for any $S \in \mspan{\bkcube}$ it holds that $\postimage{\absr{\tr}}{S} \subseteq \postimage{\abs{\tr}}{S}$.
$\postimage{\absr{\tr}}{S} = S \cup \absr{\tr}(S)$ and $\postimage{\abs{\tr}}{S} = S \cup \abs{\tr}(S)$, so we need to show that $\absr{\tr}(S) \subseteq \abs{\tr}(S)$.
Let $(\sigma,\sigma') \in \absr{\tr}, \sigma \in S$. By the definition of $\absr{\tr}$ and~\Cref{lem:monox-disjunction-cubes}, there exists $(\widetilde{\sigma},\widetilde{\sigma}') \in \tr$ such that $\sigma \models \moncube{\widetilde{\sigma}}{\reflect{\bkcube}}$ and $\sigma' \models \moncube{\widetilde{\sigma}'}{\bkcube}$.
The former implies, by~\Cref{lem:moncube-reflect}, that $\widetilde{\sigma} \models \moncube{\sigma}{\bkcube}$, hence $\widetilde{\sigma} \in S$ as well (because $S \in \mspan{\bkcube}$). Writing $\sigma'' = \widetilde{\sigma}'$ shows that $(\widetilde{\sigma},\sigma') \in \abs{\tr}$, and hence $\sigma' \in \postimage{\abs{\tr}}{S}$, as required.

The first part of the argument (and induction on $i$) shows that $\abs{R}_i \subseteq \absr{R}_i$.
We have shown that $\abs{R}_i = \Frameai_i$, which in particular implies that
always $\abs{R}_i \in \mspan{\bkcube}$; therefore, the second argument above shows that $\absr{R}_i \subseteq \abs{R}_i$.
The claim follows.
\end{proof}
\begin{corollary}
\label{thm:abstract-diameter-eepdr}
Let $(\absr{\Init},\absr{\tr},\absr{\Bad})$ be the abstract transition system w.r.t.\ $\bkwrch{k}=\bkcube$ (\Cref{def:abs-tr}).
If $(\absr{\Init},\absr{\tr},\absr{\Bad})$ is safe and its reachability diameter is $s$, then
$\eepdr(\Init,\tr,\Bad,k)$ converges in frame at most $s+1$.
If $(\absr{\Init},\absr{\tr},\absr{\Bad})$ reaches a bad state in $s$ steps, then $\eepdr(\Init,\tr,\Bad,k)$ fails (\cref{ln:eepdr-restart}) in frame at most $s+1$.
\end{corollary}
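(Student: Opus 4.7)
The corollary follows by combining the reachability--Kleene correspondence of \Cref{thm:absract-reach} with the algorithmic--abstract links established in \Cref{lem:ai-eepdr-sandwich,lem:eepdr-ai-iterations,lem:eepdr-lfp}.

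For the safe case, my plan is as follows. Diameter $s$ means $\absr{R}_{s+1} = \absr{R}_s$, which by \Cref{thm:absract-reach} translates to $\Frameai_{s+1} = \Frameai_s$; that is, the Kleene iteration in $\madom{\bkwrch{k}}$ reaches its least fixed point at iteration $s$. Safety of the abstract system gives $\Frameai_s \cap \Bad = \emptyset$, so $\Frameai_s \in \bkwspan{k}$ is a safe inductive invariant. Then \Cref{lem:eepdr-lfp} guarantees that $\Lambda$-PDR converges, and \Cref{lem:eepdr-ai-iterations} bounds the index of the converging frame by one greater than the Kleene convergence index, hence by $s+1$.

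For the unsafe case I would argue by contradiction: suppose $\Lambda$-PDR neither converges nor fails through frame $s+1$. The key invariant to establish is that whenever the restart check at \cref{ln:eepdr-restart} is passed at iteration $i$, the next frame $\Frame_{i+1}$ is disjoint from $\bkwrch{k}$. Inductively, $\Frame_i \cap \bkwrch{k} = \emptyset$ (the base case $i=0$ follows from the check at \cref{ln:eepdr-unsafe}), which combined with $\tr(\Frame_i) \cap \bkwrch{k} = \emptyset$ from the restart check yields $\postimage{\tr}{\Frame_i} \cap \bkwrch{k} = \emptyset$. For any $\sigma_b \in \bkwrch{k}$, the clause $\neg \sigma_b$ is then implied by $\postimage{\tr}{\Frame_i}$, so by \Cref{thm:mhull-conjunctive} it appears among the conjuncts of $\Frame_{i+1} = \mhull{\postimage{\tr}{\Frame_i}}{\bkwrch{k}}$ (using \Cref{lem:justify-overview-next-frame}), whence $\sigma_b \notin \Frame_{i+1}$. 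Iterating, $\Frame_{s+1} \cap \bkwrch{k} = \emptyset$. On the other hand, \Cref{lem:ai-eepdr-sandwich} gives $\Frameai_s \subseteq \Frame_{s+1}$; by \Cref{thm:absract-reach} and the hypothesis, $\Frameai_s = \absr{R}_s$ contains a bad state, which lies in $\bkwrch{k}$ since $\Bad \subseteq \bkwrch{k}$ (bad states reach themselves in zero steps). Contradiction.

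The main technical obstacle is establishing the invariant that passing the restart check propagates $\Frame_i \cap \bkwrch{k} = \emptyset$ forward to $\Frame_{i+1}$; once that is in hand, the rest of the argument is a short substitution chain through the earlier lemmas connecting the diameter of $\absr{\tr}$, the reachable sets $\absr{R}_i$, the Kleene iterates $\Frameai_i$, and the frames $\Frame_i$.
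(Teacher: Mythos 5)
Your proof is correct and follows essentially the same route as the paper's: translate the diameter and bad-state-reachability hypotheses on $(\absr{\Init},\absr{\tr},\absr{\Bad})$ into statements about the Kleene iterates via \Cref{thm:absract-reach}, then transfer to $\Lambda$-PDR's frames via \Cref{lem:ai-eepdr-sandwich,lem:eepdr-ai-iterations} (the paper's proof is a three-line appeal to exactly these lemmas, and your restart-check invariant just makes explicit the bookkeeping it leaves implicit). One nit: in the unsafe case the negation of the claim is ``does not fail by frame $s+1$,'' which also allows early convergence, but a converged frame is a fixed point of $\mhull{\postimage{\tr}{\cdot}}{\bkwrch{k}}$, so the containment $\Frameai_s \subseteq \Frame_{s+1}$ still holds with $\Frame_{s+1}$ read as the stabilized frame and your contradiction goes through unchanged.
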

\begin{proof}
From~\Cref{thm:absract-reach}, $\Frameai_{s} \equiv \Frameai_{s+1}$ iff $\absr{R}_{s} \equiv \absr{R}_{s+1}$, %
and the least $s$ in which the latter holds is the diameter. For the unsafe case, $\Frameai_s \cap \Bad \neq \emptyset$ iff $\absr{R}_s \cap \Bad \neq \emptyset$. Apply~\Cref{lem:eepdr-ai-iterations} in both cases to deduce convergence, resp. failure, of $\Lambda$-PDR in frame at most $s+1$.
\end{proof}

\subsection{Diameter Bound via Abstract DNF Size}
\label{sec:diameter-bound}
In this section, we bound the diameter of the abstract transition system %
in order to obtain the convergence bound of~\Cref{thm:abstract-diamter-bound}.
We use a simple, general bound on the diameter of transition systems\iflong, by the DNF size of the transition relation\fi:
\begin{lemma}
\label{lem:diam-dnf}
The reachability diameter of a transition system $(\Init,\tr,\Bad)$ is bounded by $\dnfsize{\tr}$.
\end{lemma}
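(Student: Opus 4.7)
The plan is to use a shortcutting argument based on the pigeonhole principle over terms in the DNF representation of $\tr$. Fix a minimum DNF representation $\tr \equiv t_1 \lor \ldots \lor t_m$ where $m = \dnfsize{\tr}$. Each term $t_i$ is a cube over $\voc \uplus \voc'$, and hence factors into a pre-state cube $t_i^{\voc}$ over $\voc$ and a post-state cube $t_i^{\voc'}$ over $\voc'$, so that $(\sigma,\sigma') \models t_i$ iff $\sigma \models t_i^\voc$ and $\sigma' \models t_i^{\voc'}$.

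The key geometric observation is a \emph{swap lemma}: if $(\sigma_a,\sigma_b) \models t_i$ and $(\sigma_c,\sigma_d) \models t_i$, then also $(\sigma_a,\sigma_d) \models t_i$ and $(\sigma_c,\sigma_b) \models t_i$, since all four pre-states satisfy $t_i^\voc$ and all four post-states satisfy $t_i^{\voc'}$. This immediately yields a shortcut: in any trace $\sigma_0 \to \sigma_1 \to \ldots \to \sigma_n$ of $\tr$, assign to each transition $(\sigma_j,\sigma_{j+1})$ some term $t_{i_j}$ that it satisfies; if $n > m$, then by pigeonhole two distinct transitions, at positions $j_1 < j_2$, use the same term, and the swap lemma gives $(\sigma_{j_1},\sigma_{j_2+1}) \in \tr$, collapsing the subtrace into one transition while still ending at $\sigma_{j_2+1}$.

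The main step is then to conclude by induction on path length. Given any reachable state $\sigma^* = \sigma_n$, we repeatedly apply the shortcut above to obtain a shorter trace from $\Init$ to $\sigma^*$. This process terminates, and produces a trace in which no term $t_i$ is used more than once, hence a trace of length at most $m = \dnfsize{\tr}$. Therefore every reachable state is reachable in at most $\dnfsize{\tr}$ steps, which is the definition of the diameter bound.

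I do not expect any serious obstacle here; the only subtlety is making sure that after a shortcut the \emph{last} state of the trace is preserved (so that we still witness reachability of the intended $\sigma^*$), which is why I prefer to shortcut using $(\sigma_{j_1},\sigma_{j_2+1})$ rather than $(\sigma_{j_1},\sigma_{j_2})$. The argument uses nothing beyond the factorization of a cube over $\voc \uplus \voc'$ into pre- and post-state parts, and does not depend on $\Init$ or $\Bad$.
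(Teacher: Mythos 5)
Your proof is correct and follows essentially the same route as the paper's: label each transition of a trace by a DNF term of $\tr$, use the factorization of a term (cube) over $\voc \uplus \voc'$ into pre-state and post-state parts to shortcut between two transitions labeled by the same term, and conclude that every reachable state is reachable by a trace using each term at most once. The only cosmetic difference is that you make the pigeonhole count and the end-state preservation explicit, which the paper states more briefly.
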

\begin{proof}
Fix a minimal DNF representation of $\tr$. Thinking about each disjunct of $\tr$ as an action $a$, every transition can be labeled by at least one action. Whenever in an execution $\sigma_1,\sigma_2,\ldots$ an action $a$ labels two transitions $\sigma_{i_1} \overset{a}{\rightarrow} \sigma_{i_1+1},\sigma_{i_2} \overset{a}{\rightarrow} \sigma_{i_2+1}$, the segment between the occurrences, $\sigma_{i_1+1},\ldots,\sigma_{i_2}$ can be dropped and the resulting trace is still valid (and terminates at the same state)---this is because if $(\sigma_{i_1},\sigma_{i_1+1}) \models a$ and likewise $(\sigma_{i_2},\sigma_{i_2+1}) \models a$ then also $(\sigma_{i_1},\sigma_{i_2+1}) \models a$, because $a$, which is a cube, can be decomposed to $a_\textit{pre} \land a_\textit{post}$ where all the literals in $a_\textit{pre}$ are in $\voc$ and those in $a_\textit{post}$ are in $\voc'$.
Overall, every state that can be reached from another state can do so by an execution where each action appears at most once, and thus the diameter is bounded by $\dnfsize{\tr}$.
\end{proof}

Combining the above results yields a proof of this section's main theorem:
\begin{proof}[Proof of~\Cref{thm:abstract-diamter-bound}]
By~\Cref{thm:abstract-diameter-eepdr}, the number of iterations before convergence or failure of $\Lambda$-PDR is bounded by %
1 plus
the reachability diameter of $(\absr{\Init},\absr{\tr},\absr{\Bad})$,
which by~\Cref{lem:diam-dnf} is at most $\dnfsize{\monox{\tr}{(\reflect{\bkcube},\bkcube)}}$.
\end{proof}

\para{Complexity}
Finding whether there is an equivalent DNF representation with at most $s$ terms is complete for the second level of the polynomial hierarchy $\psigma{2}$~\cite{DBLP:journals/jcss/Umans01}.
This is on par with deciding whether the diameter is bounded by $s$~\cite{DBLP:journals/tcs/HemaspaandraHTW10} (see also~\cite{schaefer2002completeness}).
Thus the bound in~\Cref{thm:abstract-diamter-bound} is not an efficiently-computable upper bound on the number of frames of $\Lambda$-PDR. Instead, we view the result of~\Cref{thm:abstract-diamter-bound} as a conceptual explanation of how smaller diameters can originate from the abstraction.

\begin{example}
\yotamsmall{new}
\label{ex:skip-counter-pure-exponential}
In the running example from~\Cref{sec:overview} (without the additional transitions in~\Cref{sec:overview-diameter-bound}), \Cref{thm:abstract-diamter-bound} yields a trivial, exponential, bound which is not tight.
Consider the system in~\Cref{fig:skip-counter} restricted to the $\vec{x}$ part (see~\Cref{lem:diameter-bound-project-irrelevant} and the justification in~\Cref{sec:overview-diameter-bound}).
Then $\absr{\tr} = \monox{\tr}{\vec{x}=01\ldots11,\vec{x}'=10\ldots00} = (x_n=1 \land x'_n=0) \lor (x_n=0 \land x'_0=1) \lor (x_n=0 \land x'_n=0 \land \vec{x}' > \vec{x}) \lor (x_n=1 \land x'_n=1 \land \vec{x}' > \vec{x})$, which has an exponential DNF size,\footnote{
	\yotam{proof adapted from an email} $\vec{x}' > \vec{x}$ is unate~\cite{DBLP:journals/jacm/AngluinHK93}---it is closed under turning variables in $\vec{x}$ from $1$ to $0$, and turning variables in $\vec{x}'$ from $0$ to $1$. Hence its unique and minimal DNF representation consists of the disjunction of all its prime implicants~\cite{quine1954two} (a term $t$ is a prime implicant of $\varphi$ if $t \implies \varphi$ but this ceases to hold when dropping any literal from $t$). It suffices to show that there is an exponential number of prime implicants.
	To see this, let $v$ be an assignment to all the variables except the least significant, let $\sigma=(v,0),\sigma'=(v',1)$ (so in $\sigma$, $\vec{x} = 2v$, and in $\sigma'$, $\vec{x} = 2v+1$). Then $(\sigma,\sigma') \models \vec{x}' > \vec{x}$.
	The only prime implicant that can be obtained by dropping literals from $(\sigma,\sigma')$ is the conjunction that includes every $x_i=0$ when $v[x_i]=0$ and $\vec{x}'=1$ when $v[x_i]=1$---dropping any one of these variables would result in a term that is satisfied by $(v[x_i \mapsto 1],0),(v'[x_i \mapsto 0],1)$, which does not satisfy $\vec{x}' > \vec{x}$, and so the resulting term would not be an implicant.
	Thus, the prime implicant associated with $v$ can be used to reconstruct $v$, setting $v[x_i]=0$ when $x_i=0$ is present in the prime implicant, and $v[x_i]=1$ when $x'_i=1$ is present. There are exponentially many choices of $v$ and we have shown that each induces a different prime implicant, and the claim follows.
}
yielding an exponential bound on the number of frames of $\Lambda$-PDR. However, as~\Cref{ex:running-all-frames} shows, $\Lambda$-PDR converges in this case in a constant number of frames.

To see that indeed $\absr{\tr}$ is as stated above, it is easiest to think about the behavior of $\absr{\tr}$ as an abstract transition system (\Cref{def:abs-tr}). \yotam{is this actually right?}
\begin{itemize}
	\item Starting in a state $\sigma$ with $x_n=0$, $\absr{\tr}$ can lead us to any state $\sigma'$ with $x'_n=0$ and $\vec{x}' > \vec{x}$---to reach $\vec{x}'$ the abstraction step turns all the bits below its leading $1$, a concrete step increments the counter, so the resulting state agrees with the leading $1$ and everything else is $0$, and then another abstraction step can generate the other $1$'s present in $\vec{x'}$.

	Additionally, we can reach in the abstraction the state $01\ldots11$, a step then skip and arrives at $10\ldots01$, which is abstracted to all numbers with $x_0=1$.

	These are all the states we may reach this way; the first abstract step cannot change $x_n$, and we cannot arrive at smaller numbers, because both the concrete and abstract steps (which turn $0$'s to $1$'s) can only strictly increase the number.

	\item Starting in a state $\sigma$ with $x_n=1$, $\absr{\tr}$ can lead us to any state with $x'_n=1,\vec{x}' > \vec{x}$, similarly to above for the case $x_n=0$. Additionally, we can abstract to $11\ldots11$, from which a concrete step wraparounds to $00\ldots00$, which then abstracts to $x'_n=0$.

	These are all the states that we may reach this way; we cannot reach smaller numbers with $x'_n=1$, along a similar argument to the case above of $x_n=0$.
\end{itemize}
\end{example}

\section{Convergence Bounds via Abstract Hypertransition Systems}
\label{sec:hyper-all}
In this section, we generalize the results of~\Cref{sec:abstract-diameter-all} to the case that $\bkwrch{k}$ is not expressible as a single cube.
In this case, our bound is the product of monotonizations w.r.t.\ the different cubes that comprise $\bkwrch{k} = \bkcube_1 \lor \ldots \lor \bkcube_m$ in the post-state, and w.r.t.\ (the reflection of) the least cube that contains all $\bkwrch{k}$ in the pre-state, defined as follows:
\begin{definition}
If $\bkwrch{k}=\bkcube_1 \lor \ldots \lor \bkcube_m$, we denote by $\cubejoin{\bkwrch{k}} = \bigcap_{i=1}^{m}{\bkcube_i}$ (as sets of literals) the cube that consists of the literals that appear in all $\bkcube_1,\ldots,\bkcube_m$.
\end{definition}

Fix a representation $\bkwrch{k}=\bkcube_1 \lor \ldots \lor \bkcube_m$.
Our main theorem is as follows:
\begin{theorem}
\label{thm:abstract-hyperdiamter-bound}
Let $(\Init,\tr,\Bad)$ be a transition system.
Then $\eepdr(\Init,\tr,\Bad,k)$ converges or fails in a frame whose index is bounded by
\begin{equation*}
	\prod_{i=1}^{m}{\left(
					\dnfsize{\monox{\tr}{(\reflect{\cubejoin{\bkwrch{k}}},\bkcube_i)}}
					+
					\dnfsize{\monox{\Init}{\bkcube_i}}
					\right)
				}
	+
	1.
\end{equation*}
\end{theorem}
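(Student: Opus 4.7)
The plan is to lift the construction of \Cref{sec:abstract-diameter-all} from an ordinary transition system to a hypertransition system, so as to accommodate the fact that, for a multi-cube $\bkwrch{k} = \bkcube_1 \lor \cdots \lor \bkcube_m$, the best abstraction $\malpha{\bkwrch{k}}$ no longer distributes over union (\Cref{lem:mhull-dnf-base}): a state $\sigma'$ lands in $\mhull{S}{\bkwrch{k}}$ only if, for each $\bkcube_j$ simultaneously, $S$ supplies a protector $\widetilde{\sigma}'_j$ with $\widetilde{\sigma}'_j \leq_{\bkcube_j} \sigma'$ (see \Cref{fig:protected-hull}). So one hypertransition from $\sigma$ to $\sigma'$ should be witnessed by an $m$-tuple of protectors, one per cube, each of which may come from either the initial states or from a concrete $\tr$-successor.

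Concretely, I would define an abstract hypertransition system whose one-step operator, applied to a reachable set $R$, adds a target $\sigma'$ whenever for every $j \in \{1,\ldots,m\}$ there is a witness $\widetilde{\sigma}'_j$ with $\sigma' \models \moncube{\widetilde{\sigma}'_j}{\bkcube_j}$ that either (i) belongs to $\Init$, or (ii) arises as $(\widetilde{\sigma}_j, \widetilde{\sigma}'_j) \in \tr$ from some $\widetilde{\sigma}_j$ with $\sigma \in R$ and $\sigma \models \moncube{\widetilde{\sigma}_j}{\reflect{\cubejoin{\bkwrch{k}}}}$. I would then prove the analogue of \Cref{thm:absract-reach}: the $i$-reachable set of this hypersystem coincides with the $i$-th Kleene iterate $\Frameai_i$ in $\madom{\bkwrch{k}}$. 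The proof mirrors the single-cube case but uses \Cref{lem:mhull-dnf-base} to decompose $\mhull{\postimage{\tr}{\Frameai_i}}{\bkwrch{k}}$ as $\bigwedge_{j=1}^{m}\monox{\postimage{\tr}{\Frameai_i}}{\bkcube_j}$, \Cref{lem:monox-disjunction-cubes} to express each $\monox{\cdot}{\bkcube_j}$ as the disjunction of $\cubemon{\cdot}{\bkcube_j}$ over concrete witnesses, and \Cref{lem:moncube-reflect} to transport the pre-state monotonization past concrete $\tr$-steps. The fact that the pre-state monotonization uses the common (and coarsest) reflection $\reflect{\cubejoin{\bkwrch{k}}}$ is what allows a single $\sigma$ to serve across all $m$ coordinates simultaneously.

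Next, I would generalize the diameter bound of \Cref{lem:diam-dnf} to this conjunctively-structured hypersystem. A hypertransition is labeled by an $m$-tuple of per-coordinate actions, where the $j$-th component is drawn from a per-cube alphabet of size $\dnfsize{\monox{\tr}{(\reflect{\cubejoin{\bkwrch{k}}},\bkcube_j)}} + \dnfsize{\monox{\Init}{\bkcube_j}}$---the two summands corresponding to the alternatives (ii) and (i) above, each bounded via the DNF-size of the appropriate monotonized formula over its pre/post vocabulary slice. A pigeonhole argument generalizing \Cref{lem:diam-dnf} then shows that any execution of length exceeding $\prod_j (\dnfsize{\monox{\tr}{(\reflect{\cubejoin{\bkwrch{k}}},\bkcube_j)}} + \dnfsize{\monox{\Init}{\bkcube_j}})$ must repeat some $m$-tuple label, and the segment between the two occurrences can be spliced out, because the consistency of each coordinate is determined by a cube in disjoint vocabulary slices and therefore survives concatenation. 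Combining this with \Cref{lem:eepdr-ai-iterations}, which supplies the additive ``$+1$'' for the initial-frame off-by-one of \Cref{lem:ai-eepdr-sandwich}, yields the claimed bound.

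The main obstacle will be formalizing the hypersystem so that the per-coordinate factorization of $\absr{\tr}$ is clean enough to support a \emph{product} diameter bound: since the target $\sigma'$ is shared across all $m$ coordinates, the coordinates are not truly independent, and I must argue that the splicing step of the generalized \Cref{lem:diam-dnf} preserves the joint protector constraint. A second delicate point is verifying that the coarser pre-state monotonization by $\reflect{\cubejoin{\bkwrch{k}}}$, rather than by cube-specific reflections $\reflect{\bkcube_j}$, is simultaneously sound for the reachability-to-Kleene correspondence and tight enough to produce the desired DNF-size factor; this relies on the observation that \Cref{alg:eepdr-ai} applies $\tr$ to all of $\Frameai_i \in \bkwspan{k}$ \emph{before} any monotonization, so a common coarser pre-state step suffices and the full fine-grained per-cube reflection is unnecessary.
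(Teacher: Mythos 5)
Your proposal is correct and takes essentially the same route as the paper: it defines the same abstract hypertransition system (per-cube monotonizations of $\tr$ together with an $\Init$ alternative, with the pre-state monotonized by $\reflect{\cubejoin{\bkwrch{k}}}$), proves the same reachability-to-Kleene-iterate correspondence via \Cref{lem:mhull-dnf-base}, \Cref{lem:monox-disjunction-cubes} and \Cref{lem:moncube-reflect}, and obtains the product bound by the same DNF-size/splicing argument on $m$-tuples of per-coordinate cubes over disjoint vocabulary copies, with the $+1$ from \Cref{lem:eepdr-ai-iterations}. The two delicate points you flag (splicing despite the shared post-state, and soundness of the coarser common pre-state monotonization) are resolved exactly as you indicate in the paper's proofs of \Cref{lem:hyperdiam-dnf} and \Cref{thm:hyperabsract-reach}.
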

The reasons for $\monox{\Init}{\bkcube_i}$ and $\cubejoin{\bkwrch{k}}$ will become clear in~\Cref{sec:abstract-hypertransition}.
Often $\Init$ is a cube, in which case $\dnfsize{\monox{\Init}{\bkcube_i}}=1$.

\begin{example}
\label{ex:multiskip-counter-poly}
For an example where~\Cref{thm:abstract-hyperdiamter-bound} yields a polynomial convergence bound,
consider a counter over $\vec{x}=x_n,\ldots,x_0$ (similar in spirit to~\Cref{fig:skip-counter}) with $\Init \ = \ \vec{x}=0$, $\Bad \ = \vec{x} = 10\ldots0$, and $\tr$ that
\begin{inparaenum}[(i)]
	\item skips every multiple of $2^r$ except $\vec{0}=0\ldots0$;
	\item from every state with $x_r = x_{r-1} = \ldots = x_1 = x_0 = 1$ may also ``bounce back'' to a state with the same upper bits ($i > r$) and exactly one lower bit ($i \leq r$) is $1$, or to $\vec{x}=0$ if the upper bits are already $0$; and
	\item transitions from any multiple of $2^r$ but $\vec{0}$ to any other multiple of $2^r$ (including the bad state).
\end{inparaenum}
We call the set of states between consecutive multiples of $\vec{x}=2^r,\vec{x}=2^{r+1}$ a ``segment''.

Assume that $r=n-\textit{polylog}(n)$ (the counter skips relatively few times).
We now compute the bound resulting from the theorem.
For every $k \geq 1$,
$\bkwrch{k}=\bigvee_{i=s}^{n}{\bkcube_i}$ where $\bkcube_i = (x_{r-1}=0 \land \ldots \land x_0=0 \land x_i=1)$ (all multiples of $2^r$ except $0\ldots0$).
$\cubejoin{\bkwrch{k}} = (x_{r-1}=0 \land \ldots \land x_0=0)$ (all multiples of $2^r$).
We find a DNF representation for $\monox{\tr}{(\reflect{\cubejoin{\bkwrch{k}}},\bkcube_i)}$ using~\Cref{lem:bshouty-mon-mindnf} similarly to~\Cref{sec:overview-diameter-bound}:
The number of segments is $2^{n-r}$, and in each segment
the abstraction of the ``bounce back'' transitions subsume the transitions between numbers in the same segments. \yotam{validate me}
This amounts to $\bigO(r2^{n-r})$ terms. \yotam{validate me}
The number of disjuncts $\bkcube_i$ is $n-r$, the number of terms in $\monox{\Init}{\bkcube_i}$ is $1$ because $\Init$ is itself a term, and overall \Cref{thm:abstract-hyperdiamter-bound} yields the bound $\bigO(r (n-r) 2^{n-r})=\textit{poly}(n)$.
\end{example}

\begin{example}
For an example where the theorem yields an exponential convergence bound, consider the same system as in the previous example (\Cref{ex:multiskip-counter-poly}) but when $r=\textit{polylog}(n)$. The above calculation still yields the same bound but now it is $\Omega(2^n)$.
This exponential bound reflects true exponential behavior of the algorithm, because each post-image crosses to at most one new segment, and the abstraction never produces states in a segment beyond those represented in the current frame, mandating at least $2^{n-r}$ frames\iflong.\footnote{
	To see that $\mhull{\Frame_i}{\bkwrch{k}}$ never introduces states in a segment that was not already present in $\Frame_i$, note that because always $0\ldots0 \in \postimage{\tr}{\Frame_i}$ (the initial state), $\monox{\Frame_i}{0\ldots0}=\true$, and thus
	$\mhull{\postimage{\tr}{\Frame_i}}{\bkwrch{k} \cup \set{0\ldots0}}=\mhull{\postimage{\tr}{\Frame_i}}{\bkwrch{k}} \land \monox{\postimage{\tr}{\Frame_i}}{0\ldots0} = \mhull{\postimage{\tr}{\Frame_i}}{\bkwrch{k}}$. Hence, $\mhull{\postimage{\tr}{\Frame_i}}{\bkwrch{k}} = \mhull{\postimage{\tr}{\Frame_i}}{\bkwrch{k} \cup \set{0\ldots0}} = \monox{\postimage{\tr}{\Frame_i}}{x_{r-1}=0 \land \ldots \land x_0=0}$. But the cube $x_{r-1}=0 \land \ldots \land x_0=0$ does not mention the upper bits, and thus the monotonization does not alter these bits, and includes only segments that were already present in $\postimage{\tr}{\Frame_i}$.
}
\else
~(see the extended version~\cite{extendedVersion} for details).
\fi
\end{example}

\begin{remark}
\label{lem:diameter-bound-project-irrelevant}
There are cases where it is possible to apply \Cref{thm:abstract-hyperdiamter-bound} on a restriction of $\tr$ to specific values to some of the variables, and this produces a better bound.
Suppose that for a set of variables $\vec{x}$ and some $\vec{v}$ valuation thereof, $\vec{x}=\vec{v}$ is an inductive invariant for the system, and $\exists b \in \bkwrch{k}. \ b[\vec{x}]=\reflect{\vec{v}}$. (In~\Cref{sec:overview-diameter-bound}, actually $\bkwrch{k} \implies \vec{x}=\reflect{\vec{v}}$.)
Then applying \Cref{thm:abstract-hyperdiamter-bound} to $\restrict{\tr}{\vec{x}\gets\vec{v}}=\tr[\vec{v}/\vec{x}]$, eliminating $\vec{x}$ by substituting $\vec{v}$ for it, also yields an upper bound on the number of iterations of $\Lambda$-PDR. The benefit is that
$\dnfsize{\monox{\restrict{\tr}{\vec{x}\gets\vec{v}}}{(\ldots)}}$
can be smaller than with the original $\tr$.
It is correct to apply the theorem to the restriction and deduce a bound for the original, because under the above premises, always $\Frame_i[\tr] = \Frame_i[\restrict{\tr}{\vec{x}\gets\vec{v}}] \land \vec{x}=\vec{v}$, where $\Frame_i[\tau]$ is the $i$th frame of $\Lambda$-PDR w.r.t.\ transition relation $\tau$.\footnote{
	This is because $\Frame_i \eqdef \Frame_i[\tr] \implies \vec{x}=\vec{v}$ by induction on $i$---since $\vec{x}=\vec{v}$ is an inductive invariant, this holds initially, as well as $\postimage{\tr}{\Frame_i} \implies \vec{x}=\vec{v}$.
	Now $\Frame_{i+1}=\mhull{\postimage{\tr}{\Frame_i}}{\bkwrch{k}} \implies \vec{x}=\vec{v}$ as well, because from the assumption there is $b \in \bkwrch{k}$ s.t.\ $b[\vec{x}]=\reflect{\vec{v}}$, so $\cubemon{\sigma}{b} \implies \vec{x}=\vec{v}$ for every $\sigma \in \postimage{\tr}{\Frame_i}$ because in $\sigma$, $\vec{x}=\vec{v}$, which are opposite in $b$ and thus retained. Hence $\monox{\postimage{\tr}{\Frame_i}}{b} = \bigvee_{\sigma \in \postimage{\tr}{\Frame_i}}{\cubemon{\sigma}{b}} \implies \vec{x}=\vec{v}$, and thus also the conjunction
	$\mhull{\postimage{\tr}{\Frame_i}}{\bkwrch{k}} \implies \vec{x}=\vec{v}$.
}
\end{remark}

\para{Outline}
To prove \Cref{thm:abstract-hyperdiamter-bound}, the first step is to define an analog to the abstract transition system from~\Cref{def:abs-tr} that captures \Cref{alg:eepdr-ai} in the general case.
If $\bkwrch{k}$ has a DNF form with $m$ cubes, this can be done using a hypertransition system of width $m$ (\Cref{sec:abstract-hypertransition}). We then proceed to bound its diameter (\Cref{sec:hyperdiameter-bound}).

\subsection{Abstract Hypertransition System}
\label{sec:abstract-hypertransition}
We consider hypertransition systems that are dual to the classical definition~\cite[e.g.][]{DBLP:conf/lics/LarsenX90}, in that the pre-state, instead of the post-state, of a hypertransition consists of a set of states.
\begin{definition}[Hypertransition System]
A \emph{hypertransition system} (of width $m \in \mathbb{N}$) over $\States$ is a tuple $(\Init,\tr,\Bad)$ where
\begin{itemize}
	\item $\Init \subseteq \States$ is the set of initial states,
	\item $\Bad \subseteq \States$ is the set of bad states, and
	\item $\tr \subseteq \States^m \times \States$ is a hypertransition relation.
		   As a formula, it is defined over $m$ copies of $\voc$ for the pre-states, $\voc_1,\ldots,\voc_m$, and a copy $\voc'$ for the post-state.
\end{itemize}
An \emph{execution} of the system is a tree in which the leaves are states from $\Init$, the relationship between a node $\sigma'$ and its children $\sigma_1,\ldots,\sigma_m$ is that $(\sigma_1,\ldots,\sigma_m,\sigma') \models \tr$. %
A state $\sigma$ is \emph{reachable in at most $i$ steps} if there is an execution with root $\sigma$ and height at most $i$.
A state is \emph{reachable} if it is reachable in at most $i$ steps for some $i \in \mathbb{N}$.
The \emph{reachability diameter} of the system is the least $i$ such that every reachable state is reachable in $i$ steps.
\end{definition}
A standard transition system is a hypertransition system with width $m=1$.

\begin{definition}[Abstract Hypertransition System]
\label{def:abs-hypertr}
The abstract hypertransition system $(\absr{\Init},\absr{\tr},\absr{\Bad})$ of a (standard) transition system $(\Init,\tr,\Bad)$ w.r.t.\ $\bkwrch{k} = \bkcube_1 \lor \ldots \lor \bkcube_m$
is defined over $\States$ by $\absr{\Init} = \mhull{\Init}{\bkwrch{k}}$, $\absr{Bad} = \Bad$, and
\begin{equation*}
	(\sigma_1,\ldots,\sigma_m,\sigma') \in \absr{\tr} \iff
			(\sigma_1,\sigma') \in \monox{\tr \lor \Init'}{(\reflect{\cubejoin{\bkwrch{k}}},\bkcube_1)}
		\land
			\ldots
		\land
			(\sigma_m,\sigma') \in \monox{\tr \lor \Init'}{(\reflect{\cubejoin{\bkwrch{k}}},\bkcube_m)}.
\end{equation*}
\end{definition}

The central property of the abstract hypertransition system is that its $i$-reachable state capture the Kleene iterations in the $\madom{\bkwrch{k}}$ abstract domain.
\begin{lemma}
\label{thm:hyperabsract-reach}
Let $\absr{R}_i$ be the set of states reachable in $(\absr{\Init},\absr{\tr},\absr{Bad})$ w.r.t.\ $\bkwrch{k}$ (\Cref{def:abs-hypertr}) in at most $i$ steps.
Then $\absr{R}_i \equiv \Frameai_{i}$ where $\Frameai_{i}$ is the $i$'th iterate of~\Cref{alg:eepdr-ai} on $(\Init,\tr,\Bad)$ in $\madom{\bkwrch{k}}$.
\end{lemma}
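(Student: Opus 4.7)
The plan is to induct on $i$, paralleling the proof of \Cref{thm:absract-reach} for the single-cube case but adapted to the $m$-ary hypertransition. The base case $i=0$ is immediate: $\absr{R}_0 = \absr{\Init} = \mhull{\Init}{\bkwrch{k}} = \malpha{\bkwrch{k}}(\Init) = \Frameai_0$. For the inductive step, assuming $\absr{R}_i \equiv \Frameai_i$, I decompose $\Frameai_{i+1} = \mhull{\postimage{\tr}{\Frameai_i}}{\bkwrch{k}} = \bigwedge_{j=1}^{m} \monox{\postimage{\tr}{\Frameai_i}}{\bkcube_j}$ via \Cref{lem:mhull-dnf-base}, which aligns exactly with the $m$ per-cube conjuncts in the definition of $\absr{\tr}$.

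For $\absr{R}_{i+1} \subseteq \Frameai_{i+1}$, given $\sigma' \in \absr{R}_{i+1}$ the nontrivial case is that $\sigma'$ arises from a hypertransition $(\sigma_1,\ldots,\sigma_m,\sigma') \in \absr{\tr}$ with $\sigma_j \in \absr{R}_i = \Frameai_i$; for each $j$, the monotonization unfolds to intermediate $\widetilde{\sigma}_j, \widetilde{\sigma}_j'$ satisfying $\widetilde{\sigma}_j \leq_{\reflect{\cubejoin{\bkwrch{k}}}} \sigma_j$, $\widetilde{\sigma}_j' \leq_{\bkcube_j} \sigma'$, and $(\widetilde{\sigma}_j,\widetilde{\sigma}_j') \models \tr \lor \Init'$. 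The key observation I will establish is that $\monox{\Frameai_i}{\cubejoin{\bkwrch{k}}} \subseteq \Frameai_i$: since $\cubejoin{\bkwrch{k}}$ is contained in every $\bkcube_j$ as a set of literals, the order $\leq_{\cubejoin{\bkwrch{k}}}$ is contained in each $\leq_{\bkcube_j}$, so $\monox{\Frameai_i}{\cubejoin{\bkwrch{k}}} \subseteq \bigcap_{j} \monox{\Frameai_i}{\bkcube_j} = \mhull{\Frameai_i}{\bkwrch{k}} = \Frameai_i$, the last equality holding because $\Frameai_i \in \mspan{\bkwrch{k}}$. Combined with \Cref{lem:moncube-reflect} to flip the direction of the monotonization, this forces each $\widetilde{\sigma}_j \in \Frameai_i$, so $\widetilde{\sigma}_j' \in \tr(\Frameai_i) \cup \Init \subseteq \postimage{\tr}{\Frameai_i}$, and $\widetilde{\sigma}_j' \leq_{\bkcube_j} \sigma'$ then delivers $\sigma' \in \monox{\postimage{\tr}{\Frameai_i}}{\bkcube_j}$ for every $j$.

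For the reverse direction $\Frameai_{i+1} \subseteq \absr{R}_{i+1}$, given $\sigma' \in \Frameai_{i+1}$ I will, for each $j$, extract a witness $\widetilde{\sigma}_j' \in \postimage{\tr}{\Frameai_i}$ with $\widetilde{\sigma}_j' \leq_{\bkcube_j} \sigma'$ and then construct a pre-state $\sigma_j \in \Frameai_i = \absr{R}_i$: when $\widetilde{\sigma}_j' \in \tr(\Frameai_i)$, take any $\tr$-pre-image $\widetilde{\sigma}_j \in \Frameai_i$ and set $\sigma_j = \widetilde{\sigma}_j$, which is admissible by reflexivity of $\leq_{\reflect{\cubejoin{\bkwrch{k}}}}$; when $\widetilde{\sigma}_j' \in \Init$, use the $\Init'$ disjunct, which leaves $\sigma_j$ free to be any element of $\Frameai_i$.

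The main obstacle is the remaining case, in which the witness $\widetilde{\sigma}_j'$ lies in $\Frameai_i \setminus (\tr(\Frameai_i) \cup \Init)$, having been introduced into $\Frameai_i$ purely by monotonization at an earlier iteration and hence admitting no concrete $\tr$-pre-image nor membership in $\Init'$. I plan to resolve this by unrolling $\Frameai_i = \mhull{\postimage{\tr}{\Frameai_{i-1}}}{\bkwrch{k}}$ to obtain a further witness $\widetilde{\sigma}_j'' \leq_{\bkcube_j} \widetilde{\sigma}_j'$ in $\postimage{\tr}{\Frameai_{i-1}}$, and using transitivity of $\leq_{\bkcube_j}$ to get $\widetilde{\sigma}_j'' \leq_{\bkcube_j} \sigma'$; iterating this push-back terminates in a witness lying in either $\tr(\Frameai_{i''}) \subseteq \tr(\Frameai_i)$ for some $i'' \leq i$ (by monotonicity of the iterates) or in $\Init$ (reached at the latest when we unroll down to $\Frameai_0 = \mhull{\Init}{\bkwrch{k}}$), at which point the corresponding branch of the construction above applies.
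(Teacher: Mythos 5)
Your proof is correct, but it is structured differently from the paper's. The paper proceeds in two stages: it first introduces an intermediate ``less abstract'' hypertransition system $\abs{\tr}$ that monotonizes only the post-state, proves $\abs{R}_i \equiv \Frameai_i$ by a single algebraic computation (distributing the conjunction of per-cube monotonizations over the disjunction of protector states, so that $\abs{\tr}(\abs{R}_i) = \mhull{\tr(\abs{R}_i) \cup \Init}{\bkwrch{k}}$ via \Cref{lem:monox-disjunction-cubes} and \Cref{lem:mhull-dnf-base}), and only then shows $\absr{R}_i = \abs{R}_i$ by proving the pre-state monotonization is absorbed for sets in $\mspan{\bkwrch{k}}$. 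You instead do a direct element-wise induction on the full system $\absr{\tr}$, chasing witnesses in both directions. The substance is the same: your key observation that $\monox{\Frameai_i}{\cubejoin{\bkwrch{k}}} \subseteq \mhull{\Frameai_i}{\bkwrch{k}} = \Frameai_i$ is exactly the paper's argument (via \Cref{lem:moncube-reflect}) for why the pre-state abstraction does not escape a frame, and your per-cube extraction of witnesses $\widetilde{\sigma}'_j \leq_{\bkcube_j} \sigma'$ is the element-wise shadow of the paper's distributivity identity. The one place where your route costs extra work is the ``remaining case'': because you decompose $\Frameai_{i+1}$ as $\mhull{\postimage{\tr}{\Frameai_i}}{\bkwrch{k}}$ using the \emph{reflexive} post-image, a witness may sit in $\Frameai_i \setminus (\tr(\Frameai_i) \cup \Init)$ and you must unroll through earlier iterates (your transitivity-of-$\leq_{\bkcube_j}$ argument is valid and terminates at $\Frameai_0 = \mhull{\Init}{\bkwrch{k}}$). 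The paper avoids this entirely by using the equivalent form $\Frameai_{i+1} = \malpha{\bkwrch{k}}(\tr(\Frameai_i) \cup \Init)$ from \Cref{sec:ai-background}, so that every witness lies directly in $\tr(\Frameai_i) \cup \Init$ --- which is precisely why the disjunct $\Init'$ appears in \Cref{def:abs-hypertr}. Adopting that form would let you delete the unrolling step without changing anything else.
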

This will imply that the diameter of the abstract hypertransition system equals the number of iterations needed for convergence of the abstract interpretation in $\madom{\bkwrch{k}}$.

Before proving this connection, let us explain the intuition for the abstract hypertransition system and the relation to the algorithm.

Through~\Cref{lem:monox-disjunction-cubes,lem:moncube-reflect} we can see that a hypertransition $(\sigma_1,\ldots,\sigma_m,\sigma')$ of $\absr{\tr}$ consists of three segments:
\begin{itemize}
	\item the monotonization w.r.t.\ $\cubejoin{\bkwrch{k}}$ of each $\sigma_i$ to $\widetilde{\sigma}_i$ ($\sigma_i$ is the ``protector'' state for $\widetilde{\sigma}_i$); then
	\item from each resulting state $\widetilde{\sigma}_i$, either a concrete transition $(\widetilde{\sigma}_i,\widetilde{\sigma}'_i) \in \tr$, or going back to an initial state $\widetilde{\sigma}'_i \in \Init$; and
	\item application of the monotone hull to arrive at $\sigma'$, using $\widetilde{\sigma}'_1,\ldots,\widetilde{\sigma}'_m$ as ``protector'' states, each $\widetilde{\sigma}'_i$ showing that $\sigma'$ is in the monotone overapproximation w.r.t.\ one of the cubes $\bkcube_i$ composing $\bkwrch{k}$.
\end{itemize}

The abstraction in the \emph{last} step connects the reachable states of $\absr{\tr}$ and the Kleene iterations of~\Cref{alg:eepdr-ai}.
The idea is that $\sigma' \in \mhull{\set{\widetilde{\sigma}'_1,\ldots,\widetilde{\sigma}'_m}}{\bkwrch{k}}$, and if $\widetilde{\sigma}_1,\ldots,\widetilde{\sigma}_m \in \Frameai_i$ then by~\Cref{lem:mhull-dnf-base} this implies that $\sigma' \in \mhull{{\tr}({\Frameai_i}) \cup \Init}{\bkwrch{k}}$, which, using the results of~\Cref{sec:ai}, is the next iterate $\Frameai_{i+1}$.
The key point is that the converse also holds---the monotone hull $\mhull{{\tr}({\Frameai_i}) \cup \Init}{\bkwrch{k}}$ ``factors'' to $\mhull{\set{\widetilde{\sigma}'_1,\ldots,\widetilde{\sigma}'_m}}{\bkwrch{k}}$ on all $m$ choices of protectors
$\widetilde{\sigma}'_1,\ldots,\widetilde{\sigma}'_m \in {\tr}({\Frameai_i}) \cup \Init$ (this is reminiscent of Carath\'{e}odory's theorem in convex analysis).
Unlike in~\Cref{def:abs-tr}, in this definition the protector states also come directly from $\Init$, not only from a transition of $\tr$, and essentially this is the reason for $\monox{\Init}{\bkcube_i}$ in the bound of~\Cref{thm:abstract-hyperdiamter-bound}, unlike in~\Cref{thm:abstract-diamter-bound}. This is necessary here to ``mix'' the different protector states, which do not necessarily all originate from the same frame.

As in~\Cref{def:abs-tr}, the abstraction in the \emph{first} step, which uses $\cubejoin{\bkwrch{k}}$, does not change %
reachability
and the diameter, but it can improve our diameter \emph{bound}, which is derived in~\Cref{sec:hyperdiameter-bound}.
This is achieved by also allowing a hypertransition from $(\sigma_1,\ldots,\sigma_m)$ to $\sigma'$ if other steps of $\absr{\tr}$ (concrete/init, monotone hull) can arrive at $\sigma'$ from $(\widetilde{\sigma}_1,\ldots,\widetilde{\sigma}_m)$ and if we know for certain that whenever $\sigma_1,\ldots,\sigma_m$ are reachable, then so are $\widetilde{\sigma}_1,\ldots,\widetilde{\sigma}_m$.
This is the case when $\widetilde{\sigma}_1,\ldots,\widetilde{\sigma}_m$ belong to $\mhull{\sigma_1,\ldots,\sigma_m}{\bkwrch{k}}$, because, as explained above, a monotone hull is performed in the last step of $\absr{\tr}$. %
Reachability is not extended,
because the additional abstraction in the pre-state could be mimicked by an abstraction in the post-state of the previous (abstract) step.
One way to ensure that $\widetilde{\sigma}_1,\ldots,\widetilde{\sigma}_m$ belong to $\mhull{\sigma_1,\ldots,\sigma_m}{\bkwrch{k}}$ is that each $\widetilde{\sigma}_i \in \mhull{\set{\sigma_i}}{\bkwrch{k}}$, namely, that
for every $b_j$, $\sigma_i \in \monox{\sigma_i'}{b_j}$. This is achieved when $\sigma_i \in \monox{\sigma_i'}{\cubejoin{\bkwrch{k}}}$, i.e., $\sigma_i' \in \monox{\sigma_i}{\reflect{\cubejoin{\bkwrch{k}}}}$, which is the reason for using $\cubejoin{\bkwrch{k}}$ in the monotonization of the pre-states.
Yet:
\begin{example}
In some cases, the abstraction using $\cubejoin{\bkwrch{k}}$ is weak, and is another source of non-tightness in the the bound of~\Cref{thm:abstract-hyperdiamter-bound}.
Consider the system from~\Cref{ex:monotone-several-cubes}.
In this case $\cubejoin{\bkwrch{k}}=\true$, resulting in no abstraction of the pre-state vocabulary. The DNF size of $\monox{\tr}{(\true,\bkcube_i)}$ is superpolynomial\iflong\footnote{
	Let $\sigma$ be a state
	$x_i=0$, and $\sigma'$ obtained by applying $\tr$ with two $x_{j_1},x_{j_2} \neq x_i$ variables that are $0$ in $\sigma$.
	A DNF representation must have a term $d_{\sigma}$ such that $(\sigma,\sigma') \models d$. However, $d$ must include all variables $x_r \not\in \set{x_i,x_{j_1},x_{j_2}}$ that are $0$ in $\sigma$; otherwise $\tilde{\sigma}$ where they are turned off also yields $(\tilde{\sigma},\sigma') \models d_{\sigma}$. But this can't be because the $x_r$'s are also $0$ in $\sigma'$, and $\monox{\tr}{(\true,\bkcube_i)}$ doesn't allow turning $1$ bits to $0$ (except for $x_i$).
	Consider now two states $\sigma_1,\sigma_2$ with $x_i=1,x_{j_1}=x_{j_2}=0$, and each has additional $n/2$ variables $x^{s}_{r_1},\ldots,x^{s}_{r_{n/2}} \not\in \set{x_i,x_{j_1},x_{j_2}}$ of value $0$ ($s \in \set{1,2}$), and the rest of the variables are $1$. By the above argument, $d_{\sigma_1}$ requires that all literals $x^{1}_{r_p}$ are $0$, which implies that $\sigma_2 \not\models d_{\sigma_1}$ if $\set{x^{2}_{r_1},\ldots,x^{2}_{r_{n/2}}} \not\subseteq \set{x^{1}_{r_1},\ldots,x^{1}_{r_{n/2}}}$.
	Therefore, every choice of $n/2$ variables yields a non-comparable term, and there are $\binom{n}{n/2} = \Theta\left(4^n/\sqrt{n}\right)$ such choices.
},
\else
~(see the extended version~\cite{extendedVersion}),
\fi
leading to a superpolynomial bound on the number of frames.
However, $\Lambda$-PDR with $k=0$ converges in $\Frame_1$ (see~\Cref{ex:monotone-several-cubes}).
\end{example}

\toolong{
\begin{remark}%
At first sight, it would seem that the product of monotonizations in the bound of~\Cref{thm:abstract-hyperdiamter-bound} is unnecessary, and that one could study the convergence of $\Lambda$-PDR w.r.t.\ $\bkwrch{k}$ by the convergence w.r.t.\ the simpler (and larger) set $\cubejoin{\bkwrch{k}}$. Since $\bkwrch{k} \subseteq \cubejoin{\bkwrch{k}}$, the overapproximation is tighter with $\cubejoin{\bkwrch{k}}$ (\Cref{lem:mhull-monotonicity}), so it would seem that the number of iterations with $\mhull{\cdot}{\bkwrch{k}}$ must be less than with $\mhull{\cdot}{\cubejoin{\bkwrch{k}}}$. However, this is not so. The reason is that $\Lambda$-PDR with $\mhull{\cdot}{\cubejoin{\bkwrch{k}}}$ might be converging to an inductive invariant that is not present in $\mspan{\bkwrch{k}}$: $\bkwrch{k} \subseteq \cubejoin{\bkwrch{k}}$ implies $\mspan{\cubejoin{\bkwrch{k}}} \subseteq \bkwrch{k}$. Thanks to such ``new'' invariants, convergence could be faster with $\mhull{\cdot}{\cubejoin{\bkwrch{k}}}$ than with $\mhull{\cdot}{\bkwrch{k}}$.
\end{remark}

We now formally prove the connection between reachable states of the abstract transition system and iterations of the algorithm.
\begin{proof}[Proof of~\Cref{thm:hyperabsract-reach}]
We first prove a similar result for a slightly simpler, ``less abstract'' hypertransition system, where abstraction is performed in the post-state but not in the pre-state.
Define a hypertransition system $(\abs{\Init},\abs{\tr},\abs{\Bad})$ over $\States$
$\abs{\Init} = \mhull{\Init}{\bkwrch{k}}$, $\abs{Bad} = \Bad$, and
\begin{equation*}
\begin{split}
	(\sigma_1,\ldots,\sigma_m,\sigma') \in \abs{\tr} \iff
	\exists \sigma''_1,\ldots,\sigma''_m. \qquad
	& ((\sigma_1,\sigma''_1) \in \tr \lor \sigma''_1 \in \Init) \land \sigma' \in \cubemon{\sigma''_1}{\bkcube_1} \qquad \land
	\\
	& \ldots
	\\
	& ((\sigma_m,\sigma''_m) \in \tr \lor \sigma''_m \in \Init) \land \sigma' \in \cubemon{\sigma''_m}{\bkcube_m}.
\end{split}
\end{equation*}
(In fact, $\abs{\tr}=\bigwedge_{i=1}^{m}{(\monox{\tr \lor \Init'}{(\true,\bkcube_i)})[\voc_i,\voc']}$).

Denote by $\abs{R}_i$ the set of states reachable in $(\abs{\Init},\abs{\tr},\abs{Bad})$ in at most $i$ steps.
We argue that $\abs{R}_i \equiv \Frameai_{i}$,
by induction on $i$.
Initially, $\abs{R}_0 = \abs{\Init} = \mhull{\Init}{\bkwrch{k}} = \Frameai_0$.
For the step, the set $\abs{R}_{i+1}$ is the set of states reachable in at most $i+1$ steps in the hypertransition system, which is $\abs{R}_{i+1} = \abs{R}_i \cup \tr(\abs{R}_i)$ where $\abs{\tr}(\abs{R}_i)$ is the set of states $\sigma'$ so that there are $\sigma_1,\ldots,\sigma_m \in \abs{R}_i$ such that $(\sigma_1,\ldots,\sigma_m,\sigma') \in \abs{\tr}$. By the definition of $\abs{\tr}$,
\begin{align*}
	{\abs{\tr}}({\abs{R_i}})
	&= \bigvee_{\sigma''_1,\ldots,\sigma''_m \in {\tr}({\abs{R}_i}) \cup \Init}{\left(\cubemon{\sigma''_1}{\bkcube_1} \land \ldots \land \cubemon{\sigma''_m}{\bkcube_m}\right)}
	\\
	\intertext{By distributivity of conjunction over disjunction,}
	\\
	&= \left(\bigvee_{\sigma''_1 \in {\tr}({\abs{R}_i}) \cup \Init}{\cubemon{\sigma''_1}{\bkcube_1}}\right)
		\land
		\ldots
		\land
		\left(\bigvee_{\sigma''_m \in {\tr}({\abs{R}_i}) \cup \Init}{\cubemon{\sigma''_m}{\bkcube_m}}
		\right)
	\\
	\intertext{By~\Cref{lem:monox-disjunction-cubes}, this is}
	&= \monox{{\tr}({\abs{R}_i}) \cup \Init}{\bkcube_1} \land \ldots \land \monox{{\tr}({\abs{R}_i}) \cup \Init}{\bkcube_m}
	\\
	\intertext{By~\Cref{lem:mhull-dnf-base}, this amounts to}
	&= \mhull{{\tr}({\abs{R}_i}) \cup \Init}{\bkwrch{k}}
	\\
	\intertext{which by the induction hypothesis is}
	&= \mhull{{\tr}({\Frameai_{i}}) \cup \Init}{\bkwrch{k}}.
\end{align*}
In the terminology of~\Cref{sec:ai-background}, as $\Frameai_{i}=(\abs{F})^{i+1}(\abs{\bot})$, we have obtained $\abs{\tr}(\abs{R}_i)=\malpha{\bkwrch{k}}(\tr(({\abs{F}_{\Init,\tr}})^{i+1}(\abs{\bot}))\cup \Init)=(\abs{F}_{\Init,\tr})^{i+2}(\abs{\bot})$. Because
$(\abs{F}_{\Init,\tr})^{i+1}(\abs{\bot}) \abs{\sleq} (\abs{F}_{\Init,\tr})^{i+2}(\abs{\bot})$,
the result is that $\abs{R}_i \subseteq \tr(\abs{R}_i)$, and
$\abs{R}_{i+1} = \tr(\abs{R}_i) \cup \abs{R}_i = \tr(\abs{R}_i) = \Frameai_{i+1}$, as required.

It remains to show that $\absr{R}_i = \abs{R}_i$, i.e., that the $i$-reachable states of $\abs{\tr},\absr{\tr}$ coincide.

First, for every set of states $S$ it holds that $\postimage{\abs{\tr}}{S} \subseteq \postimage{\absr{\tr}}{S}$. This is because if $(\sigma_1,\ldots,\sigma_m,\sigma') \in \abs{\tr}$, then by definition there are $\sigma''_1,\ldots,\sigma''_m$ such that $(\sigma_i,\sigma''_i) \in \tr \lor \Init'$ for every $i$ and $\sigma' \in \moncube{\sigma''_i}{\bkcube_i}$.
Considering the product monotone order, $(\sigma_i,\sigma''_i) \leq_{(\cdot,\bkcube_i)} (\sigma_i,\sigma')$, and so $(\sigma_i,\sigma''_i) \in \tr \lor \Init' \implies (\sigma_i,\sigma') \in \monox{\tr \lor \Init'}{(\reflect{\cubejoin{\bkwrch{k}}},\bkcube_i)}$. This for every $i$; by the definition of $\absr{\tr}$ this implies that $(\sigma_1,\ldots,\sigma_m,\sigma') \in \absr{\tr}$. This means that $\sigma' \in \postimage{\absr{\tr}}{S}$ since $\sigma_1,\ldots,\sigma_m \in S$.

Second, we show that for any $S \in \mspan{\bkwrch{k}}$ it holds that $\postimage{\absr{\tr}}{S} \subseteq \postimage{\abs{\tr}}{S}$. Let $(\sigma_1,\ldots,\sigma_m,\sigma') \in \absr{\tr}$, where $\sigma_1,\ldots,\sigma_m \in S$. By the definition of $\absr{\tr}$, $(\sigma_i,\sigma') \models \monox{\tr \lor \Init'}{(\reflect{\cubejoin{\bkwrch{k}}},\bkcube_i)}$, and so there exist $(\widetilde{\sigma}_1,\widetilde{\sigma}'_1),\ldots,(\widetilde{\sigma}_m,\widetilde{\sigma}'_m) \in \tr \lor \Init'$ such that for every $i$,
\begin{itemize}
	\item $\sigma_i \models \moncube{\widetilde{\sigma}_i}{\reflect{\cubejoin{\bkwrch{k}}}}$---by~\Cref{lem:moncube-reflect}, this means that $\widetilde{\sigma}_i \models \moncube{\sigma_i}{\cubejoin{\bkwrch{k}}}$.
	Since $\sigma_i \in S$, by~\Cref{lem:monox-disjunction-cubes}, $\widetilde{\sigma}_i \in \monox{S}{\cubejoin{\bkwrch{k}}}$. It follows that $\widetilde{\sigma}_i \in \mhull{S}{\bkwrch{k}}$ (because $\bkwrch{k} \subseteq \cubejoin{\bkwrch{k}}$ implies $\mhull{S}{\cubejoin{\bkwrch{k}}} \subseteq \mhull{S}{\bkwrch{k}}$ and $\mhull{S}{\cubejoin{\bkwrch{k}}} \equiv \monox{S}{\cubejoin{\bkwrch{k}}}$ by~\Cref{lem:mhull-dnf-base}). From the premise that $S \in \mspan{\bkwrch{k}}$, $\mhull{S}{\bkwrch{k}} \equiv S$, and we have $\widetilde{\sigma}_i \in S$.

	\item $\sigma' \models \moncube{\widetilde{\sigma}'_i}{\bkcube_i}$.
\end{itemize}
Writing $\sigma''_i = \widetilde{\sigma}'_i$ shows that $(\widetilde{\sigma}_1,\ldots,\widetilde{\sigma}_m,\sigma') \in \abs{\tr}$, because for every $i$ we have $\widetilde{\sigma}_i \in S$, $(\widetilde{\sigma}_i,\widetilde{\sigma}'_i) \in \tr \lor \Init'$, and $\sigma' \models \moncube{\widetilde{\sigma}'_i}{\bkcube_i}$.
This shows that $\sigma' \in \postimage{\abs{\tr}}{S}$, as required.

The first part of the argument (and induction on $i$) shows that $\abs{R}_i \subseteq \absr{R}_i$.
We have shown that $\abs{R}_i = \Frameai_i$, which in particular implies that
always $\abs{R}_i \in \bkwspan{k}$; therefore, the second argument above shows that $\absr{R}_i \subseteq \abs{R}_i$.
The claim follows.
\end{proof}
}
\begin{corollary}
\label{thm:abstract-hyperdiameter-eepdr}
Let $(\absr{\Init},\absr{\tr},\absr{\Bad})$ be the abstract hypertransition system w.r.t.\ $\bkwrch{k}$ (\Cref{def:abs-hypertr}).
If $(\absr{\Init},\absr{\tr},\absr{\Bad})$ is safe and its reachability diameter is $s$, then
$\eepdr(\Init,\tr,\Bad,k)$ converges in frame at most $s+1$.
If $(\absr{\Init},\absr{\tr},\absr{\Bad})$ reaches a bad state in $s$ steps, then $\eepdr(\Init,\tr,\Bad,k)$ fails (\cref{ln:eepdr-restart}) in frame at most $s+1$.
\end{corollary}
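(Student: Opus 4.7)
The plan is to mirror the proof of the analogous \Cref{thm:abstract-diameter-eepdr} from the single-cube case, substituting \Cref{thm:hyperabsract-reach} for \Cref{thm:absract-reach} as the bridge between the abstract system and the Kleene iterations in $\madom{\bkwrch{k}}$. Concretely, I would first invoke \Cref{thm:hyperabsract-reach} to identify, for every $i$, the $i$-reachable set $\absr{R}_i$ of the abstract hypertransition system with the $i$th iterate $\Frameai_i$ of \Cref{alg:eepdr-ai}. Under this identification, the diameter $s$ of $(\absr{\Init},\absr{\tr},\absr{\Bad})$ is exactly the least $s$ with $\absr{R}_s \equiv \absr{R}_{s+1}$, which transports to $\Frameai_s \equiv \Frameai_{s+1}$---i.e., Kleene iterations in $\madom{\bkwrch{k}}$ converge at iteration $s$.

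For the safe branch, convergence of the Kleene iterations at iteration $s$ combined with \Cref{lem:eepdr-ai-iterations} (which says $\Lambda$-PDR terminates at most one frame after the Kleene iterations converge) yields termination of $\eepdr(\Init,\tr,\Bad,k)$ in a frame of index at most $s+1$. Since the abstract system is assumed safe and $\absr{R}_s$ captures the complete set of reachable abstract states, \Cref{thm:hyperabsract-reach} gives $\Frameai_s \cap \Bad = \emptyset$, so the algorithm converges (rather than restarts) in this frame.

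For the unsafe branch, if the abstract hypertransition system reaches a bad state within $s$ steps, then $\absr{R}_s \cap \Bad \neq \emptyset$, and via \Cref{thm:hyperabsract-reach} we get $\Frameai_s \cap \Bad \neq \emptyset$. By \Cref{lem:ai-eepdr-sandwich}, $\Frameai_s \subseteq \Frame_{s+1}$ as long as $\Frame_{s+1}$ exists, so $\Frame_{s+1}$ also intersects $\Bad$; this triggers the restart in \cref{ln:eepdr-restart} of \Cref{alg:eepdr} at frame $s+1$ at the latest (or earlier, if a prior frame already intersects $\Bad$). Applying \Cref{lem:eepdr-ai-iterations} in this case yields the stated bound on the index at which failure is reported.

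No new technical ingredient is needed beyond the two lemmas cited: \Cref{thm:hyperabsract-reach} does all the heavy lifting of relating the hypertransition-system semantics to Kleene iterations, and \Cref{lem:eepdr-ai-iterations} closes the one-frame gap between $\Lambda$-PDR and the abstract-interpretation view. The only mild subtlety---and the step I would double-check carefully---is the unsafe case: one needs to verify that intersection with $\Bad$ at iterate $\Frameai_s$ actually forces the restart condition of \Cref{alg:eepdr} by frame $s+1$, using the sandwich $\Frameai_i \subseteq \Frame_{i+1} \subseteq \Frameai_{i+1}$ to rule out earlier spurious convergence.
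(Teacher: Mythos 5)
Your proposal is correct and follows essentially the same route as the paper: the paper's proof likewise transports the diameter condition $\absr{R}_s \equiv \absr{R}_{s+1}$ (resp.\ $\absr{R}_s \cap \Bad \neq \emptyset$) to the Kleene iterates via \Cref{thm:hyperabsract-reach} and then applies \Cref{lem:eepdr-ai-iterations} to close the one-frame gap, exactly mirroring the proof of \Cref{thm:abstract-diameter-eepdr}. Your extra care with the sandwich inclusion in the unsafe branch is a harmless elaboration of what the paper leaves implicit.
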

\toolong{
\begin{proof}
Follows from~\Cref{thm:hyperabsract-reach} similarly to the proof of~\Cref{thm:abstract-diameter-eepdr} from~\Cref{thm:absract-reach}.
\end{proof}
}

\subsection{Hyperdiameter Bounds via a Joint Abstract Cover}
\label{sec:hyperdiameter-bound}

In this section, we bound the diameter of the abstract transition in order to obtain the convergence bound of~\Cref{thm:abstract-hyperdiamter-bound}.
The proof is based on a diameter bound similar to the case of standard transition systems.
\begin{lemma}
\label{lem:hyperdiam-dnf}
The reachability diameter of a hypertransition system $(\Init,\tr,\Bad)$ is bounded by $\dnfsize{\tr}$.
\end{lemma}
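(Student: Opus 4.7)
My plan is to mimic the proof of \Cref{lem:diam-dnf} for ordinary transition systems, replacing the linear ``shortcut along an execution'' argument with a tree-pruning argument on execution trees.

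First I would fix a minimal DNF representation $\tr = a_1 \lor \ldots \lor a_N$ with $N = \dnfsize{\tr}$ and label every hypertransition $(\sigma_1, \ldots, \sigma_m, \sigma') \models \tr$ by some action $a_j$ that it satisfies. The crucial structural property I would record is that, since each $a_j$ is a cube over the disjoint vocabularies $\voc_1 \uplus \ldots \uplus \voc_m \uplus \voc'$, it splits as $a_j = a^{(1)}_j \land \ldots \land a^{(m)}_j \land a^{\,\prime}_j$ with $a^{(i)}_j$ over $\voc_i$ and $a^{\,\prime}_j$ over $\voc'$. Consequently, whenever two hypertransitions $(\sigma^u_1, \ldots, \sigma^u_m, \sigma^u)$ and $(\sigma^v_1, \ldots, \sigma^v_m, \sigma^v)$ share an action label $a_j$, the ``crossed'' tuple $(\sigma^v_1, \ldots, \sigma^v_m, \sigma^u)$ also satisfies $a_j$, and is therefore itself a legal hypertransition.

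Next I would take any reachable state $\sigma$ together with a minimum-height execution tree $T$ for $\sigma$, and argue that its height is at most $N$. Suppose for contradiction that $T$ has height $h > N$; pick a root-to-leaf path of length $h$ in $T$, which passes through $h$ internal nodes, each carrying an action label. By pigeonhole, two distinct internal nodes $u$ (at smaller depth) and $v$ (strictly below $u$ along this path) carry the same label $a_j$. Using the crossing observation above, I replace the whole subtree of $T$ rooted at $u$ by the hypertransition $(\sigma^v_1, \ldots, \sigma^v_m, \sigma^u)$ followed by the subtrees of $T$ rooted at $v$'s children $\sigma^v_1, \ldots, \sigma^v_m$. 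This yields a valid execution tree for $\sigma^u$, which I splice back into $T$ to produce a valid execution tree for $\sigma$.

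Finally I would verify that this shortcut strictly decreases the height. The new subtree at $u$ has height $1 + \max_i \mathrm{height}(T_{\sigma^v_i}) = \mathrm{height}(T_v)$, whereas the original subtree at $u$ had height at least $\mathrm{height}(T_v) + (d_v - d_u) \geq \mathrm{height}(T_v) + 1$, so the overall height of $T$ decreases by at least one, contradicting minimality. Hence $h \leq N = \dnfsize{\tr}$, as required. The one point needing care — and the main obstacle — is the tree bookkeeping: a path of length $h$ from the root to a leaf (initial state) indeed visits exactly $h$ internal nodes each carrying a hypertransition, so the pigeonhole count is $h$ versus $N$; and the replacement preserves the property that all leaves remain in $\Init$ (the leaves of the new subtree are the leaves of $T_{\sigma^v_1}, \ldots, T_{\sigma^v_m}$, which are already in $\Init$).
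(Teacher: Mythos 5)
Your proof follows essentially the same route as the paper's: label each hypertransition by a term of a minimal DNF representation of $\tr$, use the fact that each term is a cube over the disjoint copies $\voc_1,\ldots,\voc_m,\voc'$ to ``cross'' two equally-labeled hypertransitions, and shortcut a root-to-leaf path that repeats a label. The one step that does not close as written is the final contradiction: replacing the subtree at $u$ shortens only the root-to-leaf paths passing through $u$, so if another path of length $h$ avoids $u$, the overall height of $T$ is unchanged and minimality of the height is not contradicted. This is repaired by a standard bookkeeping tweak—minimize the number of nodes (or the pair of height and node count, lexicographically): each replacement strictly decreases the node count and never increases the height, so iterating terminates in an execution tree in which no root-to-leaf path repeats a label, whence the height is at most $\dnfsize{\tr}$.
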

\toolong{
\begin{proof}
Fix a minimal DNF representation of $\tr$. Thinking about each disjunct of $\tr$ as an action, every transition from $m$ children to a parent can be labeled by at least one action.
Consider a path from the root to the leaves in an execution tree. With these actions, if an action $a$ labels two (hyper)transitions $\sigma_{i_1} \overset{a}{\rightarrow} \sigma_{i_1+1},\sigma_{i_2} \overset{a}{\rightarrow} \sigma_{i_2+1}$, the segment of the tree between the occurrences, $\sigma_{i_1+1},\ldots,\sigma_{i_2}$ can be dropped, replacing the and the resulting trace is still valid (and terminates at the same state)---this is because if $(\sigma^1_{i_1},\ldots,\sigma^m_{i_1},\sigma_{i_1+1}) \models a$ and likewise $(\sigma^1_{i_2},\ldots,\sigma^m_{i_2},\sigma_{i_2+1}) \models a$ then also $(\sigma^1_{i_1},\ldots,\sigma^m_{i_1},\sigma_{i_2+1}) \models a$, because $a$, which is a cube, can be decomposed to $a_{\textit{pre}_1} \land \ldots \land a_{\textit{pre}_m} \land a_\textit{post}$ where all the literals in $a_{\textit{pre}_i}$ are in the $i$'th pre-state copy $\voc_i$ and those in $a_\textit{post}$ are in $\voc'$.
Overall, every state that can be reached from a set of leaf states can do so by an execution where each action appears at most once on each path of the tree, and thus the diameter is bounded by $\dnfsize{\tr}$.
\end{proof}
}

\toolong{
\begin{proof}[Proof of~\Cref{thm:abstract-hyperdiamter-bound}]
Denote the bound in the theorem by $q + 1$.
From the distributivity of the conjunction in $\absr{\tr}$, $\dnfsize{\absr{\tr}} \leq \prod_{i=1}^{m}{\dnfsize{\monox{\tr \lor \Init'}{(\reflect{\cubejoin{\bkwrch{k}}},\bkcube_i)}}}$,
and
\begin{align*}
\dnfsize{\monox{\tr \lor \Init'}{(\reflect{\cubejoin{\bkwrch{k}}},\bkcube_i)}}
&\underset{\rm \Cref{lem:bshouty-mon-mindnf}}{=}
\dnfsize{\monox{\tr}{(\reflect{\cubejoin{\bkwrch{k}}},\bkcube_i)}
	\lor \monox{\Init'}{(\reflect{\cubejoin{\bkwrch{k}}},\bkcube_i)}}
\\
&\leq
	\dnfsize{\monox{\tr}{(\reflect{\cubejoin{\bkwrch{k}}},\bkcube_i)}}
	+
	\dnfsize{\monox{\Init'}{(\reflect{\cubejoin{\bkwrch{k}}},\bkcube_i)}}
\\
&=
	\dnfsize{\monox{\tr}{(\reflect{\cubejoin{\bkwrch{k}}},\bkcube_i)}}
	+
	\dnfsize{\monox{\Init}{\bkcube_i}},
\end{align*}
overall yielding that $\dnfsize{\absr{\tr}} \leq q$.

By~\Cref{thm:abstract-hyperdiameter-eepdr}, the number of iterations before convergence or failure of $\Lambda$-PDR is bounded by %
1 plus
the reachability diameter of $(\absr{\Init},\absr{\tr},\absr{\Bad})$,
which by~\Cref{lem:hyperdiam-dnf} is at most $q$.
\end{proof}
}

\section{Forward Reachability in $\Lambda$-PDR and Others}
\label{sec:itp-friends}
This section highlights the importance of the successive overapproximation embodied in the Kleene iterations of $\Lambda$-PDR by contrasting $\Lambda$-PDR with the treatment of forward reachability in other invariant inference algorithms.

\para{Exact forward reachability}
Exact forward reachability iterates $R_0=\Init, R_{i+1}=\postimage{\tr}{R_i}$, so that $R_i$ is the set of states reachable in at most $i$ steps (without any overapproximation).
We have shown that in some cases $\Lambda$-PDR can converge in a significantly lower number of iterations than exact forward reachability, stated formally in the following lemma.
\begin{lemma}
There exists a family of transition systems $(\Init,\tr,\Bad)$ over $\voc$ with $\card{\voc}=n$ and $k=\bigO(1)$ such that $\eepdr(\Init,\tr,\Bad,k)$ converges in $\textit{poly}(n,k)$ iterations, whereas exact forward reachability converges in $\Omega(2^n)$ iterations.
\end{lemma}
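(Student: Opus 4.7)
The plan is to exhibit the skip-counter family from~\Cref{fig:skip-counter} (parameterized by $n$) as the witness family, and to invoke the analysis already performed earlier in the paper for both sides of the comparison. Concretely, I will take $\voc = \{x_0,\ldots,x_n,y_0,\ldots,y_n,z\}$, the transition relation encoding \texttt{increase\_x}/\texttt{increase\_y} as in~\Cref{fig:skip-counter}, and $k=1$.

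For the upper bound on $\Lambda$-PDR, I would simply appeal to~\Cref{ex:running-all-frames}: with $k=1$, $\bkwrch{k}$ is a single cube, the calculation there produces $\Frame_1,\Frame_2,\Frame_3$ explicitly using~\Cref{thm:mhull-conjunctive,lem:mhull-dnf-base,lem:bshouty-mon-mindnf}, and shows that $\Frame_3$ is already an inductive invariant, so $\Lambda$-PDR converges in a constant number of iterations independent of $n$. This is $\textit{poly}(n,k)$ as required.

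For the lower bound on exact forward reachability, the key observation is that after $i$ iterations the set $R_i$ is exactly the set of concrete states reachable in at most $i$ steps. I would argue that in the skip-counter, starting from $\vec{x}=\vec{y}=0,z=0$ and reaching a state such as $\vec{x}=10\ldots01, \vec{y}=0\ldots0, z=0$ requires $\Omega(2^n)$ steps, because the only way for $\vec{x}$ to reach values with $x_n=1$ is via the \texttt{increase\_x} transition that increments $\vec{x}$ by one (modulo the single skip over $10\ldots0$), and this transition can only fire while $z=0$ (and $z$ is never modified by the transitions shown, so $z=0$ throughout any execution starting from the initial state). Hence the number of \texttt{increase\_x} steps needed to reach $\vec{x}=10\ldots01$ is exactly $2^n$, proving that $R_{i}$ is a strict subset of $R_{i+1}$ for $i < 2^n$, and therefore exact forward reachability requires $\Omega(2^n)$ iterations to converge.

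The only subtle step will be justifying the lower bound rigorously, since one must confirm that no transition decreases the counter (so shortcut paths are impossible) and that the ``skip'' over $\vec{x}=10\ldots0$ does not save more than a constant factor. Both follow directly from the code in~\Cref{fig:skip-counter}: \texttt{increase\_x} only increments (modulo $2^{n+1}$) and adds a single extra increment when landing on $10\ldots0$, while \texttt{increase\_y} leaves $\vec{x}$ untouched. Combining the constant-iteration bound for $\Lambda$-PDR with the $\Omega(2^n)$ bound for exact forward reachability yields the lemma. I do not foresee a genuine obstacle; the bulk of the work is already carried out in~\Cref{ex:running-all-frames} and the discussion around~\Cref{eq:skip-counter-invariant}.
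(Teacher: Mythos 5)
Your proposal is correct and is essentially identical to the paper's proof, which simply points to the skip-counter running example (\Cref{fig:skip-counter}), the constant-frame computation in~\Cref{ex:running-all-frames}, and the observation in~\Cref{sec:overview} that exact forward reachability needs $\Omega(2^n)$ iterations because states such as $\vec{x}=10\ldots01$ are only reachable by exponentially long paths. Your added care in justifying the lower bound (monotonicity of the counter, $z$ never changing) is a harmless elaboration of the same argument.
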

\begin{proof}
See e.g.~\Cref{sec:overview-successive} and~\Cref{ex:running-all-frames}. %
\end{proof}
This gap reflects a gap between the diameter of the original system $(\Init,\tr)$ and the diameter of the abstract system $(\absr{\Init},\absr{\tr})$ (\Cref{def:abs-tr,thm:abstract-diameter-eepdr}).

\para{Dual interpolation}
\label{sec:dual-itp-compare}
The essence of \emph{interpolation-based inference} (ITP)~\cite{DBLP:conf/cav/McMillan03} is generalizing from proofs of \emph{bounded} unreachability.
We consider the time-dual~\cite[e.g.][Appendix A]{DBLP:journals/pacmpl/FeldmanISS20} of this approach, generalizing from bounded unreachabilty \emph{from} the initial states, rather than unreachability \emph{to} the bad states,
in line with our focus here on the treatment of forward reachability.
Specifically, \Cref{alg:dual-itp-termmin} is based on (the time-dual of) a model-based ITP algorithm~\cite{DBLP:conf/hvc/ChocklerIM12,DBLP:conf/lpar/BjornerGKL13} whose generalization procedure was inspired by PDR.

\iflong
\begin{wrapfigure}{r}{0.45\textwidth}
\vspace{-0.4cm}
\begin{minipage}{0.45\textwidth}
\begin{algorithm}[H]
\caption{Dual Model-Based ITP, based on~\cite{DBLP:conf/hvc/ChocklerIM12,DBLP:conf/lpar/BjornerGKL13}}
\label{alg:dual-itp-termmin}
\begin{algorithmic}[1]
\begin{footnotesize}
\Procedure{Dual-Model-Based-ITP}{$\Init$,$\tr$,$\Bad$,$s$}
	\State $\varphi \gets \neg \Bad$ $\label{ln:dual-itp-frame0}$
    \While{$\varphi$ not inductive}
		\State take $\sigma_b \in \varphi$ s.t.\ $\tr(\sigma_b) \not\subseteq \varphi$ $\label{ln:dual-itp-cex}$
		\If{$\sigma_b \in \mathcal{R}_s$} $\label{ln:dual-itp-restart}$
			\State \textbf{restart} with larger $s$
		\EndIf
		\State take minimal $c \subseteq \neg \sigma_b$ s.t.\ $\mathcal{R}_s \implies c$  $\label{ln:dual-itp-bmc}$
		\State $\varphi \gets \varphi \land c$ $\label{ln:dual-itp-learn}$
	\EndWhile
	\State \Return $\varphi$
\EndProcedure
\end{footnotesize}
\end{algorithmic}
\end{algorithm}
\end{minipage}
\vspace{-0.5cm}
\end{wrapfigure}
The algorithm is parametrized by a forward-exploration bound $s$. It refines a candidate $\varphi$ starting from the candidate that excludes just the bad states (\cref{ln:dual-itp-frame0}).
In each iteration,
the algorithm samples a pre-state $\sigma_b$ of a counterexample to the induction of $\varphi$, a state in $\varphi$ that in one step reaches states outside $\varphi$ (\cref{ln:dual-itp-cex}).
Instead of excluding just the counterexample---similarly to PDR---
\iflong\else
\begin{wrapfigure}{r}{0.45\textwidth}
\vspace{-0.4cm}
\begin{minipage}{0.45\textwidth}
\begin{algorithm}[H]
\caption{Dual Model-Based ITP, based on~\cite{DBLP:conf/hvc/ChocklerIM12,DBLP:conf/lpar/BjornerGKL13}}
\label{alg:dual-itp-termmin}
\begin{algorithmic}[1]
\begin{footnotesize}
\Procedure{Dual-Model-Based-ITP}{$\Init$,$\tr$,$\Bad$,$s$}
	\State $\varphi \gets \neg \Bad$ $\label{ln:dual-itp-frame0}$
    \While{$\varphi$ not inductive}
		\State take $\sigma_b \in \varphi$ s.t.\ $\tr(\sigma_b) \not\subseteq \varphi$ $\label{ln:dual-itp-cex}$
		\If{$\sigma_b \in \mathcal{R}_s$} $\label{ln:dual-itp-restart}$
			\State \textbf{restart} with larger $s$
		\EndIf
		\State take minimal $c \subseteq \neg \sigma_b$ s.t.\ $\mathcal{R}_s \implies c$  $\label{ln:dual-itp-bmc}$
		\State $\varphi \gets \varphi \land c$ $\label{ln:dual-itp-learn}$
	\EndWhile
	\State \Return $\varphi$
\EndProcedure
\end{footnotesize}
\end{algorithmic}
\end{algorithm}
\end{minipage}
\vspace{-0.5cm}
\end{wrapfigure}
\fi
the algorithm seeks a minimal clause $c$ over the literals that are falsified in $\sigma_b$ that does not exclude a state from $\mathcal{R}_s$, the set of states that the system can reach in $s$ steps (\cref{ln:dual-itp-bmc}), and conjoins $c$ to the %
candidate (\cref{ln:dual-itp-learn}).

The complexity of this algorithm was recently studied by~\citet{DBLP:journals/pacmpl/FeldmanSSW21}, who showed that the forward-exploration bound $s$ is sufficient to discover an inductive invariant when $s$ steps reach the entire \emph{inner boundary} of $I$,
\iflong\else
\vspace{0.3cm}
\fi
\begin{equation*}
	\boundarypos{I} \eqdef \set{\sigma^+ \, \mid \, \exists \sigma^-. \ \sigma^+ \models I, \, \sigma^- \models \neg I, \, \mbox{Hamming-Distance}(\sigma^+,\sigma^-)=1}.
\end{equation*}
\begin{theorem}[\citet{DBLP:journals/pacmpl/FeldmanSSW21}]
\label{lem:itp-fence-condition}
Let $I$ be an inductive invariant for $(\Init,\tr,\Bad)$,
and $\mathcal{R}_s$ the set of states reachable in at most $s$ steps in $(\Init,\tr,\Bad)$.
If
$\boundarypos{I} \subseteq \mathcal{R}_s$,
then a forward bound of $s$ suffices for $\mbox{Dual-Model-Based-ITP}(\Init,\tr,\Bad,s)$ to successfully find an inductive invariant.
\end{theorem}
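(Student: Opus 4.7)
The plan is to prove by induction on the iteration count that the algorithm maintains the invariant $I \subseteq \varphi \subseteq \neg\Bad$ throughout its execution, and that $\varphi$ strictly shrinks, guaranteeing termination with an inductive invariant. Initially $\varphi = \neg\Bad$, which contains $I$ because $I \implies \neg\Bad$, and is contained in $\neg\Bad$ trivially.

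For the inductive step, I would first show that the counterexample $\sigma_b$ chosen in~\cref{ln:dual-itp-cex} satisfies $\sigma_b \notin I$: otherwise, by the inductivity of $I$, $\tr(\sigma_b) \subseteq I \subseteq \varphi$, contradicting $\tr(\sigma_b) \not\subseteq \varphi$. As a consequence, $\sigma_b \notin \mathcal{R}_s$, because every reachable state lies in $I$ (since $\Init \subseteq I$ and $I$ is inductive), so the restart branch in~\cref{ln:dual-itp-restart} is never taken.

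The crucial step is showing that the learned clause $c$ satisfies $I \implies c$, so that conjoining $c$ in~\cref{ln:dual-itp-learn} preserves $I \subseteq \varphi$. Rather than reasoning about the particular minimal $c$ directly, I would invoke~\Cref{thm:mhull-conjunctive}: since $c \subseteq \neg \sigma_b$ (so $\sigma_b \not\models c$) and $\mathcal{R}_s \implies c$, we have $\monox{\mathcal{R}_s}{\sigma_b} \implies c$, so it suffices to prove $I \implies \monox{\mathcal{R}_s}{\sigma_b}$. For any $\sigma \in I$ I would produce a witness $v \in \mathcal{R}_s$ with $v \leq_{\sigma_b} \sigma$ as follows: consider a geodesic from $\sigma$ to $\sigma_b$ in the Hamming cube, flipping one bit at a time; because $\sigma \in I$ and $\sigma_b \notin I$, the path crosses the boundary of $I$, and the last state $v$ on it that lies in $I$ belongs to $\boundarypos{I}$ (its next step is a Hamming-distance-$1$ neighbor outside $I$). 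The fence hypothesis $\boundarypos{I} \subseteq \mathcal{R}_s$ then yields $v \in \mathcal{R}_s$, and by construction $v$ sits on the geodesic from $\sigma_b$ to $\sigma$, giving $v \leq_{\sigma_b} \sigma$ by~\Cref{def:b-monotone-order}.

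Termination is immediate from strict progress: each iteration removes at least $\sigma_b$ from $\varphi$, and the state space is finite; when the loop exits, $\varphi$ is inductive with $\Init \subseteq I \subseteq \varphi \subseteq \neg\Bad$, so $\varphi$ is an inductive invariant. The main subtlety I expect is unpacking $\leq_{\sigma_b}$ carefully enough to confirm that the ``last $I$-state on the geodesic'' really witnesses membership in $\monox{\mathcal{R}_s}{\sigma_b}$; the rest is a textbook invariant-preservation argument, with the fence condition entering precisely to ensure that the boundary witness is a reachable state that the algorithm is not allowed to exclude.
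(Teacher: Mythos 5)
Your proof is correct and follows essentially the same route as the paper's: maintain $I \subseteq \varphi$ by induction, deduce $\sigma_b \notin I$ (hence $\sigma_b \notin \mathcal{R}_s$, so no restart), and show the learned clause cannot exclude $I$ via the observation that the Hamming geodesic from any $\sigma \in I$ to $\sigma_b$ must cross $\boundarypos{I} \subseteq \mathcal{R}_s$. The only cosmetic difference is that the paper packages this geometric step as \Cref{lem:mhull-boundary} applied to $\mhull{\mathcal{R}_s}{\mathcal{C}_i \cup \set{\sigma_b}}$ over the accumulated counterexample set, whereas you apply the identical argument per-counterexample to $\monox{\mathcal{R}_s}{\sigma_b}$.
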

In the example of~\Cref{fig:skip-counter} (from~\Cref{sec:overview}), this does not hold for the invariant in~\Cref{eq:skip-counter-invariant} unless $s=\Omega(2^n)$ (for example, $\vec{x}=110\ldots0,\vec{y}=0\ldots0,z=0 \in \boundarypos{I}$ but reachable only in $\Omega(2^n)$ steps).
In contrast, we prove that for $\Lambda$-PDR, it is enough that $\boundarypos{I}$ is $s$-reachable in the \emph{abstract} (hyper)system, which interleaves concrete steps and abstraction (see~\Cref{def:abs-hypertr}), and thus can reach $\boundarypos{I}$ in fewer steps, which would result in convergence of $\Lambda$-PDR with a smaller number of frames:
\begin{theorem}
\label{lem:eepdr-fence-condition}
Let $I \in \bkwspan{k}$ be an inductive invariant for $(\Init,\tr,\Bad)$,
and $\absr{\mathcal{R}}_s$ the set of states reachable in at most $s$ steps in $(\absr{\Init},\absr{\tr},\absr{\Bad})$ (\Cref{def:abs-hypertr}).
If $\boundarypos{I} \subseteq \absr{\mathcal{R}}_s$,
then $s+1$ frames suffice for $\eepdr(\Init,\tr,\Bad,k)$ to successfully find an inductive invariant.
\end{theorem}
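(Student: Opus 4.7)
The plan is to apply \Cref{lem:eepdr-ai-iterations}, which reduces the claim to showing that the Kleene iterations in $\madom{\bkwrch{k}}$ already stabilize at step $s$, i.e., $\Frameai_{s+1}=\Frameai_s$. Indeed, by \Cref{thm:hyperabsract-reach}, $\Frameai_s = \absr{\mathcal{R}}_s$, so the fence hypothesis reads $\boundarypos{I} \subseteq \Frameai_s$. I would also note that since $I$ is inductive and safe while every $b \in \bkwrch{k}$ reaches $\Bad$ in at most $k$ steps, we have $\bkwrch{k} \cap I = \emptyset$; this both furnishes a fact used below and rules out the failure branch of $\Lambda$-PDR.

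Next I would establish by induction that $\Frameai_i \subseteq I$ for every $i$: since $I \in \bkwspan{k}$ we have $\malpha{\bkwrch{k}}(I) = I$, and monotonicity of $\malpha{\bkwrch{k}} \circ \postimage{\tr}{\cdot}$ together with $\tr(I) \subseteq I$ carries the containment from step $i$ to step $i+1$. Setting $X = \Frameai_s$ and using the Galois insertion, the target $\Frameai_{s+1} = \malpha{\bkwrch{k}}(\postimage{\tr}{X}) \subseteq X$ is equivalent to $\tr(X) \subseteq X$. Since $X \in \bkwspan{k}$, by \Cref{lem:mhull-idempotence} we have $X = \mhull{X}{\bkwrch{k}}$, and then \Cref{thm:mhull-conjunctive} lets us fix a CNF representation $X \equiv \bigwedge_j c_j$ in which each clause $c_j$ excludes some $b_j \in \bkwrch{k}$; by the remark above each such $b_j$ lies outside $I$.

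The heart of the argument, and the place where the fence hypothesis enters, is the following. Let $\sigma' \in \tr(X)$; since $X \subseteq I$ and $I$ is inductive, $\sigma' \in I$, and it remains to show $\sigma' \models c_j$ for every $j$. Suppose for contradiction that $\sigma' \not\models c_j$, so $\sigma'$ and $b_j$ both satisfy the term $\neg c_j$ and therefore agree on every variable mentioned in $c_j$. Consider a Hamming path from $\sigma'$ to $b_j$ that flips only the remaining variables, one at a time; every state on this path satisfies $\neg c_j$, hence falsifies $c_j$. Because $\sigma' \in I$ and $b_j \notin I$, at some adjacent pair $(u,u')$ on the path we have $u \in I$ and $u' \notin I$, which places $u \in \boundarypos{I}$. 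The fence hypothesis then gives $u \in \Frameai_s = X$, so $u \models c_j$, contradicting $u \models \neg c_j$. Hence $\tr(X) \subseteq X$, so $\Frameai_{s+1} = \Frameai_s$ and the theorem follows.

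The main obstacle I expect is the Hamming-path step: one must verify carefully that the path can be chosen to stay inside the term $\neg c_j$ (by flipping only variables \emph{not} appearing in $c_j$) and that it must cross from $I$ to its complement at some Hamming-distance-one pair. Everything else is a routine combination of the Galois insertion of $\madom{\bkwrch{k}}$, the idempotence and CNF characterization of $\mhull{\cdot}{\bkwrch{k}}$, and the correspondence of Kleene iterates with abstract reachable sets.
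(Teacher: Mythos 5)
Your proof is correct and follows essentially the same route as the paper's: the Hamming-path/boundary-crossing argument you carry out inline for each clause $c_j$ is precisely the paper's \Cref{lem:mhull-boundary} (applied with $S=\Frameai_s$ and $B=\bkwrch{k}$), and the surrounding scaffolding---identifying $\absr{\mathcal{R}}_s$ with the Kleene iterate via \Cref{thm:hyperabsract-reach}, containment of the iterates in $I$, and the transfer back to $\Lambda$-PDR via \Cref{lem:eepdr-ai-iterations}---matches the paper's proof. Note that your clause-wise argument never uses $\sigma'\in\tr(X)$ beyond the fact that $\sigma'\in I$, so it in fact directly establishes $I\subseteq\Frameai_s$, which is exactly the form in which the paper states the conclusion.
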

The two results can be proved similarly, using tools from the monotone theory. In both cases, the argument is that when $s$ is large enough, the monotone hull of the current candidate must include the entire $I$.
In~\Cref{alg:dual-itp-termmin}, the argument is that always $I \subseteq \mhull{\mathcal{R}_s}{\mathcal{C}_i} \subseteq \varphi$, where $\mathcal{C}_i$ is the set of counterexamples $\sigma_b$ that the algorithm has encountered so far.\yotamforlater{opportunity to use this argument to infer CDNF invariants using interpolation?}
In~\Cref{alg:eepdr}, the argument is that $I \subseteq \mhull{\absr{\mathcal{R}}_s}{\bkwrch{k}}$.
Both rely on the following fact about the monotone hull of a boundary of a set:
\begin{lemma}
\label{lem:mhull-boundary}
Let $I,S,B$ be sets of states s.t.\ $\boundarypos{I} \subseteq S$ and $B \cap I = \emptyset$.
Then $I \subseteq \mhull{S}{B}$.
\end{lemma}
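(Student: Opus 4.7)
The plan is to fix an arbitrary $\sigma \in I$ and $b \in B$, and show that $\sigma \models \monox{S}{b}$. Since $\mhull{S}{B} = \bigwedge_{b \in B}\monox{S}{b}$, this will immediately yield $\sigma \in \mhull{S}{B}$ and hence the desired inclusion $I \subseteq \mhull{S}{B}$. By the definition of $\monox{S}{b}$, it suffices to exhibit some $v \in S$ with $v \leq_b \sigma$.

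The construction uses a Hamming path from $\sigma$ to $b$. Note first that because $B \cap I = \emptyset$, we have $b \notin I$, so $\sigma \in I$ and $b \in \neg I$. Consider any sequence of states $\sigma = \sigma_0, \sigma_1, \ldots, \sigma_d = b$ in which each $\sigma_{j+1}$ is obtained from $\sigma_j$ by flipping exactly one variable on which $\sigma_j$ and $b$ disagree (so $d$ is the Hamming distance between $\sigma$ and $b$). Each step brings us strictly closer to $b$. Since $\sigma_0 \in I$ and $\sigma_d \notin I$, there is a largest index $i$ with $\sigma_i \in I$, and it satisfies $i < d$ and $\sigma_{i+1} \notin I$. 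Because $\sigma_i$ and $\sigma_{i+1}$ differ in exactly one bit, by definition $\sigma_i \in \boundarypos{I}$, and hence $\sigma_i \in S$ by assumption.

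It remains to verify $\sigma_i \leq_b \sigma$ in the sense of \Cref{def:b-monotone-order}. Since $b$ is a state, its domain is all of $\voc$, so the first condition (agreement outside the domain of $b$) is vacuous. For the second condition, observe that each step of our path only flips a variable from disagreement with $b$ to agreement with $b$; thus the set of variables on which $\sigma_i$ disagrees with $b$ is a subset of the set of variables on which $\sigma = \sigma_0$ disagrees with $b$. In other words, whenever $\sigma_i[p] \neq b[p]$, also $\sigma[p] \neq b[p]$, which is exactly the condition $\sigma_i \leq_b \sigma$. Therefore $\sigma_i$ witnesses $\sigma \models \monox{S}{b}$.

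I do not expect any serious obstacle: the only subtlety is making sure the ``closer to $b$'' direction of $\leq_b$ is correctly read off the definition, i.e.\ that the witness $v$ must be closer to $b$ than $\sigma$, not farther. Once that is in hand, the discrete intermediate value argument on the Hamming path between $\sigma \in I$ and $b \notin I$ produces the required boundary state, and the hypothesis $\boundarypos{I} \subseteq S$ places it in $S$. This same lemma then underpins both \Cref{lem:itp-fence-condition} and \Cref{lem:eepdr-fence-condition}, with $B$ instantiated to the running set of counterexamples or to $\bkwrch{k}$ respectively.
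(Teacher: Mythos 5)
Your proof is correct. The key geometric fact is the same as in the paper's proof---a shortest Hamming path from $\sigma \in I$ to $b \notin I$ must cross $\boundarypos{I}$, and the crossing point lands in $S$---but you package it differently. The paper argues by contradiction via the clause characterization of the monotone hull (\Cref{thm:mhull-conjunctive}): it extracts a cube $e$ containing both $\sigma$ and $b$ with $S \implies \neg e$, and derives a contradiction from the crossing point lying in $e$. You instead work directly from \Cref{def:monox}, exhibiting the crossing point $\sigma_i$ as an explicit witness $v \in S$ with $v \leq_b \sigma$; your verification that $\sigma_i \leq_b \sigma$ (every bit on which $\sigma_i$ differs from $\sigma$ has already been flipped to agree with $b$) is exactly right, and your reading of the direction of $\leq_b$ is the correct one. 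Your route is slightly more elementary in that it needs only the definition of the least $b$-monotone overapproximation rather than \Cref{thm:mhull-conjunctive}, at the cost of a small amount of bookkeeping along the path; the paper's version makes the connection to the ``Hamming interval/cube'' picture used elsewhere in the exposition more explicit. Either proof is acceptable.
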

\begin{proof}
Let $\sigma \in I$. For every $b \in B$, assume for the sake of contradiction that $\sigma$ is not in $\monox{S}{b}$. By~\Cref{lem:monox-conjunction-clauses}, there is some cube $e$ such that $b \models e$ and $\sigma \models e$ but $S \implies \neg e$. In particular, $\boundarypos{I} \implies \neg e$. Consider some shortest path between $\sigma,b$ in the Hamming cube. Because $\sigma \models I, b \not\models I$, there is some crossing point $\sigma^{+} \in \boundarypos{I}$ on that path. This state $\sigma^{+}$ agrees with $\sigma,b$ on the literals on which they agree, which include all the literals in $e$, since this is a cube (a conjunction of literals). Hence also $e \models \sigma^{+}$, but this is a contradiction to $\boundarypos{I} \implies \neg e$.
\end{proof}
\iflong
We use this to prove the above claims:
\begin{proof}[Proof of~\Cref{lem:itp-fence-condition}]
We show that always $I \subseteq \varphi$, which implies that $\sigma_b \not\models I$ (because otherwise $\tr(\sigma_b) \subseteq I \subseteq \varphi$, in contradiction to the choice of $\sigma_b$), and because $\mathcal{R}_s \subseteq I$ this implies that $\sigma_b \not\in \mathcal{R}_s$ and so no restart is required. Because $\varphi$ strictly decreases in each iteration and the number of non-equivalent formulas is finite, this implies that the algorithm terminates with an inductive invariant.

First note that by \Cref{thm:mhull-conjunctive}, always $\varphi \in \mspan{\mathcal{C}_i}$. By induction on the iterations of the algorithm, always $\mathcal{R}_s \subseteq \varphi$, and hence by~\Cref{lem:mhull-monotonicity} always $\mhull{\mathcal{R}_s}{\mathcal{C}_i} \subseteq \mhull{\varphi}{\mathcal{C}_i} \equiv \varphi$.
We argue that $I \subseteq \varphi$ and $I \cap \mathcal{C}_i = \emptyset$, by induction.
Initially, $I \subseteq \neg \Bad$ because it is an inductive invariant, and $\mathcal{C}_0 = \emptyset$.
For a step, $\sigma_b \not\in I$ because otherwise $\tr(\sigma_b) \subseteq I \subseteq \varphi$.
To show that $I \subseteq \varphi \land \neg d$, by the induction hypothesis $\mathcal{R}_s \subseteq I \subseteq \varphi$ and by construction $\mathcal{R}_s \subseteq \neg d$. Hence, by~\Cref{lem:mhull-boundary}, $I \subseteq \mhull{\mathcal{R}_s}{\mathcal{C}_i \cup \set{\sigma_b}} \subseteq \varphi \land \neg d$, as required.
\end{proof}
\begin{proof}[Proof of~\Cref{lem:eepdr-fence-condition}]
The set of states reachable in $s$ steps in $(\absr{\Init},\absr{\tr},\absr{\Bad})$ is $\Frameai_s$ of the Kleene iterations (\Cref{thm:hyperabsract-reach}).
We can apply~\Cref{lem:mhull-boundary}, because $\postimage{\tr}{\Frame_{s-1}}$ of $\Lambda$-PDR includes all $s$-reachable states (by properties~\ref{it:frames-start}--\ref{it:frames-end} in~\Cref{sec:overview-frame-props}) and $I \cap \bkwrch{k} = \emptyset$ since $I$ is an inductive invariant. We obtain that $\Frameai_s = \mhull{\postimage{\tr}{\Frameai_{s-1}}}{\bkwrch{k}}$ contains $I$. It cannot ``overshoot'' beyond $I$ due to~\Cref{lem:eepdr-lfp}.
Apply~\Cref{lem:eepdr-ai-iterations} for the connection to $\Lambda$-PDR.
\end{proof}
\else
Proofs of~\Cref{lem:itp-fence-condition} and~\Cref{lem:eepdr-fence-condition} are derived from~\Cref{lem:mhull-boundary} in the extended version~\cite{extendedVersion}. %
\fi
In essence, these different criteria for when the forward-exploration of the algorithm is sufficient reflect the difference in how the algorithms generalize: per counterexample, both find a minimal clause that does not exclude states from some form of forward reachability, but in $\Lambda$-PDR this is an abstraction of forward reachability, whereas
\Cref{alg:dual-itp-termmin}
uses exact forward reachability.

This difference also manifests in different outcomes of~\Cref{alg:eepdr} and~\Cref{alg:dual-itp-termmin} on the running example of~\Cref{fig:skip-counter}.
For every $s < 2^n$ there is an execution of
~\Cref{alg:dual-itp-termmin}
that fails (\cref{ln:dual-itp-restart}) because it includes reachable states as counterexamples to exclude (for example, the first counterexample in the execution of~\Cref{alg:dual-itp-termmin} is $\sigma_b=(\vec{x}=10\ldots00,\vec{y} = 11\ldots10,z=1)$, which can be generalized to $c=(x_n=0)$ that inadvertently excludes also reachable states such as $\vec{x} = 10\ldots01,\vec{y} = 00\ldots00,z=0$),
although $s=\bigO(1)$ suffices for $\Lambda$-PDR (\Cref{ex:running-all-frames}).

Finally, we remark that~\Cref{alg:dual-itp-termmin} does use a form of successive overapproximation.
By repeatedly generating counterexamples to induction (\cref{ln:dual-itp-cex}), it in a sense uses reverse frames that overapproximate backward reachability.
While both~\Cref{alg:dual-itp-termmin} and~\Cref{alg:eepdr} learn lemmas by minimizing a term w.r.t.\ a forward-reachability analysis in order to block a counterexample from a backward-reachability analysis,
\Cref{alg:eepdr} employs successive overapproximation is in the former analysis, and \Cref{alg:dual-itp-termmin} in the latter.
As we have seen, this successive overapproximation in counterexample generation is not sufficient for~\Cref{alg:dual-itp-termmin} to successfully infer an invariant for the example of~\Cref{fig:skip-counter}. However, it does alleviate the requirement that $I \in \bkwrch{k}$, which is necessary in~\Cref{lem:eepdr-fence-condition} %
but not
in~\Cref{lem:itp-fence-condition}.\footnote{
	The original, non time-dual version of the algorithm, has frames going forward, such as in PDR, but the roles of backward- and forward-reachability in generalization are reversed. This algorithm ``overshoots'' on the example of~\Cref{fig:skip-counter} unless $s=\Omega(2^n)$\yotamsmall{algorithm initialized with $\vec{x
	}=00\ldots00,\vec{y}=0\ldots0,z=0$, cti $\vec{x
	}=00\ldots01,\vec{y}=0\ldots0,z=0$, can be generalized to $\vec{x
	}=00\ldots01$ unless $s \geq 2^{n-1}-2$, once we add this term to the invariant we've added backward reachable states and we're doomed}, but we focus here on overapproximations that are too tight (rather than too loose), the direction in which $\Lambda$-PDR is informative of PDR.
}
\section{Between $\Lambda$-PDR and PDR: Best Abstraction and Even Better}
\label{sec:vs-pdr}
In each frame, $\Lambda$-PDR includes all possible generalizations, %
which we have shown to amount in $\Frame_{i+1}$ to the the best abstraction of $\postimage{\tr}{\Frame_i}$ in the abstract domain $\madom{\bkwrch{k}}$ (\Cref{lem:best-abstraction}).
\begin{changebar}
Its frames are thus the strongest (contain fewest states) that satisfy all the properties of frames listed in~\Cref{sec:overview-frame-props}---the standard ones as well as the monotone span of backward reachable states:
\begin{lemma}
\label{lem:lambda-frames-minimality}
The frames $\Frame_0,\Frame_1,\ldots$ of $\Lambda$-PDR are the least (w.r.t.\ $\implies$) s.t.\ for every $i$,
\begin{inparaenum}
		\setcounter{enumi}{\getrefnumber{it:frames-init}-1}
		\item $\Init \implies \Frame_0$,
		\setcounter{enumi}{\getrefnumber{it:frames-monotone}-1}
		\item $\Frame_i \implies \Frame_{i+1}$,
		\setcounter{enumi}{\getrefnumber{it:frames-onestep-overapprox}-1}
		\item $\tr({\Frame_i}) \implies \Frame_{i+1}$, and
		\setcounter{enumi}{\getrefnumber{it:frames-mbasis}-1}
		\item $\Frame_i \in \bkwspan{k}$.
\end{inparaenum}
\end{lemma}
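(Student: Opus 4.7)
The plan is to split the proof into two directions. First, I would verify that the $\Lambda$-PDR frames themselves satisfy properties (1)--(4). Property (1) holds by the initialization $\Frame_0 = \Init$ in~\Cref{alg:eepdr}. Properties (2) and (3) follow from the identity $\Frame_{i+1} = \mhull{\postimage{\tr}{\Frame_i}}{\bkwrch{k}}$ established in~\Cref{lem:justify-overview-next-frame}, combined with~\Cref{lem:mhull-overapproximation}: since $\postimage{\tr}{\Frame_i} = \Frame_i \cup \tr(\Frame_i) \implies \mhull{\postimage{\tr}{\Frame_i}}{\bkwrch{k}} = \Frame_{i+1}$, both $\Frame_i \implies \Frame_{i+1}$ and $\tr(\Frame_i) \implies \Frame_{i+1}$ hold. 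Property (4) holds for every $i \geq 1$ by construction, since any monotone hull $\mhull{\cdot}{\bkwrch{k}}$ lies in $\bkwspan{k}$ by the definition of monotone span (\Cref{def:monotone-span}).

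Second, I would prove minimality by induction on $i$: for any sequence $\Frame'_0, \Frame'_1, \ldots$ satisfying (1)--(4), it holds that $\Frame_i \implies \Frame'_i$ for every $i$. The base case is immediate: $\Frame_0 = \Init \implies \Frame'_0$ by property (1) applied to $\Frame'$. For the inductive step, assume $\Frame_i \implies \Frame'_i$. Properties (2) and (3) for the primed sequence give $\postimage{\tr}{\Frame'_i} = \Frame'_i \cup \tr(\Frame'_i) \implies \Frame'_{i+1}$, and by monotonicity of the post-image $\postimage{\tr}{\Frame_i} \implies \postimage{\tr}{\Frame'_i}$. Chaining these yields $\postimage{\tr}{\Frame_i} \implies \Frame'_{i+1}$. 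Now property (4) says $\Frame'_{i+1} \in \bkwspan{k}$, and~\Cref{lem:best-abstraction} states that $\mhull{\postimage{\tr}{\Frame_i}}{\bkwrch{k}}$ is the \emph{least} element of $\bkwspan{k}$ overapproximating $\postimage{\tr}{\Frame_i}$. Therefore $\Frame_{i+1} = \mhull{\postimage{\tr}{\Frame_i}}{\bkwrch{k}} \implies \Frame'_{i+1}$, completing the induction.

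The main subtlety, and the one spot where I would need to be careful, is property (4) at $i = 0$: the algorithm sets $\Frame_0 = \Init$, which is not guaranteed to lie in $\bkwspan{k}$. Under the natural reading in which the lemma implicitly assumes $\Init \in \bkwspan{k}$---the regime in which $\Lambda$-PDR's frames coincide exactly with the Kleene iterations in $\madom{\bkwrch{k}}$ by~\Cref{lem:ai-eepdr-sandwich}---the argument above goes through verbatim. Otherwise, an off-by-one adjustment is needed (as in~\Cref{lem:ai-eepdr-sandwich,lem:eepdr-ai-iterations}), since the true least sequence in $\bkwspan{k}$ would start with $\mhull{\Init}{\bkwrch{k}}$ rather than $\Init$; but this does not disturb the inductive step, which is the substantive content of the lemma.
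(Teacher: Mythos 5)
Your proof is correct and follows essentially the same route as the paper's: the forward direction is immediate from $\Frame_{i+1} = \mhull{\postimage{\tr}{\Frame_i}}{\bkwrch{k}}$, and minimality is an induction whose step combines monotonicity of the post-image with the best-abstraction property of~\Cref{lem:best-abstraction}. Your observation about property~(4) at $i=0$ is a genuine edge case that the paper's proof glosses over, and your resolution (either assume $\Init \in \bkwspan{k}$ or accept the one-frame shift of~\Cref{lem:ai-eepdr-sandwich}) is the right one; it does not affect the substantive inductive argument.
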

\toolong{
\begin{proof}
That the frames of $\Lambda$-PDR satisfy the properties is immediate from the relationship $\Frame_0 = \Init$, $\Frame_{i+1} = \malpha{\bkwrch{k}}(\postimage{\tr}{\Frame_i})$. Minimality is from best abstraction (\Cref{lem:best-abstraction}) and induction on $i=0,1,\ldots$: let $\widetilde{\Frame}_0,\widetilde{\Frame}_1,\ldots$ another sequence that satisfies the properties. By property~\ref{it:frames-init}, $\Init = \Frame_0 \implies \widetilde{\Frame}_0$. For the step, assume that $\Frame_i \implies \widetilde{\Frame}_i$. Then from properties~\ref{it:frames-monotone}~and~\ref{it:frames-onestep-overapprox}, $\postimage{\tr}{\widetilde{\Frame}_i} \implies \widetilde{\Frame}_{i+1}$, and in particular also $\postimage{\tr}{\Frame_i} \implies \widetilde{\Frame}_{i+1}$. From property~\ref{it:frames-mbasis}, $\widetilde{\Frame}_{i+1} \in \bkwspan{k}$. Putting these together, \Cref{lem:best-abstraction} implies that $\Frame_{i+1} = \malpha{\bkwrch{k}}(\postimage{\tr}{\Frame_i}) \implies \widetilde{\Frame}_{i+1}$, as required.
\end{proof}
}
In contrast to $\Lambda$-PDR, standard PDR ``samples'' counterexamples and generalizations, and it does not produce in $\Framepdr_{i+1}$ the least abstraction of $\postimage{\tr}{\Framepdr_i}$. Its frames are nevertheless characterized as abstractions (not necessarily the least abstraction) in the same domain:
\begin{lemma}
\label{lem:pdr-also}
At any point during the execution of $\mbox{PDR}(\Init,\tr,\Bad)$ (\Cref{alg:pdr}) when it has at most $N$ frames, $\Framepdr_i \in \bkwspan{N}$ for every $1 \leq i \leq N$. %
\end{lemma}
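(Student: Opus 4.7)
The plan is to prove the invariant by induction on the sequence of operations PDR performs on the frames. The only two ways a frame $\Framepdr_i$ with $i \geq 1$ is created or modified are: (a) initialization to $\true$ when a new frontier is opened (\cref{ln:pdr-init-frame}), and (b) conjunction with a clause $c$ in \cref{ln:pdr-strengthen} of the \Call{block}{} subroutine. Initialization is immediate, because $\true$ is the empty conjunction and thus vacuously in $\bkwspan{N}$. For updates, by the closure of $\bkwspan{N}$ under conjunction (\Cref{lem:mspan-conjunctive-closed}), it suffices to prove that every clause $c$ conjoined to a frame in \cref{ln:pdr-strengthen} itself lies in $\bkwspan{N}$.

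By \Cref{thm:mhull-conjunctive} (and \Cref{def:monotone-span}), a single clause $c$ belongs to $\bkwspan{N}$ precisely when there is some $b \in \bkwrch{N}$ with $b \not\models c$. Now $c$ is chosen in \cref{ln:pdr-minimal-clause} to satisfy $c \subseteq \neg \sigma_b$, which guarantees $\sigma_b \not\models c$. Hence the claim reduces to the statement that whenever \Call{block}{$\sigma_b$, $i$} reaches \cref{ln:pdr-minimal-clause} (which requires $i \geq 1$, else the algorithm reports \textbf{unsafe}), the counterexample $\sigma_b$ itself belongs to $\bkwrch{N}$.

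This last fact I would prove by a second, auxiliary induction on the depth of the recursive call stack of \Call{block}{}, establishing the stronger invariant that any call \Call{block}{$\sigma_b$, $i$} satisfies $\sigma_b \in \bkwrch{N-i}$ (with the convention $\bkwrch{d} \subseteq \bkwrch{N}$ for $d \leq N$). The base case is the outermost call in \cref{ln:pdr-block-bad}, made on a state $\sigma_b \in \Framepdr_{N} \land \Bad$, so $\sigma_b \in \Bad = \bkwrch{0}$ and the indexing matches since that call is at the frontier level. For the step, a recursive call in \cref{ln:pdr-back-sample-prestate} on $(\sigma, i-1)$ comes from a call $(\sigma_b, i)$ together with a pair $(\sigma, \sigma_b) \models \reflextr{\tr}$; the induction hypothesis gives $\sigma_b \in \bkwrch{N-i}$, and since $\sigma$ reaches $\sigma_b$ in at most one step, $\sigma \in \bkwrch{N-i+1} = \bkwrch{N-(i-1)}$, as required. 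Specializing to calls with $i \geq 1$ gives $\sigma_b \in \bkwrch{N-1} \subseteq \bkwrch{N}$, completing the argument.

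No serious obstacle is anticipated: the result is essentially a bookkeeping observation about the algorithm, and its only nontrivial ingredient is the CNF characterization of $\bkwspan{N}$ already provided by \Cref{thm:mhull-conjunctive}. The main care to take is the off-by-one in the invariant relating the recursion depth, the level parameter $i$, and the total number of frames $N$, since the outer while loop's check at \cref{ln:pdr-find-inductive-frame} and the frontier's initialization at \cref{ln:pdr-init-frame} both shift indices; writing the recursion-depth invariant explicitly in terms of $N-i$ keeps this bookkeeping clean and makes the monotonicity $\bkwrch{d} \subseteq \bkwrch{N}$ the only property of $\bkwrch{\cdot}$ that is used.
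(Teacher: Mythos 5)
Your proposal is correct and follows essentially the same route as the paper's proof: the key step in both is the auxiliary induction on the recursive call structure of \textsc{block}, establishing that a call at level $i$ blocks a state in $\bkwrch{N-i} \subseteq \bkwrch{N}$, so every conjoined clause excludes a state of $\bkwrch{N}$ and the CNF characterization of $\bkwspan{N}$ (via \Cref{thm:mhull-conjunctive}) applies. You are merely more explicit than the paper about the base case $\true \in \bkwspan{N}$ and closure under conjunction, which the paper leaves implicit.
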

\toolong{
\begin{proof}
In a call to $\textsc{block}(\sigma_b, i+1)$, it holds $\sigma_b \in \bkwrch{N-i}$, by induction on the recursive calls:
The first call $\textsc{block}(\sigma_b, N+1)$ in~\cref{ln:pdr-block-bad} has $\sigma_b \in \bkwrch{0}=\Bad$, by~\cref{ln:pdr-sample-bad}.
In each recursive call from $(\sigma_b,i+1)$ to $(\sigma,i+1-1)$, in~\cref{ln:pdr-back-sample-prestate} the new counterexample $\sigma$ reaches the counterexample in the parent call $\sigma_b$ in one step, so $\sigma_b \in \bkwrch{N-i}$ implies $\sigma \in \bkwrch{N-i+1} = \bkwrch{N-(i-1)}$ as required.
Since $\bkwrch{N-i} \subseteq \bkwrch{N}$, this ensures that $\sigma_b \in \bkwrch{N}$ in every call to $\textsc{block}(\sigma_b, i+1)$.

Hence, when the algorithm strengthens $\Framepdr_i$ in~\cref{ln:pdr-strengthen}, it is always with a clause $c$ such that $\sigma_b \not\models c$ where $\sigma_b \in \bkwrch{N}$.
This implies that $\Framepdr_i \in \bkwspan{N}$ (see~\Cref{sec:monotone-basis}), completing the proof.
\end{proof}
}
In particular, this shows that PDR overapproximates the frames that $\Lambda$-PDR generates:
\begin{corollary}
\label{cor:lambda-pdr-underapproximates-pdr}
At any point during the execution of $\mbox{PDR}(\Init,\tr,\Bad)$ (\Cref{alg:pdr}) when it has at most $N$ frames, its $i$'th frame, $\Framepdr_i$, satisfies $\Frame_i \implies \Framepdr_i$, where $\Frame_i$ is the $i$'th frame of $\mbox{$\Lambda$-PDR}(\Init,\tr,\Bad,N)$ (\Cref{alg:eepdr}).
\end{corollary}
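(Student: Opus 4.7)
The plan is to derive the corollary by combining \Cref{lem:pdr-also} with the minimality characterization of $\Lambda$-PDR's frames given in \Cref{lem:lambda-frames-minimality}. Concretely, I will verify that PDR's frames $\Framepdr_0, \Framepdr_1, \ldots, \Framepdr_N$ satisfy all four of the defining properties of $\Lambda$-PDR frames (properties \ref{it:frames-init}, \ref{it:frames-monotone}, \ref{it:frames-onestep-overapprox}, and \ref{it:frames-mbasis}) with respect to the parameter $k = N$, and then immediately conclude $\Frame_i \implies \Framepdr_i$ by the least-fixed-point characterization.

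First, I would check the three standard frame properties as maintained by \Cref{alg:pdr}: property \ref{it:frames-init} holds since $\Framepdr_0$ is initialized to $\Init$ on the second line and is never subsequently weakened (only strengthened by conjoining clauses $c$ such that $\Init \implies c$ by the side condition in~\cref{ln:pdr-minimal-clause}); property \ref{it:frames-monotone} holds because every time a new clause $c$ is learned in~\cref{ln:pdr-minimal-clause}, it is conjoined to \emph{all} frames $\Framepdr_1,\ldots,\Framepdr_i$ in the loop at~\cref{ln:pdr-strengthen}, and new frames are initialized to $\true$ in~\cref{ln:pdr-init-frame}; and property \ref{it:frames-onestep-overapprox} is exactly the invariant that the \textsc{block} procedure is designed to maintain: the test $\postimage{\tr}{\Framepdr_{i-1}} \notimplies \neg \sigma_b$ on~\cref{ln:pdr-back-refine} and the condition $\tr(\Framepdr_{i-1}) \implies c$ on~\cref{ln:pdr-minimal-clause} together ensure that no clause conjoined to $\Framepdr_i$ ever excludes a state in $\postimage{\tr}{\Framepdr_{i-1}}$. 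The fourth property \ref{it:frames-mbasis}, namely $\Framepdr_i \in \bkwspan{N}$, is precisely the content of \Cref{lem:pdr-also}.

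Having established that $\Framepdr_0,\ldots,\Framepdr_N$ satisfies properties \ref{it:frames-init}--\ref{it:frames-mbasis} with $k=N$, I invoke \Cref{lem:lambda-frames-minimality} applied to $\Lambda$-PDR run with parameter $k=N$: its frames $\Frame_0, \Frame_1, \ldots$ are the \emph{least} such sequence in the pointwise implication order. Hence $\Frame_i \implies \Framepdr_i$ for each $i$ where both are defined, which yields the corollary.

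The main delicate point is verifying that the three standard PDR invariants truly hold \emph{at any point} during the execution, not only at quiescent states between outer iterations --- in particular, while the \textsc{block} routine is in the middle of a recursion and partial strengthening has been applied. I expect this to be a straightforward but tedious invariant analysis of \Cref{alg:pdr}, and the key observation is that at the instant any $\Framepdr_i$ is updated (the loop over $j$ at~\cref{ln:pdr-strengthen}), the clause $c$ has been chosen specifically to satisfy all three properties simultaneously, so no temporary violation arises.
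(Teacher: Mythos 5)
Your proposal is correct and matches the paper's own argument: the paper likewise establishes the corollary by observing that PDR's frames satisfy the premises of \Cref{lem:lambda-frames-minimality} (the three standard frame properties plus membership in $\bkwspan{N}$, the latter being exactly \Cref{lem:pdr-also}) and then invoking the minimality of $\Lambda$-PDR's frames. The paper additionally sketches an alternative direct induction showing every clause PDR conjoins to $\Framepdr_i$ is also conjoined to $\Frame_i$, but that is only offered as a second route to the same conclusion.
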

\toolong{
\begin{proof}
The frames of~\Cref{alg:pdr} satisfy the properties in the premise of~\Cref{lem:lambda-frames-minimality}---all are standard except for the one shown in~\Cref{lem:pdr-also}.

A more direct argument, outlined in~\Cref{sec:overview-eepdr}, is that every lemma that PDR learns is also a lemma that $\Lambda$-PDR includes in its frames. \yotamsmall{added here}Formally,
to strengthen $\Framepdr_i$, $c$ must block some $\sigma_b \in \bkwrch{n}$ and after $c$ is conjoined to the previous frame $\Framepdr_{i-1}$ we must have $\postimage{\tr}{\Framepdr_{i-1}} \implies c$; \sharon{not clear (do we really need to mention induction?). In particular, I think the indices are buggy. In $\Lambda$-pdr the index it should be $\postimage{\tr}{\Frame_{i-1}} \implies c$, since we're talking about frame $i$ not $i+1$}\yotamsmall{right, fixed the indices after here. I think we need the induction}the latter implies, using an induction hypothesis that $\Frame_{i-1} \implies \Framepdr_{i-1}$, that also $\postimage{\tr}{\Frame_{i-1}} \implies c$. $\Lambda$-PDR conjoins \emph{all} such clauses; thus whenever $c$ is conjoined to $\Framepdr_i$, it is also conjoined to $\Frame_i$.
\end{proof}
}

In other words, PDR's frame also constitute some sort of search in the abstract domain $\madom{\bkwrch{N}}$ (though in a complex manner, refining previous frame etc.), and its frames always generate at least as much overapproximation as $\Lambda$-PDR. Hence, our results that show significant overapproximation in $\Lambda$-PDR translate to PDR as well.

Still, the difference between the algorithms is significant---PDR's frames don't employ the best abstraction in this domain. How does this benefit PDR?
\end{changebar}%
We show two
ways.
First, computing all generalizations may be inefficient. Second, it may not be desirable---it could lead to too precise abstraction and slow convergence.

\para{Inefficient frame size}
Consider a system over $n$ variables $x_1,\ldots,x_n$, with $\Init \ = \ x_1=\ldots=x_n=0$, $\Bad=x_1=\ldots=x_n=1$, and $\tr$ that non-deterministically chooses some $i \neq j$ with $x_i=x_j=0$ and sets $x_i \gets 1$.

We start with the analysis of $\Lambda$-PDR.
In this example, $\bkwrch{k} \ = \ 1\ldots1$ (for every $k$).
We argue that $\Frame_i$ is exactly the set $R_i$ of states reachable in at most $i$ steps, which is the set of states with at most $i$ bits $1$, denoted $\set{\vec{x} \mid \#1(\vec{x}) \leq i}$. %
This can be seen by induction:
initially, this holds for $\Frame_0=\Init=\set{0\ldots0}$.
In each step $\tr(\Frame_i) = R_i \cup \set{\vec{x} \mid \#1(\vec{x})=i+1}$. Then
$\Frame_{i+1}=\mhull{\postimage{\tr}{\Frame_i}}{\bkwrch{k}}=\monox{\postimage{\tr}{\Frame_i}}{1\ldots1} = R_i \cup \monox{\set{\vec{x} \mid \#1(\vec{x})=i+1}}{1\ldots1} = R_{i+1}$, because $\monox{\set{\vec{x} \mid \#1(\vec{x})=i+1}}{1\ldots1}$ adds states that are obtained from a state with $\#1(\vec{x})=i+1$ by flipping $1$'s to $0$'s, resulting in states with smaller values of $\#1(\vec{x})$ that are already included in $R_i$.

Unfortunately, the set $R_{\left\lfloor{n/2}\right\rfloor+1}$ is not expressible in polynomial-size CNF nor DNF.\footnote{
	It is the majority function, which is not in $\mbox{AC}^0$~\cite{DBLP:conf/stoc/Hastad86}, a complexity class that includes poly-size CNF and DNF.
}
This means that some of $\Lambda$-PDR's frames need an exponential number of clauses, and so construct an exponential number of generalizations of the bad state. Even an alternative DNF computation (based on~\Cref{lem:bshouty-mon-mindnf}) would not fare better.

In contrast, $\Framepdr_i$ consists of a single clause blocking the bad state, which is short.

\para{Slow convergence}
\label{sec:slow-convergence}
Consider a counter over $\vec{x}=x_n,x_{n-1},\ldots,x_0$ with $\Init = (\vec{x}=0\ldots0)$, $\Bad=(\vec{x}=1\ldots1)$, and $\tr$ that increments the counter except for when $\vec{x}=1\ldots10$ which skips the bad state and wraps-around to $0\ldots0$.

We start with the analysis of $\Lambda$-PDR. Similar to the previous example, $\bkwrch{k} \ = \ 1\ldots1$ (for every $k$) and $\Frame_i = \mathcal{R}_i$, except that $\mathcal{R}_i$ is now
the set of states $\vec{x} \leq i$, because $\tr(\Frame_i)$ always adds the state $\vec{x}=i+1$, and its $1\ldots1$-monotonization adds only states with smaller values of $\vec{x}$ which are already included in $\mathcal{R}_i$ (the derivation is similar to the previous example).

Therefore, the frames $\Frame_i = \set{\vec{x}: \ \vec{x} \leq i}$ do not converge until $i=2^n-1$, which means that $\Lambda$-PDR converges after an exponential number of frames.\yotamsmall{in this example the frames can be expressed in linear-size CNF}

In contrast, in this example, PDR always converges in a \emph{linear} number of frames.
The proof uses the fact from~\Cref{lem:pdr-also} that the frames of PDR are $(1\ldots1)$-monotone, and that $\Framepdr_i$ is always exactly one clause, because it blocks the single backward reachable state using a single lemma. Since $\Framepdr_i \implies \Framepdr_{i+1}$, and $\Framepdr_i$ is $1...1$-monotone, the clause that is $\Framepdr_{i+1}$ must be a syntactic subset of the clause that is $\Framepdr_i$~\cite{quine1954two}. Until they converge, the difference between two successive frames must be that some literals are omitted from the clause, which can happen at most $n$ times.  
\section{Related Work}
\label{sec:related}
\para{PDR as abstract interpretation}
This work is not the first to study the relation between PDR and abstract interpretation. \citet{DBLP:conf/vmcai/RinetzkyS16} prove that the reachable configurations of PDR are in simulation with the reachable states of a non-standard backward trace semantics. Their work studies standard PDR as non-standard abstract interpretation, whereas we study non-standard PDR as standard abstract interpretation (in a new domain); our domain abstracts the simpler collecting states semantics with standard forward iterations. Our work emphasizes the overapproxmation inherent in the abstraction, where, in particular, the abstraction forces overapproximation in the sequence of frames, whereas %
Rinetzky and Shoham's property-guided Cartesian trace semantics domain is precise enough to express any %
sequence of frames that satisfy properties~\ref{it:frames-start}--\ref{it:frames-end} from~\Cref{sec:overview-frame-props}. In contrast, adding property~\ref{it:frames-mbasis} characterizes $\Lambda$-PDR as Kleene iterations in our $\madom{\bkwrch{k}}$ domain.

\para{Abstract transition systems}
\citet{DBLP:journals/toplas/DamsGG97} construct, from a transition system and a Galois connection, abstract transition systems that preserve safety and other temporal properties. These are defined over a state space of abstract elements (e.g., formulas in the case of a logical domain), forming abstract edges between abstract elements through $\exists\exists$ or $\forall\exists$ relations of original transitions between the concretizations. It is important for our diameter bounds from the DNF representation of the abstract (hyper)transition system that it is defined over the original state space (\Cref{def:abs-tr,def:abs-hypertr}), which is possible due to the special structure of $\madom{B}$ (see~\Cref{thm:absract-reach,thm:hyperabsract-reach}).
In that respect our abstract transition systems are closer to monotonic abstraction in well-structured transition systems by~\citet{DBLP:journals/ijfcs/AbdullaDHR09}, the abstract transition systems for universally-quantified uninterpreted domains by~\citet{DBLP:conf/popl/PadonISKS16}, and the surjective abstraction games of~\citet{DBLP:conf/vmcai/FecherH07}.

\para{Diameter bounds}
Diameter bounds have been studied in the context of completeness thresholds for bounded model checking~\cite{DBLP:conf/tacas/BiereCCZ99,DBLP:conf/vmcai/KroeningS03}.
The recurrence diameter~\cite{DBLP:conf/dac/BiereCCFZ99,DBLP:conf/vmcai/KroeningS03}, the longest loop-free path, was studied as a more easily-computable upper bound on the diameter. In our setting, this measure cannot be reduced by the abstraction, which only adds transitions.
There are also works that encode the completeness threshold assumption as another verification condition~\cite[see][\S IV.D]{DBLP:journals/tcad/DSilvaKW08}.
Another line of work computes diameter bounds by a composition of diameter bounds of subsystems formed by separating dependencies between variables in the system's actions~\cite{DBLP:journals/jar/AbdulazizNG18,DBLP:conf/cav/BaumgartnerKA02,DBLP:conf/ijcai/RintanenG13}. Existing works have considered guarded-update actions, in which variables are either modified to a constant value or remain unchanged; this is not directly applicable to the actions that arise in our abstract transition systems, where monotonization in effect ``havocs'' variables. Havocked variables are different because in a transition, they \emph{can} change, but not \emph{necessarily};
weaker notions of dependence to capture this may be interesting in future work.
We are not aware of a previous application of the $\dnfsize{\tr}$ diameter bound (\Cref{lem:diam-dnf}).
This bound is never worsened by monotonization, as $\dnfsize{\monox{\tr}{\ldots}} \leq \dnfsize{\tr}$~\cite[][and a corollary of~\Cref{lem:bshouty-mon-mindnf}]{DBLP:journals/iandc/Bshouty95}, and can be exponentially smaller, as e.g.\ in~\Cref{sec:overview-diameter-bound}.
The diameter bounds by~\citet{DBLP:conf/popl/KonnovLVW17,DBLP:conf/concur/KonnovVW14} share with our work the motivation of analyzing the diameter of abstractions of the original system. They rely on the special structure of counter abstractions of fault-tolerant distributed systems to apply movers~\cite{DBLP:journals/cacm/Lipton75} and acceleration.

\para{Complexity of invariant inference algorithms}
Houdini~\cite{DBLP:conf/fm/FlanaganL01} infers conjunctive invariants in a linear number of SAT calls, and Dual Houdini likewise for disjunctive invariants~\cite{DBLP:conf/cade/LahiriQ09}.
\citet{DBLP:journals/pacmpl/FeldmanSSW21} analyze the complexity of an interpolation-based inference algorithm based on the fence condition, which we compare with our results in~\Cref{sec:dual-itp-compare}.
The work by~\citet{DBLP:conf/mbmv/SeufertS17} includes a complexity analysis of all executions of PDR on the case of two synchronized $n$-bit counters, where PDR requires an exponential number of SAT calls (this also follows from the fact that the only CNF invariant is exponentially-large) but an enhanced time-dual version of it converges in one frame.
Convergence in one frame is also proved for maximal transitions systems with monotone invariants~\cite{DBLP:journals/pacmpl/FeldmanISS20}.
In~\Cref{sec:slow-convergence}
we go beyond this with an analysis of standard PDR on a simple example where convergence requires multiple frames.
Our analysis of $\Lambda$-PDR centered on the number of frames, not the complexity of constructing them, which is an interesting direction for future work.
Although, in the spirit of~\Cref{sec:abstract-diameter-all}, we can bound $\dnfsize{\Frame_i} \leq \dnfsize{\abs{\tr}} = \dnfsize{\monox{\tr}{\bkcube}}$ (when $\bkcube=\bkwrch{k}$ is a cube), the original $\Lambda$-algorithm's complexity analysis~\cite{DBLP:journals/iandc/Bshouty95} for computing $\monox{\varphi}{\bkcube}$ depends on $\dnfsize{\varphi}$, not $\dnfsize{\monox{\varphi}{\bkcube}}$, which in our setting is the difference between the concrete and the significantly reduced abstract diameter.

\para{The monotone theory in invariant inference}
The monotone theory~\cite{DBLP:journals/iandc/Bshouty95} has been used for other purposes in invariant inference.
\citet{DBLP:journals/mscs/JungKDWY15} use Bshouty's CDNF learning algorithm to infer predicate abstraction invariants, employing over- and under-approximations to resolve membership queries, sometimes relying on random guesses.
\citet{DBLP:journals/pacmpl/FeldmanSSW21} use Bshouty's $\Lambda$-algorithm for provably-efficient inference of an invariant whose $s$-reachable (cf.~\Cref{lem:itp-fence-condition}) and belongs to $\mspan{B}$ when $B$ is known a-priori.
\citet{DBLP:conf/cav/ChenCFTTW10} use the CDNF algorithm for automatic generation of contextual assumptions in assume-guarantee.%

\iflong\else
\pagebreak
\fi
\section{Conclusion}
\label{sec:conclusion}
This work has distilled a previously unknown principle of property-directed reachability.
Through $\Lambda$-PDR and its analysis based on the monotone theory from exact learning, we have shown that PDR overapproximates an abstract interpretation process in a new logical abstract domain.
We have further shown how this abstraction achieves a significantly more effective forward reachability exploration than approaches that use exact post-image computations or bounded unrollings, and how this can partially be explained through the difference between diameter bounds between the original system and its abstraction.

In future work, it will be interesting to understand the mechanisms by which PDR deviates from naive backward reachability, avoiding the pitfall in the other direction, of overapproximating too much.
We hope that this will eventually lead to efficient complexity results for PDR itself. 
\begin{changebar}
It will also be interesting to study variants of PDR that target infinite-state using richer logics beyond propositional logic. Our observation that there is inherent abstraction in PDR due to states it \emph{cannot} exclude from a frame may also be relevant in such settings. This could also involve extensions of the monotone theory to other logics, which to our knowledge have not been attempted.
\end{changebar}

\begin{acks}
\iflong
\else{\small 
\fi
We thank our shepherd and the anonymous reviewers for comments which improved the paper.
We thank Mohammad Abdulaziz, Aman Goel, Alexander Ivrii, Noam Parzanchevski, Hila Peleg, and Noam Rinetzky for insightful discussions and comments.
The research leading to these results has received funding from the
European Research Council under the European Union's Horizon 2020 research and
innovation programme (grant agreement No [759102-SVIS]).
This research was partially supported by the United States-Israel Binational Science Foundation (BSF) grant No.\ 2016260, and the Israeli Science Foundation (ISF) grant No.\ 1810/18.
\iflong
\else
}
\fi
\end{acks} 

\bibliography{refs}

\end{document}